\theoremstyle{definition}
\newtheorem{definition}{Definition}[section]
\newtheorem{theorem}{Theorem}[section]
\newtheorem{lemma}{Lemma}[section]
\newtheorem{corollary}{Corollary}[section]
\newtheorem{remark}{Remark}[section]
\newtheorem{proposition}{Proposition}[section] 
\newtheorem{assumption}{Assumption}[section]
\newcommand{\D}{\mathrm{d}} 
\title{{\Large\bf Trade execution games in a Markovian environment}}
\author{Masamitsu OHNISHI\S\dag}
\author{Makoto SHIMOSHIMIZU\P}
\affil{\S Faculty of Informatics, Yamato University\thanks{Address: Suita City, Osaka, 564--0082, Japan \\
E-mail address: onishi.masamitsu@yamato-u.ac.jp}}
\affil{\hspace{-10mm} \dag Center for Mathematical Modeling and Data Science, Osaka University\thanks{Address: Toyonaka City, Osaka, 560--8531, Japan \\
E-mail address: ohnishi@econ.osaka-u.ac.jp}}
\affil{\P Department of Industrial and Systems Engineering, Tokyo University of Science\thanks{Address: 2641 Yamazaki, Noda City, Chiba, 278--8510, Japan \\
E-mail address: shimosshi1q94@gmail.com}}
\date{\today}
\begin{document}

\maketitle

\begin{abstract}
This paper examines a trade execution game for two large traders in a generalized price impact model. We incorporate a stochastic and sequentially dependent factor that exogenously affects the market price into financial markets. Our model accounts for how strategic and environmental uncertainties affect the large traders' execution strategies. We formulate an expected utility maximization problem for two large traders as a Markov game model. Applying the backward induction method of dynamic programming, we provide an explicit closed-form execution strategy at a Markov perfect equilibrium. Our theoretical results reveal that the execution strategy generally lies in a dynamic and non-randomized class; it becomes deterministic if the Markovian environment is also deterministic. In addition, our simulation-based numerical experiments suggest that the execution strategy captures various features observed in financial markets.\\

\noindent \textbf{Keywords:} Price impact; Markovian environment; Trade execution game; Dynamic programming; Backward induction; Markov perfect equilibrium

\noindent \textbf{JEL Classification:} C73; G11; G12
\end{abstract}

\section{Introduction} \label{Section_1}

The last two decades have witnessed considerable interest in trade execution problems for a (single) large trader among academic researchers and practitioners. A \textit{large trader} refers to an institutional trader who trades (or \textit{execute}s) large amounts of orders in a short time window and moves the asset prices in an unpreferable direction. Typical examples of large traders are insurance companies or pension funds. The cost incurred by large traders is referred to as \textit{price impact} (or \textit{market impact}). Large traders then divide their orders into small pieces and gradually submit relatively small orders to mitigate the price impact. On the contrary, submitting small orders leads to the exposition (or risk) of price fluctuation. Therefore, large traders have to manage the following two facets when considering \textit{execution strategy}: price impact and future fluctuation of a financial asset price. We call the issue concerned with (trade) execution as \textit{execution problem}. We refer to Gu\'{e}ant~\cite{GO16}, Cartea et~al.~\cite{CAJSPJ15}, and Shimoshimizu~\cite{SM24} for more details on execution problems.

In addition, exogenous factors with \textit{mean-reverting property}, such as \textit{order-flow of small traders} and \textit{order book imbalance} (OBI), influence an execution strategy through market price. The order flow of small traders refers to (random) aggregate orders posed by \textit{small traders}. Empirical and theoretical studies demonstrate that the order flow of small traders incurs a price impact (Potters and Bouchaud~\cite{PMBJP03}, Cartea and Jaimungal~\cite{CAJS16MFE}, Cartea and Jaimungal~\cite{CAJS16SIAM}) and influences an (optimal) execution strategy (Cartea and Jaimungal~\cite{CAJS16MFE}, Ohnishi and Shimoshimizu~\cite{OMSM20QF}, Fukasawa et~al.~\cite{FMOMSM21IJTAF}, Ohnishi and Shimoshimizu~\cite{OMSM22APFM}). The OBI refers to the difference between the volume at the best buy price and the one at the best sell price divided by its sum, and has a mean-reversion property (Lehalle and Neuman~\cite{LCANE19}). Existing studies show that OBI causes price fluctuation (e.g., Cont et~al.~\cite{CRKASS14}, Stoikov~\cite{SS18}) and thus influences an (optimal) execution strategy (Lehalle and Neuman~\cite{LCANE19}). Therefore, incorporating exogenous factors with mean-reverting property is essential when one analyzes execution problems.

Although a growing body of literature has investigated execution problems for a single large trader, execution problems for multiple large traders are worth examining from practitioners' standpoints. Multiple large traders interact with each other, and the interaction affects the market price in a real market. Consider, for example, the following situation: an investor orders two or more institutional brokerages to buy (sell, respectively) a large amount of one financial asset in a short period. The institutional brokerages must compete to buy (sell) the financial asset at as low (high) price as possible. The competition among multiple institutional brokerages, in turn, may cause country-wide or worldwide destabilization. Typical examples of the destabilization are the ``Flash Clash,'' caused by the rapid execution of the E-Mini S\&P 500 on May 6, 2010, and a ``hot potato game,'' a quick buy and sell by high-frequency traders (HFTs), as mentioned in Kirilenko et~al.~\cite{KAKASSMTT17} and Schied and Zhang~\cite{SAZT19}. These examples highlight the importance of examining in-depth execution problems for multiple large traders.

With the above background in mind, we analyze an execution problem for two risk-averse large traders (as considered in Schied and Zhang~\cite{SAZT19}, Luo and Schied~\cite{LXSA20}, Ohnishi and Shimoshimizu~\cite{OMSM20QF}). In particular, our model sheds light on the effect of what we define as the \textit{Markovian environment} on an execution strategy at an \textit{equilibrium} for two large traders. We define the Markovian environment as an exogenous factor with mean-reverting property and is described by an AR (1)-type random sequences. The main purpose of this paper is to take a new look at how \textit{strategic uncertainty} (characterized by the existence of another large trader) and \textit{environmental uncertainty} (described by the Markovian environment) affect the execution strategy of large traders. Theorem~\ref{Theorem_3.1}, our main theorem, shows that there exists a \textit{Markov perfect equilibrium} at which the execution strategy is an \textit{affine function} of the counterpart's remaining execution volume, the residual effect of past price impact, and the Markovian environment. This result suggests that strategic and environmental uncertainties play an indispensable role in determining execution strategy.

In addition, simulation-based numerical experiments reveal intriguing results. For example, when two large traders exist in a financial market, large traders execute faster than when only one large trader exists. The execution speed, however, seems indifferent when the environmental uncertainty (driven by the Markovian environment) is large. Moreover, the execution strategy at the equilibrium becomes \textit{asymmetric} among large traders with \textit{opposite initial inventories} by the condition of the Markovian environment. This execution strategy remarkably captures the situation that the total traded volume in a ﬁnancial market seemingly forms a U-shaped trading curve.

This paper contributes to the literature examining execution problems in the following ways. Our model incorporates two (i.e., multiple) large traders that allow us to investigate strategic uncertainty. We also consider environmental uncertainty described by exogenous factors with mean-reverting properties. Our theoretical results show that accounting for these two types of uncertainty is indispensable for rational execution strategies. Simulation-based results also validate the importance of simultaneously incorporating strategic and environmental uncertainties.

The organization of this paper is as follows. Section~\ref{Section_2} summarizes related literature. In Section~\ref{Section_3}, we describe a market model and obtain an explicit execution strategy. Section~\ref{Section_4} is devoted to simulation-based numerical experiments. Finally, Section~\ref{Section_5} concludes. We show the proof for the main theorem (Theorem~\ref{Theorem_3.1}) in the Appendix.

\section{Related literature} \label{Section_2}

This section gives a general panorama of related literature to clarify the contribution of this paper. In Section~\ref{Subsection_2.1}, we review how studies on price impacts have developed and some representative types of exogenous factors that affect a large trader's execution strategy. Section~\ref{Subsection_2.2} then overviews execution problems for multiple large traders. In Section~\ref{Subsection_2.3}, a concise review of market microstructural notions follows to illustrate the role of the Markovian environment.

\subsection{Execution problem with transient impact and exogenous factors} \label{Subsection_2.1}

There is considerable literature on execution problems for a single large trader. Bertsimas and Lo~\cite{BSLAW98} is the first paper that investigates an optimal execution strategy in a discrete-time framework and shows that the optimal strategy becomes a basket of equally-divided trading volumes. Subsequently, Almgren and Chriss~\cite{ARCN01} constructs a model that integrates a large trader's execution cost and the large trader's risk-averse property. Both studies incorporate only \textit{temporary} and \textit{permanent} price impacts into their models. The temporary price impact refers to the short-lived effect on the price of an asset resulting from a specific transaction or event, such as a large buy or sell order. It typically disappears once the market adjusts to the new information or activity. The permanent price impact, on the other hand, denotes a lasting change in the price of an asset caused by, for example, fundamental shifts in supply or demand dynamics or changes in market structure.

Succeeding studies, however, demonstrate that a price impact has a \textit{transient property}: a price impact dissipates over the trading window (e.g., Bouchaud et~al.~\cite{BJPGYPMWM04}, Gatheral~\cite{GJ10}). We call such a price impact a \textit{transient price impact}. Obizhaeva and Wang~\cite{OAAWJ13} analyzes an execution problem with transient price impact and derive an optimal execution strategy for a single large trader. Following this seminal paper, there has been an increasing number of studies that focus on execution problems with transient price impact (e.g., Kuno and Ohnishi~\cite{KSOM15}, Kuno et~al.~\cite{KSOMSP17}, Schied and Zhang~\cite{SAZT19}, Luo and Schied~\cite{LXSA20}, Ohnishi and Shimoshimizu~\cite{OMSM20QF, OMSM22APFM}, Fukasawa et~al.~\cite{FMOMSM21IJTAF, FMOMSM21RIMS}, Cordoni and Lillo~\cite{CFLF22, CFLF23}). Our model also incorporates a transient price impact as a key factor in determining large traders' execution strategies.

Existing studies show that exogenous random factors affect a large trader's execution strategy. For example, a random order ﬂow of small traders influences an optimal execution strategy (e.g., Bechler and Ludkovski~\cite{BKLM15}, Cartea and Jaimungal~\cite{CAJS16MFE, CAJS16SIAM}, Ohnishi and Shimoshimizu~\cite{OMSM20QF}, Fukasawa et~al.~\cite{FMOMSM21IJTAF, FMOMSM21RIMS}, Ohnishi and Shimoshimizu~\cite{OMSM22APFM}). We can summarize other examples of exogenous factors as follows: trader's view of the market (Cartea et~al.~\cite{CAJSKD16}), effect of other assets that large traders do not trade (Cartea~et~al.~\cite{CAGLJS19}, Ohnishi and Shimoshimizu~\cite{OMSM22APFM}), difference between the weighted returns of two assets in the same industry or with similar characteristics (Avellaneda and Lee~\cite{AMLJH10}).

\subsection{Execution game problem} \label{Subsection_2.2}

Since multiple large traders interact with each other and affect a financial asset price in the real world, game-theoretical analysis helps us investigate an execution problem thoroughly. Schied and Zhang~\cite{SAZT19} and Luo and Schied~\cite{LXSA20}, motivated by Sch{\"{o}}neborn~\cite{ST08}, investigate a \textit{market impact game} (or what we call the \textit{trade execution game}) with transient price impact for one financial asset. These studies explicitly derive an execution strategy at a \textit{Nash equilibrium} for a cost minimization problem as well as a utility maximization problem. They also examine under what conditions the execution strategy \textit{does not oscillate} (i.e., not repeatedly buy/sell a financial asset). These results offer compelling evidence for how financial markets can be stable. Cordoni and Lillo~\cite{CFLF22} subsequently extend their analysis and examine multiple financial assets. Furthermore, Cordoni and Lillo~\cite{CFLF23} derive a transient price impact from an obtained execution strategy in a similar market model of Schied and Zhang~\cite{SAZT19}.

Other existing studies show that an execution strategy at an equilibrium can be adaptive for random effects on the future price. Ohnishi and Shimoshimizu~\cite{OMSM19RIMS, OMSM20QF}, for example, address an execution problem for multiple large traders under a transient price impact model and derive an execution strategy at a \textit{Markov perfect equilibrium} in a dynamic and non-randomized class. Other studies, for example, Huang et~al.~\cite{HXJSNM19} and Casgrain and Jaimungal~\cite{CPJS20}, analyze an execution problem with temporary and permanent price impacts via a mean-field game approach. Mean-field game theory enables us to effectively analyze strategic decisions regarding trade execution by \textit{small} interacting traders in a very large population of traders.

We consider a more general execution problem for two large traders under the following assumption: (i) a transient price impact exists, and (ii) a sequentially (or auto-)correlated exogenous factor affects a financial asset price. The results show that the execution strategy at a Markov perfect equilibrium is dynamic and non-randomized. Incorporating the features of a transient price impact and an exogenous factor perceived by large traders can also capture a more realistic situation, as illustrated by our simulation-based numerical experiments.

\subsection{Microstructural effect on market price} \label{Subsection_2.3}

Some studies illuminate the effect of microstructural factors on the market price of a financial asset. In particular, how price dynamics of a financial asset are determined through the order book has attracted widespread interest among academic researchers and practitioners. We may recognize the mid-price as the price of a financial asset without any price impact caused by (large) order submissions. The mid-price is defined as the mean of best-bid and best-ask: $M \coloneqq (P^{b} + P^{a})/2$, where $P^{b}$ and $P^{a}$ are the best-bid and best-ask, respectively. Another feature that may be of interest to practitioners is the weighted mid-price defined as $W \coloneqq w P^{a} + (1 - w) P^{b}$. The weight $w$ is the \textit{order book imbalance} defined by the total volume at the best bid $Q^{b}$ and the total volume at the best ask $Q^{a}$ as $w \coloneqq Q^{b}/(Q^{b} + Q^{a})$.

Both types of ``mid-price'' make sense in terms of being easily obtained from market data, although existing studies have shown their shortcomings. For example, the mid-price does not depend on the volume at the best bid and best ask. Also, the weighted mid-price updates following every imbalance change (Gatheral and Oomen~\cite{GJORC10}, Robert and Rosenbaum~\cite{RCYRM12}). Stoikov~\cite{SS18} then defines the notion of \textit{micro-price}. The micro-price $P^{micro}$ incorporates the effect of mid-price $M$, order book imbalance $I$, and the bid-ask spread $S \coloneqq P^{b} - P^{a}$ into the components of the underlying price: in mathematical form, we can write the dependence as
\begin{align}
P^{micro} \coloneqq M + g (I, S), \label{Eq:1}
\end{align}
using a function $g$, which can be empirically estimated. Motivated by this spirit, we will formulate what we call the \textit{fundamental price} such that the Markovian environment affects the fundamental price and market price.

Theoretical studies have analyzed a trade execution strategy that exploits microstructural factors (e.g., G\^{a}rleanu and Pedersen~\cite{GNPLH13, GNPLH16}, Cartea et~al.~\cite{CADRJS18}, Neuman and Vo{\ss}~\cite{NEVM22}, Forde et~al.~\cite{FMSBLSB22}, Fouque et~al.~\cite{FJPJSSYF22}). Among them, Cartea and Jaimungal~\cite{CAJS16MFE} and Lehalle and Neuman~\cite{LCANE19} are closely related to our analysis. They investigate the influence of Markovian microstructure signal on a single large trader's optimal execution strategy. Our model differs from these two studies in the following aspects. First, we focus on an interaction between two large traders. To this end, our problem is formulated as a Markov game model. Second, our formulation of the Markovian environment enables us to examine how multiple (two) large traders tackle strategic and environmental uncertainty, interact with each other, and execute orders.

\section{Execution game model} \label{Section_3}

\subsection{Market} \label{Subsection_3.1}

In a discrete-time framework $t \in \{1, \ldots, T, T + 1\}$ (where $T \in \mathbb{Z}_{++} \coloneqq \{1, 2, \ldots\}$), we assume that two risk-averse large traders, denoted by $i \in \{1, 2\}$, must purchase $\mathfrak{Q}^{i}$ $(\in \mathbb{R})$ volume of one financial asset by the time $T + 1$. In the sequel, $q^{i}_{t}$ $(\in \mathbb{R})$ stands for the large amount of orders submitted by the large trader $i \in \{1, 2\}$ at time $t \in \{1, \ldots, T\}$.
\begin{remark}[] \label{Remark_3.1}
The following setup allows us to analyze large traders' buying and selling problems.
For each large trader $i \in \{1, 2\}$, the positive $q^{i}_{t}$ for $t \in \{1, \ldots, T\}$ stands for the acquisition and negative $q^{i}_{t}$ the liquidation of the financial asset.
\end{remark}
\noindent We define $\overline{Q}^{i}_{t}$ as the number of shares remained to purchase (hereafter \textit{remained execution volume}) by large trader $i \in \{1, 2\}$ at time $t \in \{1, \ldots, T, T + 1\}$. So $\{\overline{Q}^{i}_{t}\}_{t \in \{1, \ldots, T\}}$ satisfies
\begin{align}
\overline{Q}_{t + 1}^{i} = \overline{Q}^{i}_{t} - q^{i}_{t}, \label{heldquantity}
\end{align}
with the initial and terminal conditions for each large trader $i \in \{1, 2\}$:$\overline{Q}_{1}^{i} = \mathfrak{Q}^{i} \in \mathbb{R}$; $\overline{Q}_{T + 1}^{i} = 0$. In the rest of this paper, the buy-trade and sell-trade of a large trader are supposed to induce the same instantaneous linear price impact. This assumption is justified by some empirical studies (e.g., Cartea and Jaimungal~\cite{CAJS16MFE, CAJS16SIAM}).

The market price (or quoted price) of the financial asset at time $t \in \{1, \dots, T, T + 1\}$ is given by $P_{t}$. Then, the execution price of the asset becomes $\widehat{P}_{t}$ since the large traders submit a large number of orders, influencing the financial asset price at which they execute the transaction. In the rest of this paper, we assume that submitting one unit of (large) order at time $t \in \{1, \ldots, T\}$ causes the instantaneous price impact denoted as $\lambda_{t} \in \mathbb{R}_{++} \coloneqq (0, \infty)$.

We here define the \textit{residual effect of past price impacts} caused by both large traders at time $t \in \{1, \ldots, T\}$. The residual effect of past price impacts represents the discounted sum of past temporary price impacts and is a key determinant of the market price. Existing theoretical and empirical studies highlight the significance of the transient nature of price impacts (e.g., Bouchaud et~al.~\cite{BJPGYPMWM04}, Gatheral~\cite{GJ10}, Obizhaeva and Wang~\cite{OAAWJ13}). We formulate the residual effect of past price impacts by the following exponential function $G \colon \mathbb{R} \rightarrow \mathbb{R}_{++}$:
\begin{align}
G (t) \coloneqq \mathrm{e}^{- \rho t}, \label{decaykernel}
\end{align}
where $\rho \in \mathbb{R}_{+} \coloneqq [0, \infty)$ stands for the deterministic resilience speed.\footnote{We proceed with the following analysis without assuming the time-dependency of the resilience speed to simplify the notation.} The dynamics of the residual effect of past price impacts, represented by $\{ R_{t} \}_{t \in \{1, \ldots, T\}}$, are defined as follows:
\begin{align}
R_{t + 1} &\coloneqq \sum_{k = 1}^{t} \alpha_{k} \lambda_{k} (q_{k}^{1} + q_{k}^{2}) \mathrm{e}^{- \rho \left( (t + 1) - k \right)} \nonumber\\
&= \mathrm{e}^{-\rho} \sum_{k = 1}^{t - 1} \alpha_{k} \lambda_{k} (q_{k}^{1} + q_{k}^{2}) \mathrm{e}^{- \rho (t - k)} + a_{t} \lambda_{t} \left( q^{1}_{t} + q^{2}_{t} \right) \mathrm{e}^{- \rho} \nonumber\\
&= \mathrm{e}^{- \rho} \left[ R_{t} + \alpha_{t} \lambda_{t} (q^{1}_{t} + q^{2}_{t}) \right], \quad t = 1, \dots, T, \label{resilience}
\end{align}
where $\alpha_{t} \in [0, 1]$ represents the linear price impact coefficient representing the temporary price impact. We assume that $R_{0} = 0$, implying that there is no residual effect at the beginning of the trading window.

In our model, the \textit{fundamental price} of the financial asset plays an essential role when defining the dynamics of the market price of the financial asset. We first define the dynamics of the fundamental price at time $t \in \{1, \ldots, T\}$, denoted by $P^{f}_{t}$, and then obtain the dynamics of the market price. Since the residual effect of the past price impacts dissipates over the trading window, we define $P_{t} - R_{t}$ as the fundamental price of the financial asset, that is,
\begin{align}
P^{f}_{t} \coloneqq P_{t} - R_{t}.
\end{align}
\begin{remark}[] \label{Remark_3.2}
In the field of market microstructure, the model of the fundamental price in this paper is different from seminal papers that are inspired by Kyle~\cite{KAS85}.
\end{remark}

We assume that there are three factors in the financial market that affect the fundamental price (and thus the market price). The first factor is the public news/information about the economic situation between $t$ and $t + 1$, defined as a sequence of independent random variables $\{\epsilon_{t}\}_{t \in \{1, \ldots, T\}}$. Each $\epsilon_{t}$ for $t \in \{1, \ldots, T\}$ follows a normal distribution with mean $\mu^{\epsilon}_{t} \in \mathbb{R}$ and variance $(\sigma^{\epsilon}_{t})^{2} \in \mathbb{R}_{++}$:
\begin{align}
\epsilon_{t} \sim N \big( \mu^{\epsilon}_{t}, (\sigma^{\epsilon}_{t})^{2} \big), \quad t = 1, \dots, T.
\end{align}
In the sequel of this paper, we assume that $\mu^{\epsilon}_{t} = 0$ for all $t \in \{1, \ldots, T\}$.

We assume that the linear \textit{permanent price impact}, represented by 
\begin{align}
\beta_{t} \lambda_{t} (q^{1}_{t} + q^{2}_{t}), 
\end{align}
where $\beta_{t} \in [0, 1]$ is the second factor affecting the fundamental price (and market price). This assumption stems from the following idea: the permanent price impact still affects the price at the next execution time (by definition) as \textit{it affects the fundamentals}.

The third factor that affects the fundamental price is what we call the \textit{Markovian environment}. The Markovian environment, denoted by $\{\mathcal{I}_{t}\}_{t \in \{1, \ldots, T\}}$, describes an exogenous factor that influences the fundamental price (and market price). We assume that the probabilistic law of Markovian environment has a \textit{Markovian dependence} as follows:
\begin{align}
\begin{split}
\mathcal{I}_{0} &= 0; \\
\mathcal{I}_{t + 1} \vert_{\mathcal{I}_{t}} &\sim N \big( a_{t + 1}^{\mathcal{I}} - b_{t + 1}^{\mathcal{I}} \mathcal{I}_{t}, (\sigma_{t + 1}^{\mathcal{I}})^{2} \big), \quad t = 0, \ldots, T - 1.
\end{split}
\label{STAVwithMD1}
\end{align}
Note that $\{a^{\mathcal{I}}_{t}\}_{t \in \{1, \ldots, T\}}$, $\{b^{\mathcal{I}}_{t}\}_{t \in \{1, \ldots, T\}}$, and $\{\sigma^{\mathcal{I}}_{t}\}_{t \in \{ 1, \ldots, T\}}$ are all deterministic functions of time. In general, we can assume that the system equation for the Markovian environment is defined as $\mathcal{I}_{t + 1} \coloneqq k_{t} (\mathcal{I}_{t}, \omega_{t})$ through a (Borel-measurable) function $k_{t + 1} \colon \mathbb{R} \times \mathbb{R} \rightarrow \mathbb{R}$. Here $\{ \omega_{t} \}_{t \in \{1, \ldots, T\}}$ is an i.i.d.~sequence, where $\omega_{t} \sim N (0, 1)$ for all $t \in \{1, \ldots, T\}$. We focus on an important case given by
\begin{align}
\begin{split}
\mathcal{I}_{0} &= 0; \\
\mathcal{I}_{t + 1} &= (a_{t + 1}^{\mathcal{I}} - b_{t + 1}^{\mathcal{I}} \mathcal{I}_{t}) + \sigma_{t + 1}^{\mathcal{I}} \omega_{t + 1}, \quad t = 0, \ldots, T - 1.
\end{split}
\label{STAVwithMD2}
\end{align}
\begin{remark}[Implication of Markovian environment] \label{Remark_3.3}
The interpretation of a Markovian environment is various and needs to be carefully mentioned. For example, we can consider the Markovian environment as the price impact caused by \textit{aggregate orders of small traders} (or \textit{other traders' order-flow}). Theoretical studies investigate the effect of other traders' order-flow that has a Markovian dependence and show that the order-flow directly affects the optimal execution strategy for a single large trader (e.g., Cartea and Jaimungal~\cite{CAJS16MFE}, Fukasawa et~al.~\cite{FMOMSM21IJTAF}, Ohnishi and Shimoshimizu~\cite{OMSM22APFM}). Another way to interpret the Markovian environment is through the \textit{order book imbalance} (\textit{OBI}). Existing studies (e.g., Cont et~al.~\cite{CRKASS14}, Stoikov~\cite{SS18}) highlight the importance of incorporating OBI into the formulation of market price dynamics. Cartea and Jaimungal~\cite{CAJS16MFE}, Cartea et~al.~\cite{CADRJS18} and Lehalle and Neuman~\cite{LCANE19} investigate an optimal execution strategy focusing on OBI (or the ``microstructural signal'') and show that the microstructural signal influences the optimal execution strategy. Adding to the two examples, we can interpret the Markovian environment in several ways as follows: trader's \textit{view of the market} (Cartea et~al.~\cite{CAJSKD16}); \textit{effect of other assets} that large traders do \textit{not trade} (Cartea et~al.~\cite{CAGLJS19}); \textit{difference between the weighted returns of two assets} in the same industry or with similar characteristics (Avellaneda and Lee~\cite{AMLJH10}).
\end{remark}
\begin{remark}[] \label{Remark_3.4}
We omit the detailed classification of the dynamics of the Markovian environment, $\{\mathcal{I}_{t}\}_{t \in \{1, \ldots, T\}}$, in this paper. For the details, see Fukasawa et~al.~\cite{FMOMSM21IJTAF}. The paper classifies the dynamics of Eq.~\eqref{STAVwithMD2} in terms of various conditions for $\{a^{\mathcal{I}}_{t}\}_{t \in \{1, \ldots, T\}}$ and $\{b^{\mathcal{I}}_{t}\}_{t \in \{1, \ldots, T\}}$.
\end{remark}
Here we make the following assumptions for the correlation between two stochastic processes, $\{ \mathcal{I}_{t} \}_{t \in \{1, \ldots, T\}}$ and $\{ \epsilon_{t} \}_{t \in \{1, \ldots, T\}}$.
\begin{assumption}[] \label{Assumption_3.1}
For each time $t \in \{1, \ldots, T\}$, $\{ \mathcal{I}_{t} \}_{t \in \{1, \ldots, T\}}$ and $\{ \epsilon_{t} \}_{t \in \{1, \ldots, T\}}$ are defined on a filtered probability space $(\Omega, \mathcal{F}, \{ \mathcal{F}_{t} \}_{t \in \{1, \ldots, T\}}, \mathbb{P})$ where $\mathcal{F}_{t} \coloneqq \sigma \left( (\omega_{s}, \epsilon_{s})_{s \in \{ 1, \ldots, t \}} \right)$. Then, for all $t \in \{1, \ldots, T\}$, $\mathcal{I}_{t}$ and $\epsilon_{t}$ are \textit{correlated} with correlation coefficient $\rho^{\mathcal{I}, \epsilon} \in (-1, 1)$ \textit{given} $\mathcal{I}_{t - 1}$. So we have
\begin{align}
\begin{pmatrix}
\mathcal{I}_{t + 1} \\
\epsilon_{t + 1}
\end{pmatrix}
\Big\vert_{\mathcal{I}_{t}}
\sim
N \left(
\begin{pmatrix}
a_{t + 1}^{\mathcal{I}} - b_{t + 1}^{\mathcal{I}} \mathcal{I}_{t} \\
\mu_{t + 1}^{\epsilon}
\end{pmatrix},
\begin{pmatrix}
(\sigma_{t + 1}^{\mathcal{I}})^{2} & \rho^{\mathcal{I}, \epsilon} \sigma_{t + 1}^{\mathcal{I}} \sigma_{t + 1}^{\epsilon} \\
\rho^{\mathcal{I}, \epsilon} \sigma_{t + 1}^{\mathcal{I}} \sigma^{\epsilon}_{t + 1} & (\sigma_{t + 1}^{\epsilon})^{2}
\end{pmatrix}
\right).
\end{align}
\end{assumption}

From the assumption that the public news/information, the permanent price impact, and the Markovian environment affect the fundamental price, we define the dynamics of the fundamental price $P^{f}_{t}$ $(= P_{t} - R_{t})$ as follows:
\begin{align}
P^{f}_{t + 1}
&= P_{t + 1} - R_{t + 1} \nonumber\\
&\coloneqq P^{f}_{t} + \beta_{t} \lambda_{t} (q^{1}_{t} + q^{2}_{t}) + \mathcal{I}_{t} + \epsilon_{t} \nonumber\\
&= P_{t} - R_{t} + \beta_{t} \lambda_{t} (q^{1}_{t} + q^{2}_{t}) + \mathcal{I}_{t} + \epsilon_{t}, \quad t = 1, \ldots, T. \label{fundprice}
\end{align}
\begin{remark}[] \label{Remark_3.5}
The dynamics of the fundamental price imply that the permanent price impact and the Markovian environment may give a non-zero trend to the fundamental price. For a more detailed discussion, see Ohnishi and Shimoshimizu~\cite{OMSM20QF}. Also, we can interpret the Markovian environment as an exogenous trend of price dynamics, which supports the assumption that $\mu^{\epsilon}_{t} = 0$ for all $t \in \{1, \ldots, T\}$.
\end{remark}

According to Eq.~\eqref{resilience} and~\eqref{fundprice}, the dynamics of market price are described as
\begin{align}
P_{t + 1}
&= P_{t} + \left( R_{t + 1} - R_{t} \right) + \beta_{t} \lambda_{t} (q^{1}_{t} + q^{2}_{t}) + \mathcal{I}_{t} + \epsilon_{t} \nonumber\\
&= P_{t} - (1 - \mathrm{e}^{- \rho}) R_{t} + (\alpha_{t} \mathrm{e}^{- \rho} + \beta_{t}) \lambda_{t} (q^{1}_{t} + q^{2}_{t}) + \mathcal{I}_{t} + \epsilon_{t}, \quad t = 1, \ldots, T.
\end{align}
\begin{remark}[] \label{Remark_3.6}
In this context,
\begin{align}
\beta_{t} \lambda_{t} (q^{1}_{t} + q^{2}_{t}); \quad
\alpha_{t} \lambda_{t} (q^{1}_{t} + q^{2}_{t}); \quad
\mathrm{e}^{-\rho} \alpha_{t} \lambda_{t} (q^{1}_{t} + q^{2}_{t}),
\end{align}
represent the permanent price impact, temporary price impact, and transient price impact at time $t \in \{1, \ldots, T\}$, respectively. Figure~\ref{Figure_1} depicts the relationships among the permanent, temporary, and transient price impacts. Moreover, if $\rho \rightarrow \infty$, the residual effect of past price impacts becomes zero since the condition $R_{1} = 0$ and
\begin{align}
\lim_{\rho \to \infty} R_{t + 1}
= \lim_{\rho \to \infty} \mathrm{e}^{- \rho} [R_{t} + \alpha_{t} \lambda_{t} (q^{1}_{t} + q^{2}_{t})]
= 0, \quad t = 1, \ldots, T,
\end{align}
holds for all $t \in \{1, \ldots, T\}$ from Eq.~\eqref{resilience}. Therefore, the dynamics of the market price becomes
\begin{align}
P_{t + 1} = P_{t} + \beta_{t} \lambda_{t} (q^{1}_{t} + q^{2}_{t}) + \mathcal{I}_{t} + \epsilon_{t}, \quad t = 1, \ldots, T,
\end{align}
implying that we have a permanent price impact model with the effect of a Markovian environment.
\end{remark}
We consider the following assumption in the rest of this paper.
\begin{assumption}[] \label{Assumption_3.2}
For $\alpha_{t} \in [0, 1]$, $\beta_{t} \in [0, 1]$, and $\rho \in [0, \infty)$, the relationship
\begin{align}
\alpha_{t} \mathrm{e}^{- \rho} + \beta_{t} < 1 \label{Ass:3.2}
\end{align}
holds for all $t \in \{1, \ldots, T\}$.
\end{assumption}
The implication for Eq.~\eqref{Ass:3.2} is that the friction of permanent and transient price impact at time $t \in \{1, \ldots, T\}$ is strictly less than the price impact caused by both large traders. This assumption is plausible from the perspective of limit order book dynamics (as shown in Figure~\ref{Figure_1}).
\begin{figure}[tbp]
\centering
\tikzset{every picture/.style={line width=0.75pt}} 
\begin{tikzpicture}[x=0.75pt,y=0.75pt,yscale=-0.9,xscale=0.9]
\draw (99,251) -- (538,251) ;
\draw [shift={(540,251)}, rotate = 180] [color={rgb, 255:red, 0; green, 0; blue, 0 } ][line width=0.75] (10.93,-3.29) .. controls (6.95,-1.4) and (3.31,-0.3) .. (0,0) .. controls (3.31,0.3) and (6.95,1.4) .. (10.93,3.29) ;
\draw [color={rgb, 255:red, 0; green, 0; blue, 0 } ,draw opacity=1 ][line width=1.5] (120.96,237) -- (119.87,121.98) -- (119.04,34) ;
\draw [shift={(119,30)}, rotate = 89.46] [fill={rgb, 255:red, 0; green, 0; blue, 0 } ,fill opacity=1 ][line width=0.08] [draw opacity=0] (11.61,-5.58) -- (0,0) -- (11.61,5.58) -- cycle ;
\draw [shift={(121,241)}, rotate = 269.46] [fill={rgb, 255:red, 0; green, 0; blue, 0 } ,fill opacity=1 ][line width=0.08] [draw opacity=0] (11.61,-5.58) -- (0,0) -- (11.61,5.58) -- cycle ;
\draw [color={rgb, 255:red, 0; green, 0; blue, 0 } ,draw opacity=1 ][line width=1.5] (119,30) -- (531,31) ;
\draw [color={rgb, 255:red, 0; green, 0; blue, 0 } ,draw opacity=1 ][line width=1.5] (531.97,166) -- (531.03,35) ;
\draw [shift={(531,31)}, rotate = 89.59] [fill={rgb, 255:red, 0; green, 0; blue, 0 } ,fill opacity=1 ][line width=0.08] [draw opacity=0] (11.61,-5.58) -- (0,0) -- (11.61,5.58) -- cycle ;
\draw [shift={(532,170)}, rotate = 269.59] [fill={rgb, 255:red, 0; green, 0; blue, 0 } ,fill opacity=1 ][line width=0.08] [draw opacity=0] (11.61,-5.58) -- (0,0) -- (11.61,5.58) -- cycle ;
\draw [color={rgb, 255:red, 0; green, 0; blue, 0 } ,draw opacity=1 ][line width=1.5] (532,236) -- (532,174) ;
\draw [shift={(532,170)}, rotate = 90] [fill={rgb, 255:red, 0; green, 0; blue, 0 } ,fill opacity=1 ][line width=0.08] [draw opacity=0] (11.61,-5.58) -- (0,0) -- (11.61,5.58) -- cycle ;
\draw [shift={(532,240)}, rotate = 270] [fill={rgb, 255:red, 0; green, 0; blue, 0 } ,fill opacity=1 ][line width=0.08] [draw opacity=0] (11.61,-5.58) -- (0,0) -- (11.61,5.58) -- cycle ;
\draw [color={rgb, 255:red, 0; green, 0; blue, 0 } ,draw opacity=1 ][line width=1.5] (120,170) -- (532,170) ;
\draw [color={rgb, 255:red, 0; green, 0; blue, 0 } ,draw opacity=1 ][line width=1.5] (121,241) -- (532,240) ;
\draw [line width=1.5] [dash pattern={on 5.63pt off 4.5pt}] (151,239) -- (166.69,37.99) ; 
\draw [shift={(167,34)}, rotate = 94.46] [fill={rgb, 255:red, 0; green, 0; blue, 0 } ][line width=0.08] [draw opacity=0] (11.61,-5.58) -- (0,0) -- (11.61,5.58) -- cycle ;
\draw [shift={(151,239)}, rotate = 274.46] [color={rgb, 255:red, 0; green, 0; blue, 0 } ][fill={rgb, 255:red, 0; green, 0; blue, 0 } ][line width=1.5] (0, 0) circle [x radius= 4.36, y radius= 4.36] ;
\draw (113,31) .. controls (108.33,31) and (106,33.33) .. (106,38) -- (106,126) .. controls (106,132.67) and (103.67,136) .. (99,136) .. controls (103.67,136) and (106,139.33) .. (106,146)(106,143) -- (106,233) .. controls (106,237.67) and (108.33,240) .. (113,240) ;
\draw (539,168) .. controls (543.67,167.97) and (545.98,165.62) .. (545.95,160.95) -- (545.57,109.45) .. controls (545.52,102.78) and (547.83,99.43) .. (552.5,99.4) .. controls (547.83,99.43) and (545.47,96.12) .. (545.42,89.45)(545.45,92.45) -- (545.05,37.95) .. controls (545.02,33.28) and (542.67,30.97) .. (538,31) ;
\draw (540,240) .. controls (544.67,240) and (547,237.67) .. (547,233) -- (547,215.5) .. controls (547,208.83) and (549.33,205.5) .. (554,205.5) .. controls (549.33,205.5) and (547,202.17) .. (547,195.5)(547,198.5) -- (547,178) .. controls (547,173.33) and (544.67,171) .. (540,171) ;
\draw [line width=1.5] (168,29) .. controls (168.99,68.6) and (251.33,121.92) .. (328.66,122.01) ;
\draw [shift={(331,122)}, rotate = 179.27] [color={rgb, 255:red, 0; green, 0; blue, 0 } ][line width=1.5] (14.21,-4.28) .. controls (9.04,-1.82) and (4.3,-0.39) .. (0,0) .. controls (4.3,0.39) and (9.04,1.82) .. (14.21,4.28) ;
\draw [shift={(168,29)}, rotate = 88.57] [color={rgb, 255:red, 0; green, 0; blue, 0 } ][fill={rgb, 255:red, 0; green, 0; blue, 0 } ][line width=1.5] (0, 0) circle [x radius= 4.36, y radius= 4.36] ;
\draw [line width=1.5] (331,237) -- (331,126) ; 
\draw [shift={(331,122)}, rotate = 90] [fill={rgb, 255:red, 0; green, 0; blue, 0 } ][line width=0.08] [draw opacity=0] (11.61,-5.58) -- (0,0) -- (11.61,5.58) -- cycle ;
\draw [shift={(331,241)}, rotate = 270] [fill={rgb, 255:red, 0; green, 0; blue, 0 } ][line width=0.08] [draw opacity=0] (11.61,-5.58) -- (0,0) -- (11.61,5.58) -- cycle ;
\draw (330.56,79) -- (330.56,88.18) -- (334,90.22) -- (327.12,94.3) -- (334,98.38) -- (327.12,102.46) -- (334,106.54) -- (327.12,110.61) -- (334,114.69) -- (327.12,118.77) -- (330.56,120.81) -- (330.56,129.99) ;
\draw [line width=1.5] (330.56,88.18) -- (330.56,81.18) ; 
\draw [shift={(330.56,77.18)}, rotate = 90] [fill={rgb, 255:red, 0; green, 0; blue, 0 } ][line width=0.08] [draw opacity=0] (11.61,-5.58) -- (0,0) -- (11.61,5.58) -- cycle ;
\draw [line width=1.5] [dash pattern={on 5.63pt off 4.5pt}] (330.7,71) -- (334.52,33.97) ;
\draw [shift={(335,30)}, rotate = 96] [fill={rgb, 255:red, 0; green, 0; blue, 0 } ][line width=0.08] [draw opacity=0] (11.61,-5.58) -- (0,0) -- (11.61,5.58) -- cycle ;
\draw [shift={(330.6,72)}, rotate = 270] [color={rgb, 255:red, 0; green, 0; blue, 0 } ][fill={rgb, 255:red, 0; green, 0; blue, 0 } ][line width=1.5] (0, 0) circle [x radius= 4.36, y radius= 4.36] ;
\draw [line width=1.5] [dash pattern={on 5.63pt off 4.5pt}] (331,122) .. controls (385.6,123.95) and (448.76,126.85) .. (486.17,125.14) ;
\draw [shift={(489,125)}, rotate = 180] [color={rgb, 255:red, 0; green, 0; blue, 0 } ][line width=1.5] (14.21,-4.28) .. controls (9.04,-1.82) and (4.3,-0.39) .. (0,0) .. controls (4.3,0.39) and (9.04,1.82) .. (14.21,4.28) ;
\draw (325,127) .. controls (320.33,127.11) and (318.06,129.5) .. (318.17,134.17) -- (318.24,136.72) .. controls (318.4,143.39) and (316.15,146.78) .. (311.48,146.89) .. controls (316.15,146.78) and (318.56,150.05) .. (318.72,156.71)(318.65,153.71) -- (318.83,161.17) .. controls (318.94,165.84) and (321.33,168.11) .. (326,168) ;
\draw (337,120) .. controls (341.67,120) and (344,117.67) .. (344,113) -- (344,111.18) .. controls (344,104.51) and (346.33,101.18) .. (351,101.18) .. controls (346.33,101.18) and (344,97.85) .. (344,91.18)(344,94.18) -- (344,88) .. controls (344,83.33) and (341.67,81) .. (337,81) ;
\draw [color={rgb, 255:red, 0; green, 0; blue, 0 } ,draw opacity=1 ][line width=1.5] (331,169) -- (331,126) ; 
\draw [shift={(331,122)}, rotate = 90] [fill={rgb, 255:red, 0; green, 0; blue, 0 } ,fill opacity=1 ][line width=0.08] [draw opacity=0] (11.61,-5.58) -- (0,0) -- (11.61,5.58) -- cycle ;
\draw [shift={(331,173)}, rotate = 270] [fill={rgb, 255:red, 0; green, 0; blue, 0 } ,fill opacity=1 ][line width=0.08] [draw opacity=0] (11.61,-5.58) -- (0,0) -- (11.61,5.58) -- cycle ;
\draw (337,236) .. controls (341.67,235.96) and (343.98,233.61) .. (343.94,228.94) -- (343.76,210.06) .. controls (343.7,203.39) and (346,200.04) .. (350.67,199.99) .. controls (346,200.04) and (343.64,196.73) .. (343.58,190.06)(343.61,193.06) -- (343.06,131.94) .. controls (343.01,127.27) and (340.66,124.96) .. (335.99,125) ;
\draw (360,209) -- (423,209) ; 
\draw (542,253.4) node [anchor=north west][inner sep=0.75pt] [font=\small] [align=left] {time};
\draw (147,253.4) node [anchor=north west][inner sep=0.75pt] [font=\small] {$t$};
\draw (317.5,254.4) node [anchor=north west][inner sep=0.75pt] [font=\small] {$t + 1$};
\draw (353,188.4) node [anchor=north west][inner sep=0.75pt] [font=\footnotesize] {$(\alpha_{t} \mathrm{e}^{-\rho } +\beta_{t}) \lambda_{t} (q^{1}_{t} +q^{2}_{t})$};
\draw (176,136.4) node [anchor=north west][inner sep=0.75pt] [font=\footnotesize] {(iv) $\alpha_{t} \mathrm{e}^{-\rho} \lambda_{t} (q^{1}_{t} +q^{2}_{t})$};
\draw (557,193.4) node [anchor=north west][inner sep=0.75pt] [font=\footnotesize] {$\text{(ii) } \beta_{t} \lambda_{t} (q^{1}_{t} +q^{2}_{t})$};
\draw (556,90.4) node [anchor=north west][inner sep=0.75pt] [font=\footnotesize] {$\text{(iii) } \alpha_{t} \lambda_{t} (q^{1}_{t} +q^{2}_{t})$};
\draw (353,93.62) node [anchor=north west][inner sep=0.75pt] [font=\footnotesize] {$\text{(v) } \mathcal{I}_{t} +\epsilon _{t}$};
\draw (160,222) node [anchor=north west][inner sep=0.75pt] [font=\footnotesize] {$P_{t}$};
\draw (143,34) node [anchor=north west][inner sep=0.75pt] [font=\footnotesize] {$\widehat{P}_{t}$};
\draw (295,54.4) node [anchor=north west][inner sep=0.5pt] [font=\footnotesize] {$P_{t + 1}$};
\draw (5.0,125.4) node [anchor=north west][inner sep=0.75pt] [font=\footnotesize] {$\text{(i) } \lambda _{t} (q^{1}_{t} +q^{2}_{t})$};
\draw (375,213.4) node [anchor=north west][inner sep=0.75pt] [font=\footnotesize] {$< 1$ (\textbf{Assumption~\ref{Assumption_3.2}})};
\draw (81,12) node [anchor=north west][inner sep=0.75pt] [align=left, font=\small] {price};
\end{tikzpicture}
\caption{Graphical image of price dynamics when $\alpha_{t} + \beta_{t} = 1$ and Assumption \ref{Assumption_3.2} holds.}
\label{Figure_1}
\end{figure}
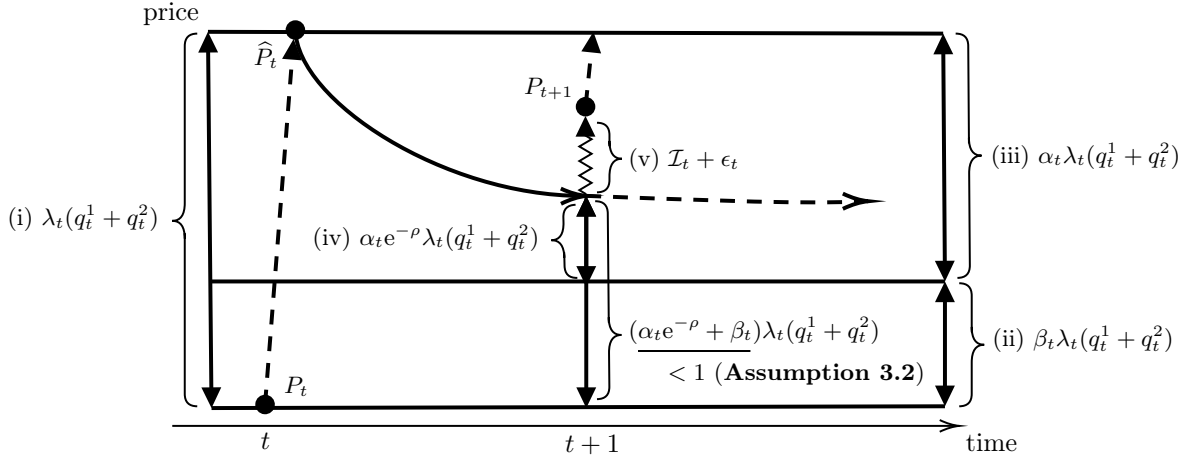

From the definition of the execution price, the wealth process for each large trader $i \in \{ 1, 2 \}$, denoted by $\{W^{i}_{t}\}_{t \in \{1, \ldots, T\}}$, evolves as follows:
\begin{align}
W_{t + 1}^{i}
= W^{i}_{t} - \widehat{P}_{t} q^{i}_{t}
= W^{i}_{t} - \left\{ P_{t} + \lambda_{t} \left( q^{1}_{t} + q^{2}_{t} \right) \right\} q^{i}_{t}, \quad t = 1, \ldots, T. \label{wealthprocess}
\end{align}

\subsection{Formulation as a Markov game} \label{Subsection_3.2}

In the above discrete-time window, we first define the state of the decision process at time $t \in \{1, \ldots, T, T + 1\}$ as $7$-tuple and denote it as
\begin{align}
\bm{s}_{t} = \big( W^{1}_{t}, W^{2}_{t}, P_{t}, \overline{Q}^{1}_{t}, \overline{Q}^{2}_{t}, R_{t}, \mathcal{I}_{t - 1} \big)
\in \mathbb{R} \times \mathbb{R} \times \mathbb{R} \times \mathbb{R} \times \mathbb{R} \times \mathbb{R} \times \mathbb{R} \eqqcolon \mathbb{S}.
\end{align}
For $t \in \{1, \ldots, T\}$, an allowable action chosen at state $\bm{s}_{t}$ is an execution volume $q^{i}_{t} \in \mathbb{R} \eqqcolon \mathbb{A}^{i}$ so that the set $\mathbb{A}^{i}$ of admissible actions is independent of the current state $\bm{s}_{t} \in \mathbb{S}$.

When an action $q^{i}_{t}$ is chosen in a state $\bm{s}_{t}$ at time $t \in \{1, \ldots, T\}$, a transition to a next state
\begin{align}
\bm{s}_{t + 1} = \big( W_{t + 1}^{1}, W_{t + 1}^{2}, P_{t + 1}, \overline{Q}_{t + 1}^{1}, \overline{Q}_{t + 1}^{2}, R_{t + 1}, \mathcal{I}_{t} \big) \in \mathbb{S}
\end{align}
occurs according to the law of motion which we have precisely described in Section~\ref{Subsection_3.1}. The transition is symbolically described by a Borel measurable system dynamics function $\bm{h}_{t}$ ($\colon \mathbb{S} \times \mathbb{A}^{i} \times \mathbb{A}^{j} \times (\mathbb{R} \times \mathbb{R}) \longrightarrow \mathbb{S}$) as
\begin{align}
\bm{s}_{t + 1} = \bm{h}_{t} \big( \bm{s}_{t}, q^{1}_{t}, q^{2}_{t}, (\omega_{t}, \epsilon_{t}) \big), \quad t = 1, \ldots, T.
\end{align}

We assume that each large trader $i \in \{ 1, 2 \}$ has a Constant Absolute Risk Aversion (CARA)-type Von Neumann-Morgenstern (vN-M) (or negative exponential) utility function with the absolute risk aversion parameter $\gamma^{i} > 0$. A utility payoff (or reward) arises only from a terminal state $\bm{s}_{T + 1} \in \mathbb{S}$ at the end of the trading window as
\begin{align}
g_{T + 1}^{i} (\bm{s}_{T + 1}) \coloneqq
\begin{cases}
- \exp \big\{ - \gamma^{i} W_{T + 1}^{i} \big\} & \text{if} \ \overline{Q}_{T + 1}^{i} = 0; \\
- \infty & \text{if} \ \overline{Q}_{T + 1}^{i} \neq 0. \label{utilitypayoff}
\end{cases}
\end{align}
The term $- \infty$ indicates a hard constraint enforcing each large trader to execute all of the remaining volumes at time $T$, that is, $q^{i}_{T} = \overline{Q}^{i}_{T}$ for $i \in \{1, 2\}$. The types of large traders could be defined by
\begin{align}
(W_{1}^{i}, \mathfrak{Q}^{i}, \gamma^{i}), \quad i = 1, 2,
\label{ck}
\end{align}
and are assumed to be their \textit{common knowledge}.
\begin{remark} \label{Remark_3.7}
In the real financial markets, large traders may have little access to the counterpart's information expressed by \eqref{ck}. The purpose of our analysis is to investigate how two large traders execute their orders under the existence of strategic and environmental uncertainties. Thus, we formulate the Markov game as a dynamic game of complete information. The above (hypothesized) definition and assumption expressed in \eqref{ck} are essential so that the solution concept of a Nash equilibrium in a non-cooperative game is (rationally or ideally) applicable in this model. To formulate a more general model as a dynamic game of incomplete information requires further intricate analysis and is left for our future research.
\end{remark}

If we define a history-independent one-stage decision rule $f^{i}_{t}$ for large trader $i \in \{1, 2\}$ at time $t \in \{1, \ldots, T\}$ by a Borel measurable map from a state $\bm{s}_{t} \in \mathbb{S} = \mathbb{R}^{7}$ to an action:
\begin{align}
q^{i}_{t} = f^{i}_{t} (\bm{s}_{t}) \in \mathbb{A}^{i},
\end{align}
then a Markov execution strategy for large trader $i$, $\pi^{i}$, is defined as a sequence of one-stage decision rules:
\begin{align}
\pi^{i} \coloneqq (f^{i}_{1}, \ldots, f^{i}_{t}, \ldots, f^{i}_{T}). \label{pi}
\end{align}
We denote the set of all Markov execution strategies as $\Pi^{i}_{\mathrm{M}}$. Further, for $t \in \{1, \ldots, T\}$, we define the sub-execution strategy after time $t$ of a Markov execution strategy $\pi^{i} \in \Pi^{i}_{\mathrm{M}}$ as
\begin{align}
\pi^{i}_{t} \coloneqq (f^{i}_{t}, \ldots, f^{i}_{T}), \label{pit}
\end{align}
and the entire set of $\pi^{i}_{t}$ as $\Pi^{i}_{\mathrm{M}, t}$.

By definition \eqref{utilitypayoff}, the value function under a pair of Markov execution strategies $(\pi^{1}, \pi^{2}) \in \Pi^{1}_{M} \times \Pi^{2}_{M}$ becomes an expected utility payoff arising from the terminal wealth $W_{T + 1}^{i}$ of each large trader $i \in \{1, 2\}$:
\begin{align}
V^{i}_{1} (\pi^{1}, \pi^{2}) \big[ \bm{s}_{1} \big]
&\coloneqq \mathbb{E}_{1}^{(\pi^{1}, \pi^{2})} \Big[ g_{T + 1}^{i} (\bm{s}_{T + 1}) \Big\vert \bm{s}_{1} \Big] \nonumber\\
&= \mathbb{E}_{1}^{\pi} \Big[ - \exp \big\{ -\gamma^{i} W_{T + 1}^{i} \big\} \cdot \mathds{1}_{\{\overline{Q}_{T + 1}^{i} = 0\}} + (- \infty) \cdot \mathds{1}_{\{\overline{Q}_{T + 1}^{i} \neq 0\}} \Big\vert \bm{s}_{1} \Big]. \label{V1pi}
\end{align}
Then, for $t \in \{1, \ldots, T, T + 1\}$ and $\bm{s}_{t} \in \mathbb{S}$, we further let
\begin{align}
V^{i}_{t} (\pi^{1}_{t}, \pi^{2}_{t}) \big[ \bm{s}_{t} \big]
&\coloneqq \mathbb{E}^{(\pi^{1}_{t}, \pi^{2}_{t})}_{t} \Big[ g_{T + 1}^{i} (\bm{s}_{T + 1}) \Big\vert \bm{s}_{t} \Big] \nonumber\\
&= \mathbb{E}^{(\pi^{1}_{t}, \pi^{2}_{t})}_{t} \Big[ - \exp \big\{ - \gamma^{i} W_{T + 1}^{i} \big\} \cdot \mathds{1}_{\{ \overline{Q}_{T + 1}^{i} = 0 \}} + (- \infty) \cdot \mathds{1}_{\{\overline{Q}_{T + 1}^{i} \neq 0\}} \Big\vert \bm{s}_{t} \Big], \label{Vtpi}
\end{align}
be the expected utility payoff at time $t$ under the strategy $(\pi^{1}_{t}, \pi^{2}_{t}) \in \Pi^{1}_{M} \times \Pi^{2}_{M}$. Note that each expression of the conditional expectations, $\mathbb{E}^{(\pi^{1}, \pi^{2})}_{1}$ in Eq.~\eqref{V1pi} and $\mathbb{E}^{(\pi^{1}_{t}, \pi^{2}_{t})}_{t}$ in Eq.~\eqref{Vtpi}, implies the dependence of the probability laws on the strategy profiles, $(\pi^{1}, \pi^{2})$ and $(\pi^{1}_{t}, \pi^{2}_{t})$, respectively. Also, $\mathds{1}_{A}$ stands for the indicator function of a measurable set (or an event) $A \in \mathcal{F}$.

We here seek an execution strategy for two large traders at a Markov perfect equilibrium. First, the definition of a Nash equilibrium in this model becomes as follows.
\begin{definition}[Nash equilibrium] \label{Definition_3.1}
$(\pi^{1\ast}, \pi^{2\ast}) \in \Pi^{1}_{\mathrm{M}} \times \Pi^{2}_{\mathrm{M}}$ is a \textit{Nash equilibrium} starting from a fixed initial state $\bm{s}_{1} \in \mathbb{S}$ if and only if
\begin{align}
V^{1}_{1} (\pi^{1\ast}, \pi^{2\ast}) \big[ \bm{s}_{1} \big] &\geq V^{1}_{1} (\pi^{1}, \pi^{2\ast}) \big[ \bm{s}_{1} \big], \quad \forall \pi^{1} \in \Pi^{1}_{\mathrm{M}}; \\
V^{2}_{1} (\pi^{1\ast}, \pi^{2\ast}) \big[ \bm{s}_{1} \big] &\geq V^{2}_{1} (\pi^{1\ast}, \pi^{2}) \big[ \bm{s}_{1} \big], \quad \forall \pi^{2} \in \Pi^{2}_{\mathrm{M}}.
\end{align}
\end{definition}

We can define a refinement of the Nash equilibrium of this model as the notion of a Markov perfect equilibrium:
\begin{definition}[Markov perfect equilibrium] \label{Definition_3.2}
$(\pi^{1\ast}, \pi^{2\ast}) \in \Pi^{1}_{\mathrm{M}} \times \Pi^{2}_{\mathrm{M}}$ is a \textit{Markov perfect equilibrium} if and only if
\begin{align}
V^{1}_{t} (\pi^{1\ast}_{t}, \pi^{2\ast}_{t}) \big[ \bm{s}_{t} \big] &\geq V^{1}_{t} (\pi^{1}_{t}, \pi^{2\ast}_{t}) \big[ \bm{s}_{t} \big], \quad \forall \pi^{1}_{t} \in \Pi^{1}_{\mathrm{M}, t}, \quad \forall \bm{s}_{t} \in \mathbb{S}, \quad \forall t = 1, \ldots, T;\\
V^{2}_{t} (\pi^{1\ast}_{t}, \pi^{2\ast}_{t}) \big[ \bm{s}_{t} \big] &\geq V^{2}_{t} (\pi^{1\ast}_{t}, \pi^{2}_{t}) \big[ \bm{s}_{t} \big], \quad \forall \pi^{2}_{t} \in \Pi^{2}_{\mathrm{M}, t}, \quad \forall \bm{s}_{t} \in \mathbb{S}, \quad \forall t = 1, \ldots, T.
\end{align}
\end{definition}

Based on the following \textit{one stage (or step, shot) deviation principle}, we obtain an execution strategy at a Markov perfect equilibrium by the backward induction method of dynamic programming.
\begin{align}
V^{1}_{t} (\pi^{1\ast}_{t}, \pi^{2\ast}_{t}) \big[ \bm{s}_{t} \big]
&= \sup_{q^{1}_{t} \in \mathbb{A}^{i}} \mathbb{E} 
\Big[ V^{1}_{t + 1} (\pi^{1\ast}_{t + 1}, \pi^{2\ast}_{t + 1}) \big[ \bm{h}_{t} ( \bm{s}_{t}, (q^{1}_{t}, f^{2\ast}_{t} (\bm{s}_{t}) ), (\omega_{t}, \epsilon_{t}) ) \big] \Big\vert \bm{s}_{t} \Big] \nonumber \\
&= \mathbb{E} 
\Big[ V^{1}_{t + 1} (\pi^{1\ast}_{t + 1}, \pi^{2\ast}_{t + 1}) \big[ \bm{h}_{t} (\bm{s}_{t}, (f^{1\ast}_{t} (\bm{s}_{t}), f^{2\ast}_{t} (\bm{s}_{t})), (\omega_{t}, \epsilon_{t})) \big] \Big\vert \bm{s}_{t} \Big]; \\
V^{2}_{t}(\pi^{1\ast}_{t}, \pi^{2\ast}_{t}) \big[ \bm{s}_{t} \big]
&= \sup_{q^{2}_{t} \in \mathbb{A}^{i}} \mathbb{E} 
\Big[ V^{2}_{t + 1} (\pi^{1\ast}_{t + 1}, \pi^{2\ast}_{t + 1}) \big[ \bm{h}_{t} (\bm{s}_{t}, (f^{1\ast}_{t} (\bm{s}_{t}), q^{2}_{t}), (\omega_{t}, \epsilon_{t})) \big] \Big\vert \bm{s}_{t} \Big] \nonumber \\
&= \mathbb{E} 
\Big[ V^{2}_{t + 1} (\pi^{1\ast}_{t + 1}, \pi^{2\ast}_{t + 1}) \big[ \bm{h}_{t} (\bm{s}_{t}, (f^{1\ast}_{t} (\bm{s}_{t}), f^{2\ast}_{t} (\bm{s}_{t})), (\omega_{t}, \epsilon_{t})) \big] \Big\vert \bm{s}_{t} \Big].
\end{align}

\subsection{Execution strategy at a Markov perfect equilibrium} \label{Subsection 3.3}

\begin{theorem}[Execution strategy at a Markov perfect equilibrium] \label{Theorem_3.1}
\mbox{}
There exists a Markov perfect equilibrium at which the following properties hold for each large trader $i \in \{1, 2\}$:
\begin{enumerate}
\item The execution volume at the Markov perfect equilibrium for large trader $i \in \{1, 2\}$ at time $t \in \{1, \ldots, T\}$, denoted as $q^{i\ast}_{t}$, becomes an \textit{affine function} of the Markovian environment \textit{at time $t - 1$}, the remaining execution volume of each large trader$i, j \in \{1, 2\}$ ($i \neq j$), and the residual effect of past price impacts:
\begin{align}
q^{i\ast}_{t}
&= f_{t} (W^{i}_{t}, W^{j}_{t}, P_{t}, \overline{Q}^{i}_{t}, \overline{Q}^{j}_{t}, R_{t}, \mathcal{I}_{t - 1}) \nonumber\\
&= a^{i}_{t} + b^{i}_{t} \overline{Q}^{i}_{t} + c^{i}_{t} \overline{Q}^{j}_{t} + d^{i}_{t} R_{t} + e^{i}_{t} \mathcal{I}_{t - 1}, \quad t = 1, \ldots, T,
\end{align}
where $a^{i}_{t}, b^{i}_{t}, c^{i}_{t}, d^{i}_{t}, e^{i}_{t}$ for $t \in \{1, \ldots, T\}$ are all deterministic functions of time $t$ which are dependent on the problem parameters and can be computed backwardly in time $t$.
\item The value function $V^{i}_{t} (\pi^{1}_{t}, \pi^{2}_{t}) \big[ \bm{s}_{t} \big]$ at time $t \in \{1, \ldots, T\}$ for each large trader $i \in \{1, 2\}$ is represented as a functional form as
\begin{align}
&V^{i}_{t} (\pi^{1}_{t}, \pi^{2}_{t}) \big[ W^{1}_{t}, W^{2}_{t}, P_{t}, \overline{Q}^{1}_{t}, \overline{Q}^{2}_{t}, R_{t}, \mathcal{I}_{t - 1} \big] \nonumber\\
&= - \exp \Big\{ - \gamma^{i} \Big[ W^{i}_{t} - P_{t} \overline{Q}^{i}_{t} + G^{1i}_{t} (\overline{Q}^{i}_{t})^{2} + G^{2i}_{t} \overline{Q}^{i}_{t} + H^{1i}_{t} \overline{Q}^{i}_{t} R_{t} \nonumber\\
&\quad + H^{2i}_{t} R^{2}_{t} + H^{3i}_{t} R_{t} + J^{1i}_{t} \overline{Q}^{i}_{t} \overline{Q}^{j}_{t} + J^{2i}_{t} \overline{Q}^{j}_{t} R_{t} + J^{3i}_{t} (\overline{Q}^{j}_{t})^{2} +
J^{4i}_{t} \overline{Q}^{j}_{t} \nonumber\\
&\quad + L^{1i}_{t} \overline{Q}^{i}_{t} \mathcal{I}_{t - 1} + L^{2i}_{t} R_{t} \mathcal{I}_{t - 1} + L^{3i}_{t} \overline{Q}^{j}_{t} \mathcal{I}_{t - 1}
+ L^{4i}_{t} \mathcal{I}_{t - 1}^{2} + L^{5i}_{t} \mathcal{I}_{t - 1} + Z^{i}_{t} \Big] \Big\},
\end{align}
where $G^{1i}_{t}, G^{2i}_{t}, H^{1i}_{t}, H^{2i}_{t}, H^{3i}_{t}, J^{1i}_{t}, J^{2i}_{t}, J^{3i}_{t}, J^{4i}_{t}, L^{1i}_{t}, L^{2i}_{t}, L^{3i}_{t}, L^{4i}_{t}, L^{5i}_{t}, Z^{i}_{t}$ for $t \in \{1, \ldots, T\}$ are deterministic functions of time $t$ which are dependent on the problem parameters and can be computed backwardly in time $t$.
\end{enumerate}
\end{theorem}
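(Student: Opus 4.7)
The plan is to prove Theorem~\ref{Theorem_3.1} by backward induction on the dynamic programming recursion, exploiting the CARA utility and the affine-Gaussian structure of the state dynamics to preserve an ansatz that $V^{i}_{t}$ is minus the exponential of a quadratic polynomial in $(W^{i}_{t}, P_{t}, \overline{Q}^{i}_{t}, \overline{Q}^{j}_{t}, R_{t}, \mathcal{I}_{t - 1})$ with deterministic coefficients. The induction will be seeded at $t = T + 1$ by the prescribed payoff $g^{i}_{T + 1}$ and extended one step back via the one-shot deviation principle.

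For the base step at $t = T$, the hard constraint $\overline{Q}^{i}_{T + 1} = 0$ forces $q^{i}_{T} = \overline{Q}^{i}_{T}$, so $W^{i}_{T + 1}$ becomes an explicit affine function of the state via Eq.~\eqref{wealthprocess}, and substitution into $-\exp\{-\gamma^{i} W^{i}_{T + 1}\}$ gives the quadratic-exponential ansatz at time $T$ with known initial coefficients. For the inductive step, I would assume the ansatz at time $t + 1$ and compute the conditional expectation with respect to $(\omega_{t}, \epsilon_{t})$ using the moment generating function of the bivariate normal in Assumption~\ref{Assumption_3.1}. Since the next-period state variables $W^{i}_{t + 1}, P_{t + 1}, R_{t + 1}, \overline{Q}^{k}_{t + 1}, \mathcal{I}_{t}$ are all affine in $(\omega_{t}, \epsilon_{t})$ and in $(q^{1}_{t}, q^{2}_{t})$, the expectation integrates to a closed-form exponential of a quadratic in $(q^{1}_{t}, q^{2}_{t})$ with coefficients that depend on $\bm{s}_{t}$ in a quadratic-affine manner.

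At the Markov perfect equilibrium, each trader $i$ maximizes the resulting expression in $q^{i}_{t}$ while holding $q^{j}_{t} = f^{j\ast}_{t}(\bm{s}_{t})$. Since the argument of the exponential is concave quadratic in $q^{i}_{t}$ under Assumption~\ref{Assumption_3.2}, the first-order conditions for $i = 1, 2$ give a $2 \times 2$ linear system in $(q^{1}_{t}, q^{2}_{t})$ whose right-hand side is affine in $(\overline{Q}^{1}_{t}, \overline{Q}^{2}_{t}, R_{t}, \mathcal{I}_{t - 1})$. Solving this system yields the affine equilibrium rule $q^{i\ast}_{t} = a^{i}_{t} + b^{i}_{t} \overline{Q}^{i}_{t} + c^{i}_{t} \overline{Q}^{j}_{t} + d^{i}_{t} R_{t} + e^{i}_{t} \mathcal{I}_{t - 1}$ with coefficients that are explicit rational functions of the one-step-ahead coefficients $G^{1i}_{t + 1}, \ldots, L^{5i}_{t + 1}$. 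Re-inserting this rule into the value function and collecting terms by monomial in $(W^{i}_{t}, P_{t}, \overline{Q}^{i}_{t}, \overline{Q}^{j}_{t}, R_{t}, \mathcal{I}_{t - 1})$ gives backward recursions for every coefficient $G^{ki}_{t}, H^{ki}_{t}, J^{ki}_{t}, L^{ki}_{t}, Z^{i}_{t}$, completing the induction and simultaneously verifying both claims of the theorem.

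The main obstacle is the second-order condition: the backward construction of the equilibrium strategy via first-order conditions produces a maximum only if the quadratic form in $(q^{1}_{t}, q^{2}_{t})$ inside the exponent is strictly concave at every stage. This requires showing that the running coefficient $G^{1i}_{t}$, together with its cross-partner analogues, stays in the correct sign region under the backward recursion. Assumption~\ref{Assumption_3.2} is precisely what controls the net instantaneous impact coefficient $(1 - \alpha_{t} \mathrm{e}^{-\rho} - \beta_{t}) \lambda_{t}$ that appears in the leading quadratic term, so the crucial inductive check is that this dominance is preserved when the backward map aggregates the risk-aversion penalty, the transient-impact feedback, and the equilibrium best-response effect of the counterparty. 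Once this monotonicity is established, the remaining work is routine algebraic bookkeeping on the coefficient recursions.
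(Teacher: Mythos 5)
Your overall strategy -- backward induction on a quadratic--exponential ansatz, Gaussian integration of the continuation value, first-order conditions giving a $2 \times 2$ linear best-response system, and coefficient recursions obtained by re-inserting the affine rule -- is exactly the route the paper takes, and your identification of the second-order condition ($A^{i}_{t} > 0$, controlled by Assumption~\ref{Assumption_3.2} plus the sign of the running coefficients) as the genuinely nontrivial inductive check is apt; the paper itself only verifies this explicitly at $t = T - 1$.

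There is, however, one concrete gap in your description of the inductive step. You justify the closed-form expectation by saying the next-period state variables are \emph{affine} in $(\omega_{t}, \epsilon_{t})$ and then invoke the moment generating function of the bivariate normal. That argument is valid only at $t = T - 1$, where the continuation value has no $\mathcal{I}^{2}$ term. For every earlier stage the ansatz at $t + 1$ contains $L^{4i}_{t + 1} \mathcal{I}_{t}^{2}$, which is \emph{quadratic} in the innovation $\omega_{t}$, so the exponent you must integrate is a quadratic (not affine) form in the Gaussian pair; the ordinary MGF does not apply. What is needed is the Gaussian integral of $\exp\{a X^{2} + b X + c Y\}$ for a bivariate normal $(X, Y)$ (the paper's Lemma~\ref{Lemma_A.2}), which exists only under the additional integrability condition that the modified precision matrix $\pmb{\Sigma}^{-1} - \operatorname{diag}(2a, 0)$ with $a = -\gamma^{i} L^{4i}_{t + 1}$ remains positive definite. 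This condition must be carried along the backward recursion together with your concavity check; without it the conditional expectation in the Bellman equation may be infinite and the induction breaks down. Once you replace the MGF step with this quadratic-Gaussian lemma and track its positive-definiteness requirement, the rest of your plan matches the paper's proof.
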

\begin{proof}
See Appendix~\ref{Section_A}.
\end{proof}

As Theorem~\ref{Theorem_3.1} shows, the execution volume $q^{i\ast}_{t}$ at the Markov perfect equilibrium for $t \in \{1, \ldots, T\}$ depends on the state $\bm{s}_{t} \in \mathbb{S}$ of the decision process through the Markovian environment at the previous time $\mathcal{I}_{t - 1}$, the remaining execution volume of each large trader $\overline{Q}^{i}_{t}$ for $i \in \{1, 2\}$, and the cumulative residual effect $R_{t}$, and not through the wealth of each large trader $W^{i}_{t}$ for $i \in \{1, 2\}$, or market price $P_{t}$. Furthermore, by the definition of the Markovian environment, the execution volume $q^{i\ast}_{t}$ for $t \in \{1, \ldots, T\}$ includes a nondeterministic term $\mathcal{I}_{t - 1}$, and thus becomes dynamic and non-deterministic. Hereafter, we call the execution strategy at the Markov perfect equilibrium as the \textit{equilibrium execution strategy}.

The following statement immediately follows from Theorem~\ref{Theorem_3.1}.
\begin{corollary}[Deterministic equilibrium execution strategy] \label{Corollary_3.1}
If the Markovian environment is \textit{deterministic} in time, so is the equilibrium execution strategy for each large trader.
\end{corollary}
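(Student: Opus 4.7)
The plan is to derive Corollary~\ref{Corollary_3.1} directly from the structural form of the equilibrium execution strategy given in Theorem~\ref{Theorem_3.1} by induction on the time index $t$. First I would make precise the hypothesis ``Markovian environment is deterministic in time''; by inspection of Eq.~\eqref{STAVwithMD2}, this corresponds to $\sigma^{\mathcal{I}}_{t} = 0$ for every $t \in \{1,\ldots,T\}$, so that the recursion $\mathcal{I}_{t+1} = a^{\mathcal{I}}_{t+1} - b^{\mathcal{I}}_{t+1}\mathcal{I}_{t}$ produces a fully deterministic sequence starting from $\mathcal{I}_{0}=0$.

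Next I would set up the inductive claim: for every $t \in \{1,\ldots,T\}$, the quantities $\overline{Q}^{1}_{t}$, $\overline{Q}^{2}_{t}$, $R_{t}$, and the equilibrium volumes $q^{1\ast}_{t}, q^{2\ast}_{t}$ are all deterministic. The base case is immediate from the model primitives: $\overline{Q}^{i}_{1} = \mathfrak{Q}^{i}$ by the initial condition stated after Eq.~\eqref{heldquantity}, $R_{1}=0$, and $\mathcal{I}_{0}=0$, so the affine representation
\[
q^{i\ast}_{1} = a^{i}_{1} + b^{i}_{1}\overline{Q}^{i}_{1} + c^{i}_{1}\overline{Q}^{j}_{1} + d^{i}_{1} R_{1} + e^{i}_{1}\mathcal{I}_{0}
\]
from Theorem~\ref{Theorem_3.1} is a deterministic function of the (deterministic) problem data.

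For the inductive step I would assume the claim holds at time $t$ and propagate it to time $t+1$ through the system dynamics. Since $\overline{Q}^{i}_{t+1} = \overline{Q}^{i}_{t} - q^{i\ast}_{t}$ by Eq.~\eqref{heldquantity} and $R_{t+1} = \mathrm{e}^{-\rho}[R_{t} + \alpha_{t}\lambda_{t}(q^{1\ast}_{t} + q^{2\ast}_{t})]$ by Eq.~\eqref{resilience}, both are deterministic functions of deterministic quantities. Combining this with the hypothesis that $\mathcal{I}_{t}$ is deterministic and plugging into the affine form of $q^{i\ast}_{t+1}$ from Theorem~\ref{Theorem_3.1} yields a deterministic execution volume, closing the induction. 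Consequently the map $t \mapsto q^{i\ast}_{t}$ is a deterministic function of time for each $i \in \{1,2\}$, which is exactly the conclusion of the corollary.

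There is essentially no analytic obstacle here, since all the heavy lifting is done by Theorem~\ref{Theorem_3.1}: the only care needed is in pinning down ``deterministic Markovian environment'' to the condition $\sigma^{\mathcal{I}}_{t} \equiv 0$ (as opposed to, say, $b^{\mathcal{I}}_{t}\equiv 0$, which would only decouple the recursion), and in observing that the coefficients $a^{i}_{t}, b^{i}_{t}, c^{i}_{t}, d^{i}_{t}, e^{i}_{t}$ are themselves deterministic functions of the model parameters (as asserted in Theorem~\ref{Theorem_3.1}), so that no residual randomness can leak in through them.
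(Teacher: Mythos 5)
Your proof is correct and follows exactly the route the paper intends: the paper offers no separate argument for Corollary~\ref{Corollary_3.1} beyond declaring it an immediate consequence of Theorem~\ref{Theorem_3.1}, and your induction through Eqs.~\eqref{heldquantity} and~\eqref{resilience} together with the affine form $q^{i\ast}_{t} = a^{i}_{t} + b^{i}_{t}\overline{Q}^{i}_{t} + c^{i}_{t}\overline{Q}^{j}_{t} + d^{i}_{t}R_{t} + e^{i}_{t}\mathcal{I}_{t-1}$ is precisely the ``immediate'' reasoning spelled out. Your added care in pinning the hypothesis to $\sigma^{\mathcal{I}}_{t}\equiv 0$ and in noting that the strategy does not depend on the (still random) $P_{t}$ or $W^{i}_{t}$ is a useful clarification but not a departure from the paper's approach.
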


Theorem~\ref{Theorem_3.1} and Corollary~\ref{Corollary_3.1} are one of our contributions to the field of a market impact game. The existing studies, such as Schied and Zhang~\cite{SAZT19}, Luo and Schied~\cite{LXSA20}, Cordoni and Lillo~\cite{CFLF22}, and Cordoni and Lillo~\cite{CFLF23}, reveal that an equilibrium execution strategy is deterministic when minimizing the expected execution cost and considering a mean-variance optimization. As shown in our model, however, the equilibrium execution strategies for risk-averse large traders are usually non-deterministic when they are obtained by backward induction methods of dynamic programming. It is mainly when the Markovian environment is deterministic that the equilibrium execution strategy also becomes deterministic.
\begin{remark}[] \label{Remark_3.8}
If only temporary and permanent price impacts influence the fundamental price, the equilibrium execution volume for each large trader $i \in \{1, 2\}$ at time $t \in \{1, \ldots, T\}$ becomes
\begin{align}
q^{i\ast}_{t} = a^{i}_{t} + b^{i}_{t} \overline{Q}^{i}_{t} + c^{i}_{t} \overline{Q}^{j}_{t} + d^{i}_{t} \mathcal{I}_{t - 1}.
\end{align}
In this case, the Markovian environment still affects the equilibrium execution strategy.
\end{remark}
\begin{remark}[The number of large traders] \label{Remrak_3.9}
We can extend the above model as an $n$ ($> 2$) large traders' execution problem. In that case, however, the difference equation derived in obtaining the equilibrium execution strategy becomes rather complicated. In addition, the extension regarding the number of large traders seems not to reveal any further intriguing results. Thus, we will keep focusing on two large traders' execution problems.
\end{remark}

\section{Simulation-based numerical experiments} \label{Section_4}

Theorem~\ref{Theorem_3.1} characterizes how a Markovian environment affects each large trader's equilibrium execution strategy. The effect of the Markovian environment on the equilibrium execution strategy is identified with environmental uncertainty, whereas the existence of the other large trader indicates strategic uncertainty. An important question is how each large trader facing such environmental and strategic uncertainties behaves in a financial market. To answer this question, we examine comparative statics concerning parameters that determine the Markovian environment through a simulation-based analysis. For the comparative statics with other parameters, see Ohnishi and Shimoshimizu~\cite{OMSM20QF} and Fukasawa et~al.~\cite{FMOMSM21IJTAF}.

This section examines the following three cases: a case where the dynamics of the Markovian environment consist of independent random variables, a case where the Markovian environment follows a random walk, and a case where the Markovian environment follows a general stochastic process. We assume the time homogeneity of the time-dependent parameters for simplicity. The benchmark values for parameters are shown in Table~\ref{Table_1}.
\begin{table}
\centering
\caption{Benchmark values for parameters.}
\begin{tabular}{lccccccccccc}
\toprule
Parameters & $\sigma^{\epsilon}_{t}$ & $\sigma^{\mathcal{I}}_{t}$ & $\rho^{\mathcal{I}, \epsilon}$ & $\alpha_{t}$ & $\beta_{t}$ & $\lambda_{t}$ & $a^{\mathcal{I}}_{t}$ & $b^{\mathcal{I}}_{t}$ & $\rho$ & $\gamma^{i}$ & $T$ \\
\midrule
Benchmark values & 0.02 & 0.01 & 0 & 0.5 & 0.5 & 0.001 & 0 & 0 & 0.1 & 0.001 & 10 \\
\bottomrule
\end{tabular}
\label{Table_1}
\end{table}
\begin{remark}[Implications of parameters] \label{Remark_4.1}
We set $\alpha_{t} = \beta_{t} = 0.5$ as the benchmark values. These parameter values imply that the instantaneous price impact caused by the large traders at time $t \in \{1, \ldots, T\}$ is just half decomposed into temporary and permanent price impacts. The parameter $\sigma^{\mathcal{I}}_{t}$ denotes how high the environmental uncertainty is.
\end{remark}

Our setup for numerical experiments bears a close resemblance to that of Fukasawa et~al.~\cite{FMOMSM21IJTAF} and Ohnishi and Shimoshimizu~\cite{OMSM22APFM}. We run the simulation for $N = 10,000$ sample pathes with generating $T \times N = 100,000$ standard normal random numbers for $\omega_{t}$ for $t \in \{1, \ldots, T\}$ and obtain a realized equilibrium execution volume $q^{i}_{t} (k)$ for $t \in \{1, \ldots, T\}$ and $k \in \{1, \ldots, N\}$.\footnote{We also draw the box-and-whisker plot in the following figures. The bold line in the center of the boxplot shows the median of the sample of the equilibrium execution volume. The top end of the box represents the third quartile, and the bottom end of the box represents the first quartile. The upper and lower whiskers are the largest and smallest data points in the range of ($1\text{st quartile} - 1.5 \times (3\text{rd quartile} - 1\text{st quartile})$) and above $(3\text{rd quartile} + 1.5 \times (3\text{rd quartile} - 1\text{st quartile}))$ and below, respectively.} For each $k \in \{1, \ldots, N\}$, $q^{i}_{t} (k)$ represents the $k$th sample path for $t \in \{1, \ldots, T\}$. In the rest of this paper, the figures illustrate the sample mean of $q^{i\ast}_{t} (k)$ for large trader $i \in \{1, 2\}$:
\begin{align}
\widehat{\mathbb{E} \left[ q^{i\ast}_{t} \right]} \coloneqq \frac{1}{N} \sum_{k = 1}^{N} q^{i}_{t} (k).
\end{align}
Since the Markovian environment conditionally follows a normal distribution, for the set of the sample equilibrium execution volume at each time, the sample mean becomes close to the median.

\subsection{Sequence of independent random variables} \label{Subsection_4.1}

We first illustrate the case when the Markovian environment consists of independent random variables. The setting of benchmark values corresponds to this case because when $a^{\mathcal{I}} = b^{\mathcal{I}} = 0$, $\mathcal{I}_{t}$ satisfies
\begin{align}
\mathcal{I}_{t} = \sigma^{\mathcal{I}} \omega_{t}, \label{eq:sigmaomega}
\end{align}
and $\{ \omega_{t} \}_{t \in \{1, \ldots, T\}}$ is an i.i.d.~random sequence. In this case, we have the following result as an immediate consequence of Theorem~\ref{Theorem_3.1}.
\begin{proposition}[] \label{Proposition_4.1}
If $\{\mathcal{I}_{t}\}_{t \in \{1, \ldots, T\}}$ is a sequence of independent random variables described as \eqref{eq:sigmaomega}, the equilibrium execution volume of each large trader at time $t \in \{1, \ldots, T\}$ becomes
\begin{align}
q^{i\ast}_{t} = a^{i}_{t} + b^{i}_{t} \overline{Q}^{i}_{t} + c^{i}_{t} \overline{Q}^{j}_{t} + d^{i}_{t} R_{t}.
\end{align}
Therefore, the Markovian environment has no direct effect on the equilibrium execution strategy.
\end{proposition}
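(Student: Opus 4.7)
The approach is to show that, under $a^{\mathcal{I}}_t = b^{\mathcal{I}}_t = 0$ for all $t$, the coefficient $e^i_t$ in the equilibrium strategy of Theorem~\ref{Theorem_3.1} vanishes. Rather than tracking the recursion for $e^i_t$ and the auxiliary value-function coefficients $L^{1i}_t, \ldots, L^{5i}_t$ directly, I would establish the stronger structural claim that $V^i_t$ is independent of $\mathcal{I}_{t-1}$ in this regime; since $q^{i\ast}_t$ arises as the argmax of an expression built from $V^i_{t+1}$, this independence forces $e^i_t = 0$ without any additional computation.

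The decisive observation is that when $a^{\mathcal{I}}_t = b^{\mathcal{I}}_t = 0$ the recursion \eqref{STAVwithMD2} collapses to $\mathcal{I}_t = \sigma^{\mathcal{I}}_t \omega_t$, so the pair $(\mathcal{I}_t, \epsilon_t)$ is independent of $\mathcal{I}_{t-1}$ (the conditional bivariate normal of Assumption~\ref{Assumption_3.1} has mean $(0, 0)$ and a covariance that does not depend on $\mathcal{I}_{t-1}$). More generally, the entire future innovation sequence $(\omega_s, \epsilon_s)_{s \geq t}$ is independent of $\mathcal{I}_{t-1}$, so $\mathcal{I}_{t-1}$ carries no predictive information for any future state.

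With this in hand, I would run a backward induction on $t$ under the joint hypothesis (i) $V^i_{t+1}$ does not depend on the $\mathcal{I}_t$ component of $\bm{s}_{t+1}$ and (ii) the equilibrium rules $f^{i\ast}_{t+1}, \ldots, f^{i\ast}_T$ for both players do not depend on the Markovian-environment component of their state. The base case is trivial since $g^i_{T+1}$ in \eqref{utilitypayoff} involves only $W^i_{T+1}$ and $\overline{Q}^i_{T+1}$. For the induction step, I would apply the one-stage deviation principle to $V^i_t$: the transitions of $W^i_{t+1}, W^j_{t+1}, \overline{Q}^i_{t+1}, \overline{Q}^j_{t+1}, R_{t+1}$ are deterministic functions of $\bm{s}_t$ and $(q^i_t, q^{j\ast}_t)$, while $P_{t+1}$ depends on the noise only through $\mathcal{I}_t + \epsilon_t = \sigma^{\mathcal{I}}_t \omega_t + \epsilon_t$, whose distribution is independent of $\mathcal{I}_{t-1}$. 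Combined with (ii), which ensures $q^{j\ast}_t = f^{j\ast}_t(\bm{s}_t)$ does not depend on $\mathcal{I}_{t-1}$, the best-response objective for player $i$ is constant in $\mathcal{I}_{t-1}$, so a maximizer can be chosen independent of $\mathcal{I}_{t-1}$ and $V^i_t$ inherits the same independence. Symmetry handles player $j$.

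The main (though modest) obstacle is the coordination aspect of the induction: the independence of $V^i_t$ from $\mathcal{I}_{t-1}$ and the analogous independence of the two players' decision rules must be propagated together, because each player's best response sees the opponent's state-dependent rule inside the transition. Once this joint hypothesis is correctly set up and one reads off from Section~\ref{Subsection_3.1} that $\mathcal{I}_{t-1}$ enters nowhere in the deterministic parts of the transition, the decoupling is immediate from the independence of $(\omega_t, \epsilon_t)$ from $\mathcal{I}_{t-1}$ under $b^{\mathcal{I}}_t = 0$.
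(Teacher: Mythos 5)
Your argument is correct, but it is not the route the paper takes. The paper treats Proposition~\ref{Proposition_4.1} as an immediate corollary of the explicit backward recursions in Appendix~\ref{Section_A}: there, the coefficient of $\mathcal{I}_{t-1}$ in the best response is $F^{i}_{T-1} = -b^{\mathcal{I}}_{T-1}$ at the last nontrivial stage and, for general $t$, $F^{i}_{t}$ carries an overall factor of $b^{\mathcal{I}}_{t}$ (as do $Y^{4i}_{t}$ and the sources of $L^{1i}_{t},\ldots,L^{5i}_{t}$), so setting $b^{\mathcal{I}}_{t}\equiv 0$ propagates $F^{i\ast}_{t}=e^{i}_{t}=0$ and $L^{ki}_{t}=0$ down the recursion by inspection. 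You instead give a structural argument: under $b^{\mathcal{I}}_{t}=0$ the innovation pair $(\mathcal{I}_{t},\epsilon_{t})$ is independent of $\mathcal{I}_{t-1}$, the deterministic parts of the transition never read $\mathcal{I}_{t-1}$, and a joint backward induction on the value functions and both players' decision rules shows the whole problem decouples from the environment state. This buys generality and transparency --- it does not rely on the CARA/Gaussian closed forms and explains \emph{why} the coefficient vanishes --- at the cost of having to carry the coordination hypothesis (both players' rules independent of $\mathcal{I}_{t-1}$) through the induction, which you correctly identify and which is unproblematic here because each stage game is a strictly concave quadratic with a unique Nash point, so the stage equilibrium inherits the independence rather than merely admitting a version that does. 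Both proofs are sound; the paper's is zero marginal work given the Appendix, yours stands on its own without it.
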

Proposition~\ref{Proposition_4.1} infers that if $\{\mathcal{I}_{t}\}_{t \in \{1, \ldots, T\}}$ is a sequence of independent normal random variables, the equilibrium execution strategy becomes deterministic. This setting enables us to focus on how environmental uncertainty affects the equilibrium execution strategy.

\subsubsection{Symmetric large traders} \label{Subsubsection_4.1.1}

Let us begin with the case of \textit{symmetric large traders}. Specifically, the initial inventory and the risk aversion parameter of each large trader $i, j \in \{1, 2\}$ $(i \neq j)$ are equal; $\mathfrak{Q}^{i} = \mathfrak{Q}^{j} = 100,000$ and $\gamma^{i} = \gamma^{j} = 0.001$. We then examine how the uncertainty arising from the Markovian environment influences the execution strategy of symmetric large traders.
\begin{figure}[tbp]
\centering
\includegraphics[height=5cm, width=7.5cm]{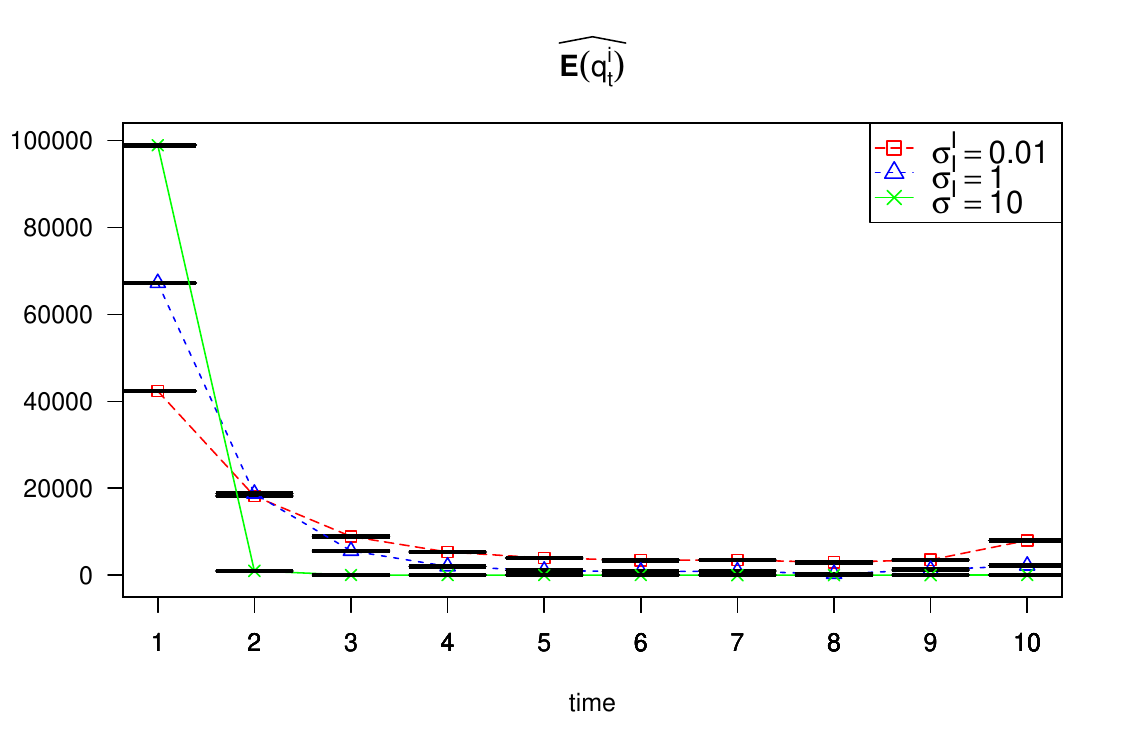}
\caption{Comparative statics for $\sigma^{\mathcal{I}} = 0.01, 1$, and $10$ (with $a^{\mathcal{I}} = b^{\mathcal{I}} = 0$ fixed).}
\label{Figure_2}
\end{figure}

Figure~\ref{Figure_2} illustrates the comparative statics for the equilibrium execution strategy of large traders for different values of $\sigma^{\mathcal{I}}$; $\sigma^{\mathcal{I}} = 0.01, 1$, and $10$. As expected, the larger $\sigma^{\mathcal{I}}$ facilitates faster execution for large traders since risk-averse large traders are inclined to avoid the risk of future price fluctuation. This result is intuitively understandable and is consistent with previous studies.

\subsubsection{Optimal execution and equilibrium execution} \label{Subsubsection_4.1.2}

Of fundamental interest among academic researchers and practitioners is how ``optimal execution'' and ``equilibrium execution'' differ. A growing body of literature has studied an optimal execution strategy for a single large trader. However, a financial market consists primarily of multiple large traders from practitioners' viewpoints. We next examine how a large trader's execution strategy differs depending on the other large trader's existence. Before proceeding to the comparison, we state the following proposition that confirms a relationship between optimal and equilibrium execution strategies. We omit the proof since it is apparent.
\begin{proposition}[] \label{Proposition_4.2}
If $\mathfrak{Q}^{j} = 0$ and $\gamma^{j} \rightarrow \infty$ for $j \in \{1, 2\}$, the equilibrium execution strategy of large trader $i$ ($\neq j$) becomes an optimal execution strategy.
\end{proposition}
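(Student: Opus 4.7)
The plan is to establish that, under the stated limit, large trader $j$'s equilibrium strategy degenerates to the no-trade strategy $q^{j}_{t}\equiv 0$, so that trader $i$'s best-response problem reduces to the single-large-trader execution problem, whose maximizer is by definition the \emph{optimal execution strategy}. The argument thus has two pieces: (i) force no-trading on trader $j$ in the limit; (ii) read off trader $i$'s best response from the backward induction of Theorem~\ref{Theorem_3.1} with $q^{j}\equiv 0$ plugged in.

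First I would verify that the no-trade policy is feasible for trader $j$: since $\overline{Q}^{j}_{1}=\mathfrak{Q}^{j}=0$, the recursion \eqref{heldquantity} gives $\overline{Q}^{j}_{t}\equiv 0$, so the hard terminal constraint in \eqref{utilitypayoff} is satisfied. Moreover, by \eqref{wealthprocess} the wealth stays constant, $W^{j}_{T+1}=W^{j}_{1}$, so the realized utility is the deterministic quantity $-\exp(-\gamma^{j}W^{j}_{1})$ with certainty equivalent $W^{j}_{1}$, uniformly in $\gamma^{j}$.

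Next I would show that any other Markov strategy for $j$ is strictly dominated as $\gamma^{j}\to\infty$. If $\pi^{j}$ is not the no-trade policy, then with positive probability some $q^{j}_{t}\neq 0$, and \eqref{wealthprocess} together with the innovations $\{\epsilon_{t}\}$, $\{\omega_{t}\}$ entering $P_{t}$ through \eqref{fundprice} (and through trader $i$'s state-contingent reaction) ensure that $W^{j}_{T+1}$ acquires a non-degenerate Gaussian component. For CARA utility the certainty equivalent $-(1/\gamma^{j})\log\mathbb{E}[\exp(-\gamma^{j}W^{j}_{T+1})]$ diverges to $-\infty$ as $\gamma^{j}\to\infty$, because the Laplace transform of a nondegenerate Gaussian grows like $\exp((\gamma^{j})^{2}\sigma^{2}/2)$, which dominates the $\gamma^{j}$ in the prefactor. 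Hence no-trading strictly dominates every alternative in the limit, and the one-shot deviation principle forces $q^{j\ast}_{t}\equiv 0$ at any Markov perfect equilibrium of the limiting game.

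Substituting $q^{j}_{t}\equiv 0$ and $\overline{Q}^{j}_{t}\equiv 0$ into \eqref{resilience}, \eqref{fundprice}, and \eqref{wealthprocess} eliminates trader $j$ from every dynamic, and the Bellman recursion underlying Theorem~\ref{Theorem_3.1} collapses to the single-trader dynamic program (see Ohnishi and Shimoshimizu~\cite{OMSM20QF}); trader $i$'s best-response coefficients $a^{i}_{t},b^{i}_{t},d^{i}_{t},e^{i}_{t}$ reduce to those of the single-trader optimal rule and $c^{i}_{t}\overline{Q}^{j}_{t}$ vanishes identically, yielding the claimed optimal execution strategy. The main obstacle is legitimately interchanging the limit $\gamma^{j}\to\infty$ with the backward induction: one must check that trader $j$'s equilibrium \emph{decision rule} (not only its value along the equilibrium path) converges to zero as a function of every realization of $(\overline{Q}^{i}_{t},R_{t},\mathcal{I}_{t-1})$, which amounts to verifying that the $\gamma^{j}$-dependent coefficients $a^{j}_{t},c^{j}_{t},d^{j}_{t},e^{j}_{t}$ from Theorem~\ref{Theorem_3.1} all vanish uniformly on compacts; this can be done by inspecting the explicit backward recursions in Appendix~\ref{Section_A} and tracking the $1/\gamma^{j}$ factors that arise from completing the square in the Gaussian expectations.
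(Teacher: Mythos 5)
The paper offers no proof of Proposition~\ref{Proposition_4.2} at all --- it is dismissed with ``we omit the proof since it is apparent'' --- so there is no official argument to compare against; judged on its own terms, your proposal is essentially sound and supplies the detail the authors skip. Your closing paragraph is the part that actually does the work: in the backward recursion of Appendix~\ref{Section_A}, $A^{j}_{t}$ grows linearly in $\gamma^{j}$ (e.g.\ $A^{j}_{T-1}$ contains the term $\tfrac{1}{2}\gamma^{j}\Sigma^{\mathcal{I},\epsilon}_{T-1}$) while $C^{j}_{t}, D^{j}_{t}, F^{j}_{t}, M^{j}_{t}, N^{j}_{t}$ stay bounded, so $C^{j\ast}_{t}, D^{j\ast}_{t}, F^{j\ast}_{t}, M^{j\ast}_{t} \to 0$ and $N^{j}_{t}/(2A^{j}_{t}) \to 0$, hence $\zeta^{i}_{t} \to 2A^{i}_{t}$ and trader $i$'s fixed-point coefficients collapse to the single-trader best response; combined with $\overline{Q}^{j}_{t} \equiv 0$ (from \eqref{heldquantity} and $\mathfrak{Q}^{j}=0$) this gives the claim. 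Two caveats. First, your middle paragraph (no-trade dominates every alternative) is both imprecise and not quite the right logic: under an affine Markov rule the terminal wealth in \eqref{wealthprocess} is a \emph{quadratic}, not linear, functional of the Gaussian innovations, and, more importantly, showing that any \emph{fixed} nonzero strategy is eventually dominated does not rule out equilibrium trades that shrink like $1/\gamma^{j}$ while retaining a bounded certainty-equivalent penalty --- which is exactly the actual behaviour; only the coefficient asymptotics settle the question, so that paragraph is dispensable. Second, the interchange-of-limits issue you flag is genuine and requires an induction down the recursion, including verifying that $\pmb{\Sigma}^{\ast}$ (equivalently the $\pi^{j}_{t}$ matrices) remain positive definite as $\gamma^{j}$ grows so that Lemma~\ref{Lemma_A.2} still applies; you name this obstacle without closing it, which is a defensible level of rigour given that the paper itself proves nothing here, but it is the one step a complete write-up would have to carry out.
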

\noindent Proposition~\ref{Proposition_4.2} suggests that the equilibrium execution strategy of large trader $i$ $(\neq j)$ is almost the same as an optimal execution strategy when $\mathfrak{Q}^{j} = 0$ and $\gamma^{j}$ is sufficiently large. We next analyze the situation in which the volume that a single large trader unwinds is the same as the total volume that two large traders unwind.
\begin{figure}[tbp]
\centering
\begin{minipage}[b]{0.45\linewidth}
\centering
\includegraphics[height=5cm, width=7.5cm]{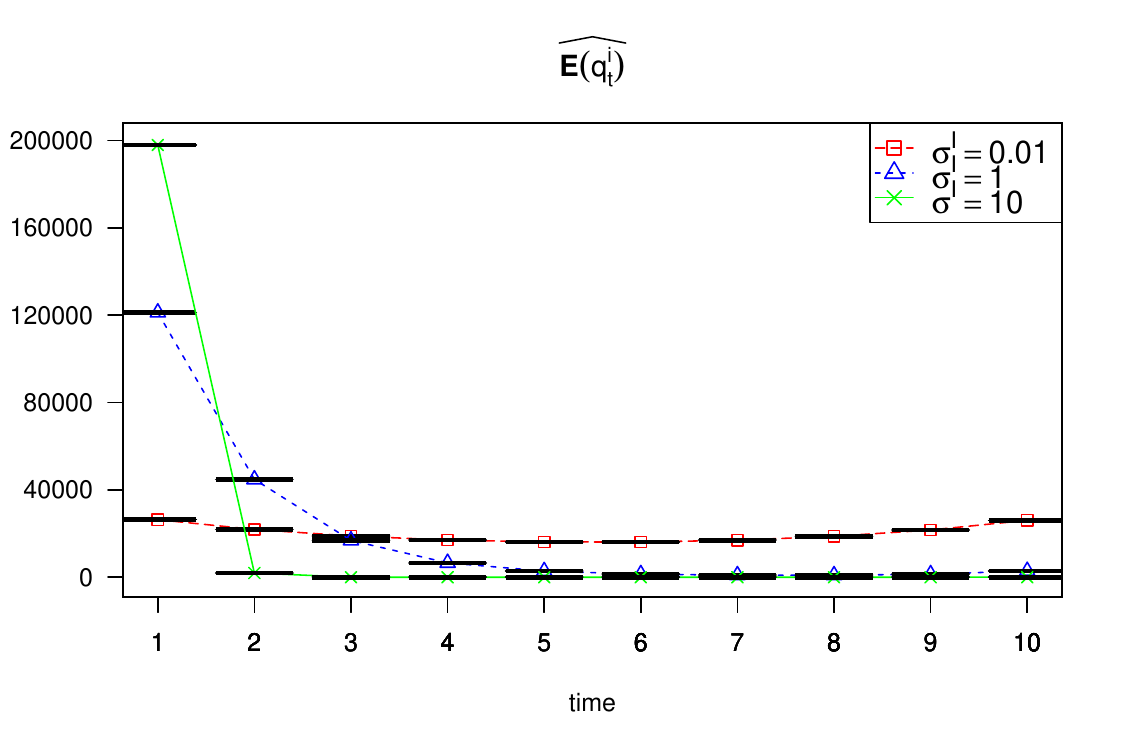}
\subcaption{Trader $i$'s optimal execution volume.}
\label{Figure_3a}
\end{minipage}
\begin{minipage}[b]{0.45\linewidth}
\centering
\includegraphics[height=5cm, width=7.5cm]{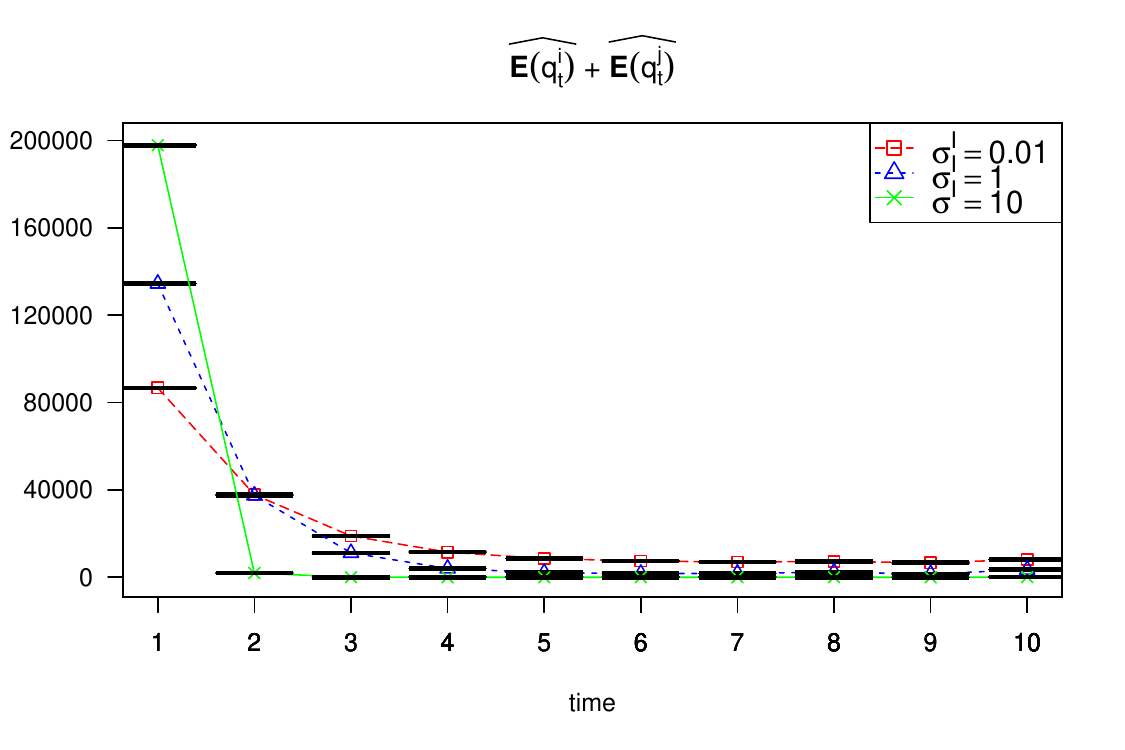}
\subcaption{Total volumes submitted by traders $i$ and $j$.}
\label{Figure_3b}
\end{minipage}
\caption{Optimal execution vs. equilibrium execution for $\sigma^{\mathcal{I}} = 0.01, 1$, and $10$ (with $a^{\mathcal{I}} = b^{\mathcal{I}} = 0$ fixed).}
\label{Figure_3}
\end{figure}

Figure~\ref{Figure_3a} describes the (essentially) single large trader case: $\mathfrak{Q}^{i} = 200,000$ and $\mathfrak{Q}^{j} = 0$ with $\gamma^{i} = 0.001$ and $\gamma^{j} = 1000$. Figure~\ref{Figure_3b} illustrates the equilibrium execution strategy of large traders $i$ and $j$ with the following initial holdings: $\mathfrak{Q}^{i} = \mathfrak{Q}^{j} = 100,000$ with $\gamma^{i} = \gamma^{j} = 0.001$. When $\sigma^{\mathcal{I}} = 0.01$ (i.e., less environmental uncertainty exists), the total volume executed by two large traders at time $t = 1$ is almost as twice as that by a single large trader. However, when $\sigma^{\mathcal{I}} = 10$ (i.e., higher environmental uncertainty exists), the total volume executed by two large traders at time $t = 1$ is almost the same as that by a single large trader. At time $t = 1$, the total volume executed in the case of a single large trader and the case of two large traders becomes similar as $\sigma^{\mathcal{I}}$ (i.e., environmental uncertainty) increases. These results offer the following insights for multiple large traders' behavior. When environmental uncertainty is low, the two large traders execute their orders faster than in the single large trader case in response to strategic uncertainty. On the contrary, large traders are sensitive to environmental uncertainty, so they execute their orders under high environmental uncertainty regardless of the existence of other large traders.

Our results can also provide the backbone for the effect of strategic uncertainty on a large trader's execution strategy. When the environmental uncertainty is low, the optimal execution strategy for a single large trader becomes close to an equally divided execution strategy. However, a single large trader's optimal execution strategy is not similar to the equilibrium execution strategy in a market where two large traders exist. This difference stems from the existence of strategic uncertainty. For the two large traders' case, one large trader is willing to behave as \textit{first-mover} (or \textit{leader}) since the buy-activity by the other large trader (opponent) will push up the future price of the financial asset.

\subsubsection{Asymmetric large traders} \label{Subsubsection_4.1.3}

In financial markets, large traders often face the situation that other large traders initially intend to execute different volumes. In such a situation, examining to what degree environmental uncertainty affects the execution strategy for each large trader is essential to analyze each large trader's execution strategy. From this viewpoint, we discuss the effect of $\sigma^{\mathcal{I}}$ on the large traders' equilibrium execution strategy with different initial inventories.

Our first focus is placed on the following scenario: one large trader $i$ initially intends to execute $\mathfrak{Q}^{i} = 100,000$ volumes of one financial asset, while the other large trader $j$ $(\neq i)$ does $\mathfrak{Q}^{j} = 0$ (no) volumes of that asset. We then investigate how the degree of environmental uncertainty affects each large trader's equilibrium execution strategy.
\begin{figure}[tbp]
\centering
\begin{minipage}[b]{0.45\linewidth}
\centering
\includegraphics[height=5cm, width=7.5cm]{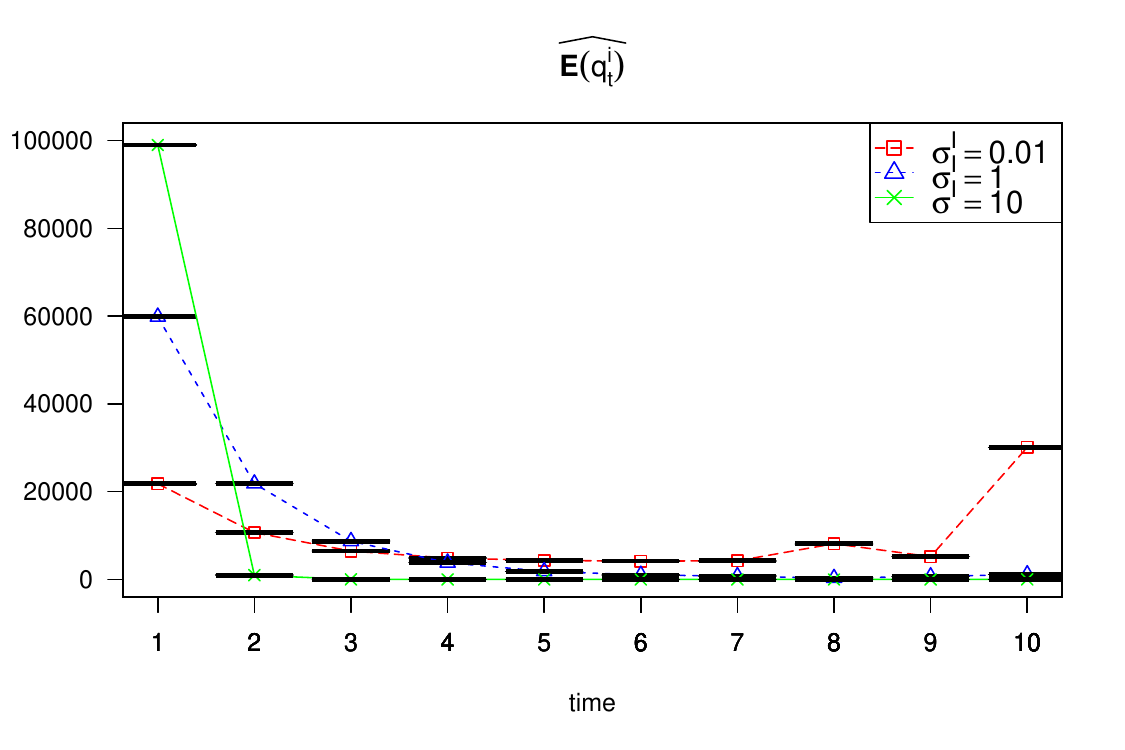}
\subcaption{Trader $i$'s equilibrium execution strategy.}
\label{Figure_4a}
\end{minipage}
\begin{minipage}[b]{0.45\linewidth}
\centering
\includegraphics[height=5cm, width=7.5cm]{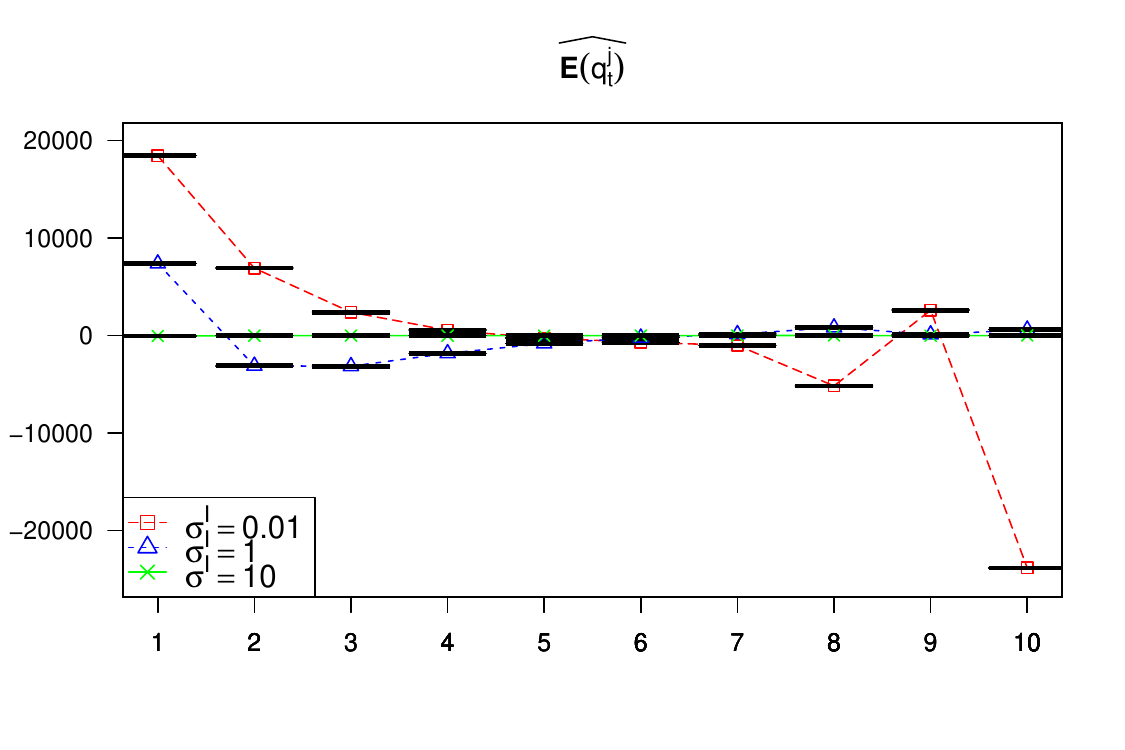}
\subcaption{Trader $j$'s equilibrium execution strategy.}
\label{Figure_4b}
\end{minipage}
\caption{Comparative statics for $\sigma^{\mathcal{I}} = 0.01, 1$, and $10$ (with $a^{\mathcal{I}} = b^{\mathcal{I}} = 0$ fixed and $\mathfrak{Q}^{i} = 100,000$ and $\mathfrak{Q}^{j} = 0$).}
\label{Figure_4}
\end{figure}
Figure~\ref{Figure_4} illustrates the equilibrium execution strategy of large traders $i$ and $j$. The large trader $j$ with no initial inventory executes fewer orders as $\sigma^{\mathcal{I}}$ becomes larger (i.e., environmental uncertainty becomes higher). We should note that this is the case for $\gamma^{j} = 0.001$, that is, when the large traders' degree of risk aversion is not risk-averse. This result thus confirms that large traders are rather sensitive to environmental uncertainty \textit{even if the degree of their risk aversion is not high}.
\begin{remark}[] \label{Remark_4.2}
If $\mathfrak{Q}^{j} = 0$ and $\sigma^{\mathcal{I}}$ is large enough, large trader $j$ will not execute any orders since the large trader is risk-averse. Thus, the equilibrium execution strategy of large trader $i$ $(\neq j)$ essentially becomes an optimal execution strategy.
\end{remark}

The next example describes the situation when each large trader $i, j \in \{1, 2\}$ $(i \neq j)$ has \textit{opposite} initial holdings: $\mathfrak{Q}^{i} = 100,000$ and $\mathfrak{Q}^{j} = -100,000$. In this situation, how does $\sigma^{\mathcal{I}}$ influence the equilibrium execution strategy for each large trader?
\begin{figure}[tbp]
\centering
\begin{minipage}[b]{0.45\linewidth}
\centering
\includegraphics[height=5cm, width=7.5cm]{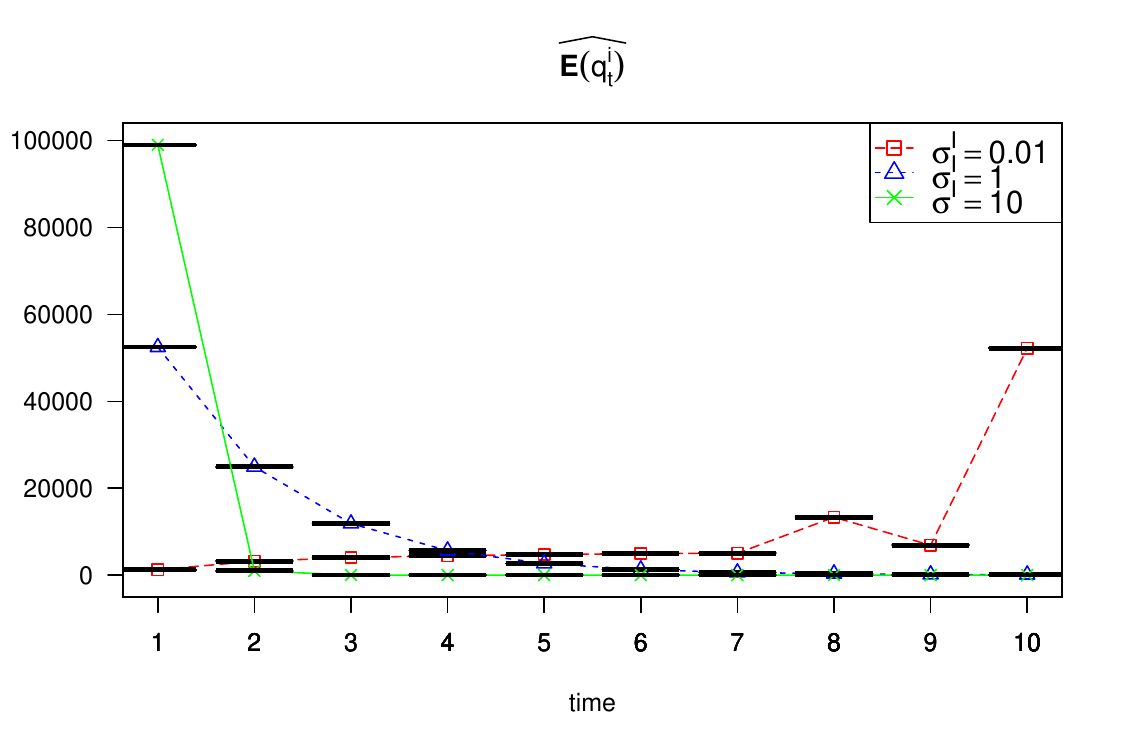}
\subcaption{Trader $i$ ($\mathfrak{Q}^{i} = 100,000$).}
\label{Figure_5a}
\end{minipage}
\begin{minipage}[b]{0.45\linewidth}
\centering
\includegraphics[height=5cm, width=7.5cm]{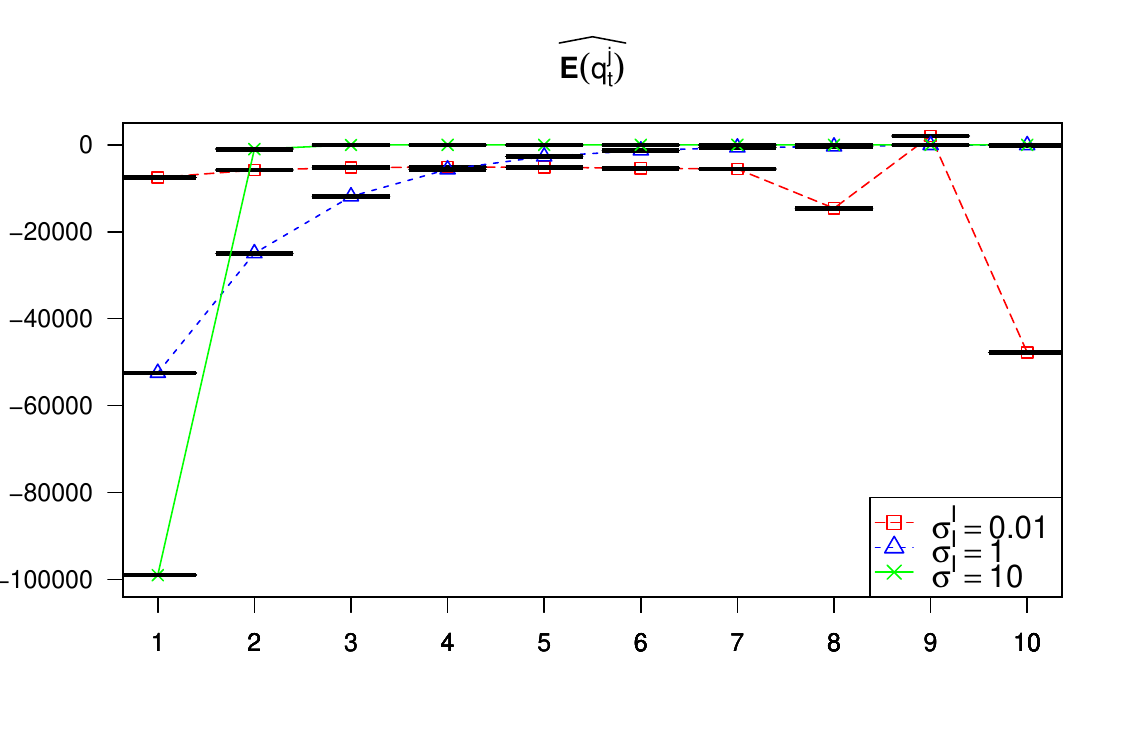}
\subcaption{Trader $j$ ($\mathfrak{Q}^{j} = -100,000$).}
\label{Figure_5b}
\end{minipage}
\caption{Equilibrium execution strategies with opposite initial inventories for $\sigma^{\mathcal{I}} = 0.01, 1$, and $10$ (with $a^{\mathcal{I}} = b^{\mathcal{I}} = 0$).}
\label{Figure_5}
\end{figure}
Figure~\ref{Figure_5} draws the equilibrium execution strategy for large traders $i$ and $j$ with the opposite initial holdings. When $\sigma^{\mathcal{I}}$ is low, both large traders inactively execute until maturity and unwind all the remaining positions at maturity. However, when $\sigma^{\mathcal{I}}$ is high, both large traders actively execute at the beginning and unwind almost all the positions in the first half of the trading window. The motivation for being \textit{second-mover} (or \textit{follower}) helps us reconcile the seemingly conflicting results: if large trader $i$ executes buy orders before large trader $j$, the opponent (large trader $j$) can sell orders with high prices, and vice versa.

\subsection{Special case: Random-walk} \label{Subsection_4.2}

When $b^{\mathcal{I}} = -1$, the Markovian environment is represented as a stochastic process with independent increments that follow a normal distribution.
\begin{definition} \label{Definition_4.1}
When $b^{\mathcal{I}} = -1$, the dynamics of the Markovian environment is rewritten as
\begin{align}
\Delta \mathcal{I}_{t + 1} \coloneqq \mathcal{I}_{t + 1} - \mathcal{I}_{t} = a^{\mathcal{I}} + \sigma^{\mathcal{I}} \omega_{t + 1}. \label{dy:RW}
\end{align}
In the case that $a^{\mathcal{I}} > 0$ ($a^{\mathcal{I}} < 0$, respectively), the stochastic process \eqref{dy:RW} refers to a \textit{positive-drifted} (\textit{negative-drifted}) \textit{random-walk}. If $a^{\mathcal{I}} = 0$, the process is called a \textit{symmetric random walk}.
\end{definition}
With the above definition in mind, this subsection analyzes how the drift term $a^{\mathcal{I}}$ influences the equilibrium execution strategy when the Markovian environment follows a stochastic process described by Eq.~\eqref{dy:RW}. This analysis enables us to illuminate the behavior of large traders under the existence of positive/negative (or no) trends in price dynamics. In particular, we focus on the following situation: large trader $i \in \{1, 2\}$, whom we call a \textit{buy-side} large trader, initially aims to acquire $100,000$ volumes of one financial asset, while large trader $j$ ($\neq i$), whom we call a \textit{sell-side} large trader, liquidates $100,000$ volumes (i.e., $\mathfrak{Q}^{i} = 100,000$ and $\mathfrak{Q}^{j} = -100,000$).
\begin{remark} \label{Remark_4.3}
Our definition of the buy/sell-side large trader is slightly different from the so-called ``buy/sell-side'' trader in practice. We regard the buy (sell, respectively)-side large trader as the one that intends or is obliged to buy (sell) a large amount of orders by the end of the trading window.
\end{remark}
\begin{figure}[tbp]
\centering
\begin{minipage}[b]{0.45\linewidth}
\centering
\includegraphics[height=5cm, width=7.5cm]{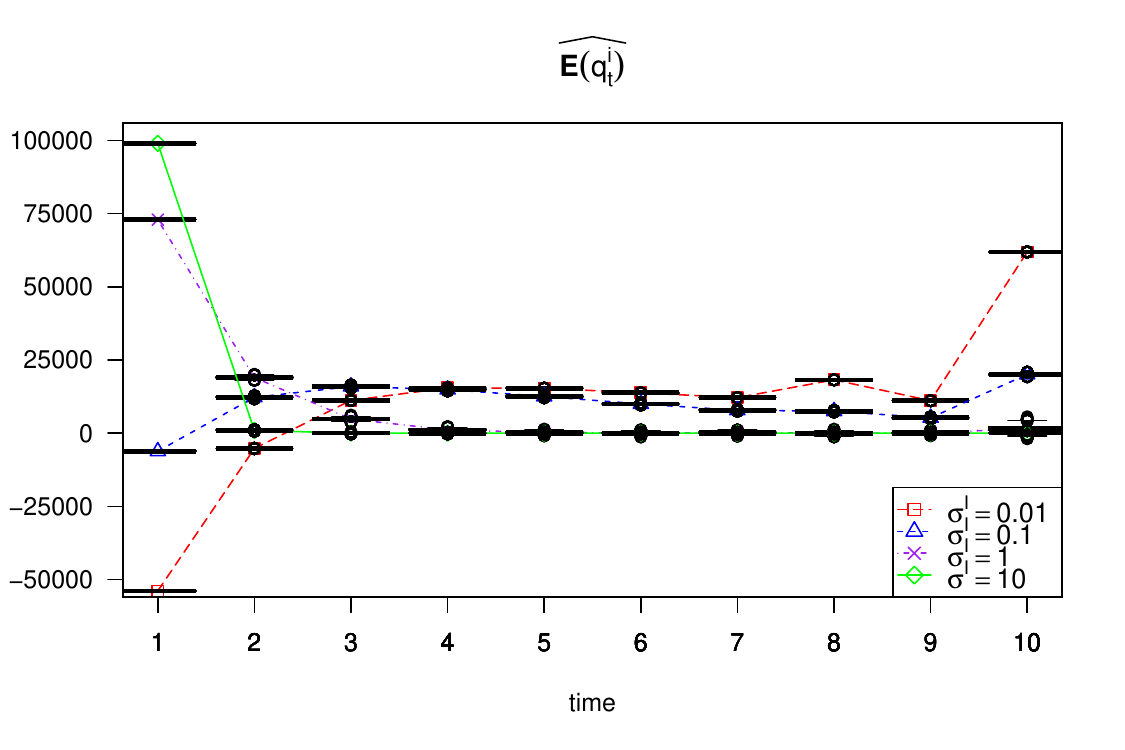}
\subcaption{Buy-side large trader $i$.}
\label{Figure_6a}
\end{minipage}
\begin{minipage}[b]{0.45\linewidth}
\centering
\includegraphics[height=5cm, width=7.5cm]{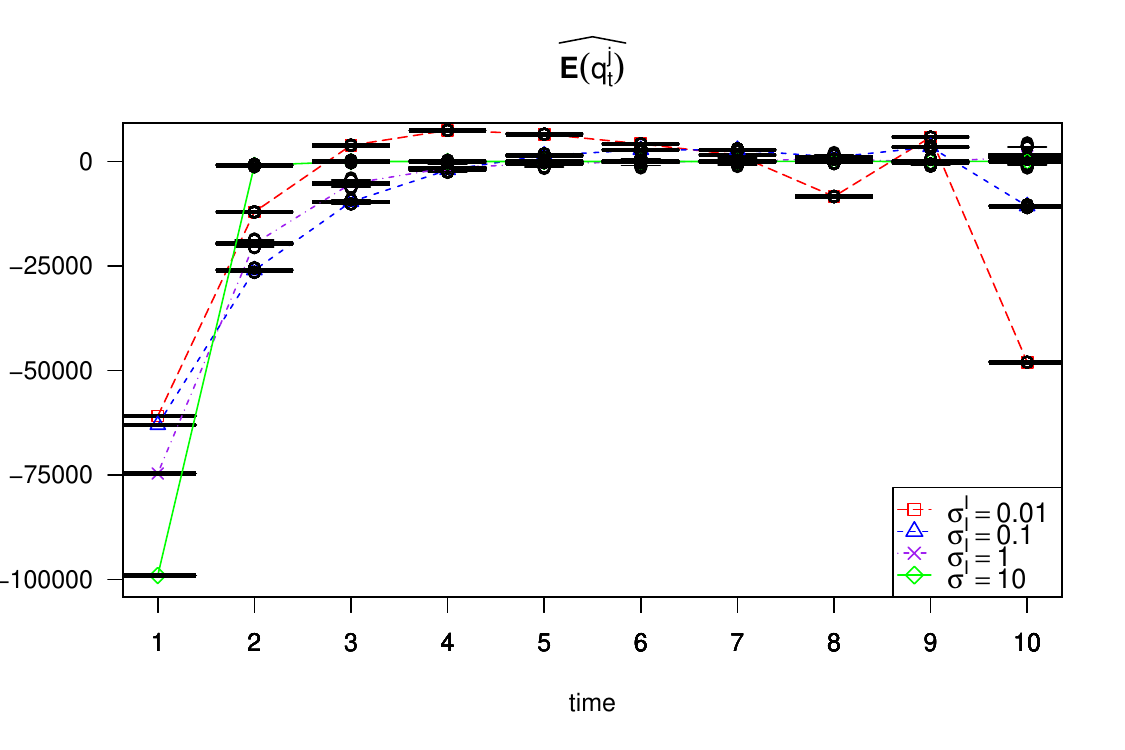}
\subcaption{Sell-side large trader $j$.}
\label{Figure_6b}
\end{minipage}
\caption{Effect of negative trend ($a^{\mathcal{I}} = - 0.5$) in price dynamics on the equilibrium execution strategy.}
\label{Figure_6}
\end{figure}
Figure~\ref{Figure_6} illustrates the equilibrium execution strategy of buy-side (sell-side, respectively) large trader $i$ ($j$) for the case that $a^{\mathcal{I}} = -0.5$ (i.e., a negative trend in price dynamics exists). Figure~\ref{Figure_6a} shows that when $\sigma^{\mathcal{I}}$ is $0.01$ (small), the buy-side large trader initially sells the financial asset and then gradually buys back the asset. The low value of $\sigma^{\mathcal{I}}$ indicates that the negative drift in the price dynamics seems to exist over the trading window. We can take this phenomenon as a key postulate that large traders want to buy (sell, respectively) a financial asset when the price is low (high). When $\sigma^{\mathcal{I}}$ is $1$ and $10$ (large), however, the buy-side large trader initially executes most of the orders since the large trader is risk-averse. In contrast, Figure~\ref{Figure_6b} infers that the strategy of the sell-side large trader $j$ to execute orders is seemingly counterintuitive. When $\sigma^{\mathcal{I}}$ is $0.01$ (small), the sell-side large trader liquidates the large part of positions at the initial and end of the trading window. The behavior of selling assets at maturity stems from strategic uncertainty, as demonstrated in Ohnishi and~Shimoshimizu~\cite{OMSM20QF}. On the contrary, the sell-side large trader sells orders at the beginning because he/she is willing to liquidate the holdings at a high price. Since there exists a negative trend in price dynamics, the sell-side large trader is keen to liquidate the holding at the beginning before the decrease of the financial asset price.

In addition, Figure~\ref{Figure_6} shows that the buy-side large trader's equilibrium execution strategy includes a \textit{round-trip trading} when $\sigma^{\mathcal{I}}$ is $0.01$ (small), while does not when $\sigma^{\mathcal{I}}$ is $1$ and $10$ (large). A round-trip trading is a trading strategy by which a buy-side (sell-side, respectively) trader repeatedly executes sell (buy) orders in a trading window. This result suggests that there exists a \textit{threshold} of $\sigma^{\mathcal{I}}$ at which the buy-side trader determines whether to depend on a round-trip trading or not. The implication is summarized as follows: when $\sigma^{\mathcal{I}}$ is low (high, respectively), the buy-side large trader is (un)able to expect some increase of his/her wealth from round-trip trading. The reason for this result is that low (high, respectively) $\sigma^{\mathcal{I}}$ implies an almost deterministic (not necessarily deterministic) negative trend in price dynamics.

\begin{figure}[tbp]
\centering
\begin{minipage}[b]{0.45\linewidth}
\centering
\includegraphics[height=5cm, width=7.5cm]{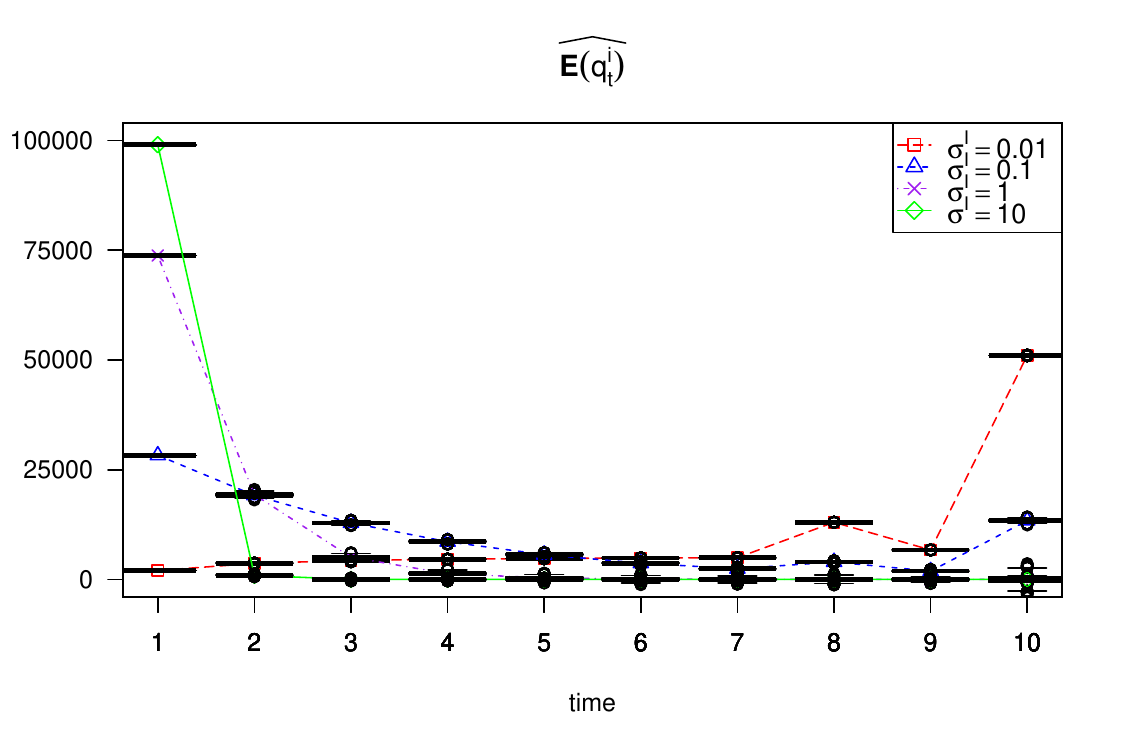}
\subcaption{Buy-side large trader $i$.}
\label{Figure_7a}
\end{minipage}
\begin{minipage}[b]{0.45\linewidth}
\centering
\includegraphics[height=5cm, width=7.5cm]{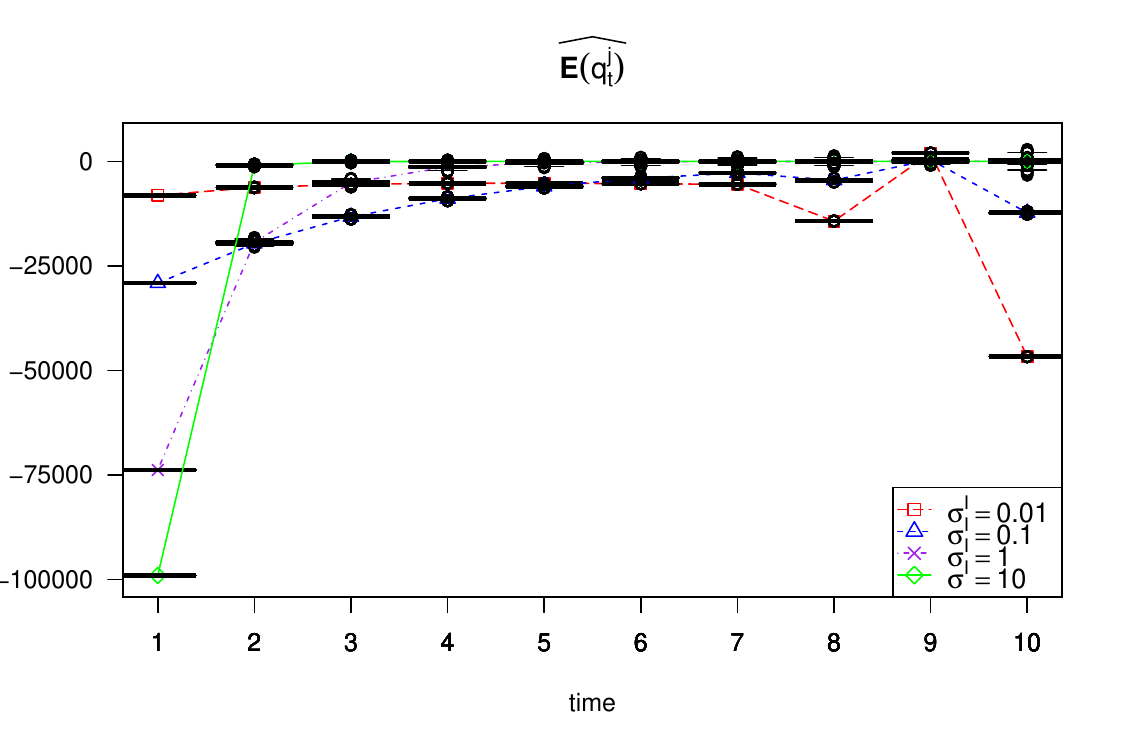}
\subcaption{Sell-side large trader $j$.}
\label{Figure_7b}
\end{minipage}
\caption{Effect of no trend ($a^{\mathcal{I}} = 0$) in price dynamics on equilibrium execution strategy.}
\label{Figure_7}
\end{figure}
Figure~\ref{Figure_7} shows the equilibrium execution strategy of buy-side (sell-side, respectively) large trader $i$ ($j$) for the case that $a^{\mathcal{I}} = 0$ (i.e., no-trends in price dynamics exist). The execution strategies are similar to those illustrated in Figure~\ref{Figure_5} since the difference only lies in the value of $b^{\mathcal{I}}$: $b^{\mathcal{I}} = 0$ in Figure~\ref{Figure_5} and $b^{\mathcal{I}} = -1$ in Figure~\ref{Figure_7}.

\begin{figure}[tbp]
\centering
\begin{minipage}[b]{0.45\linewidth}
\centering
\includegraphics[height=5cm, width=7.5cm]{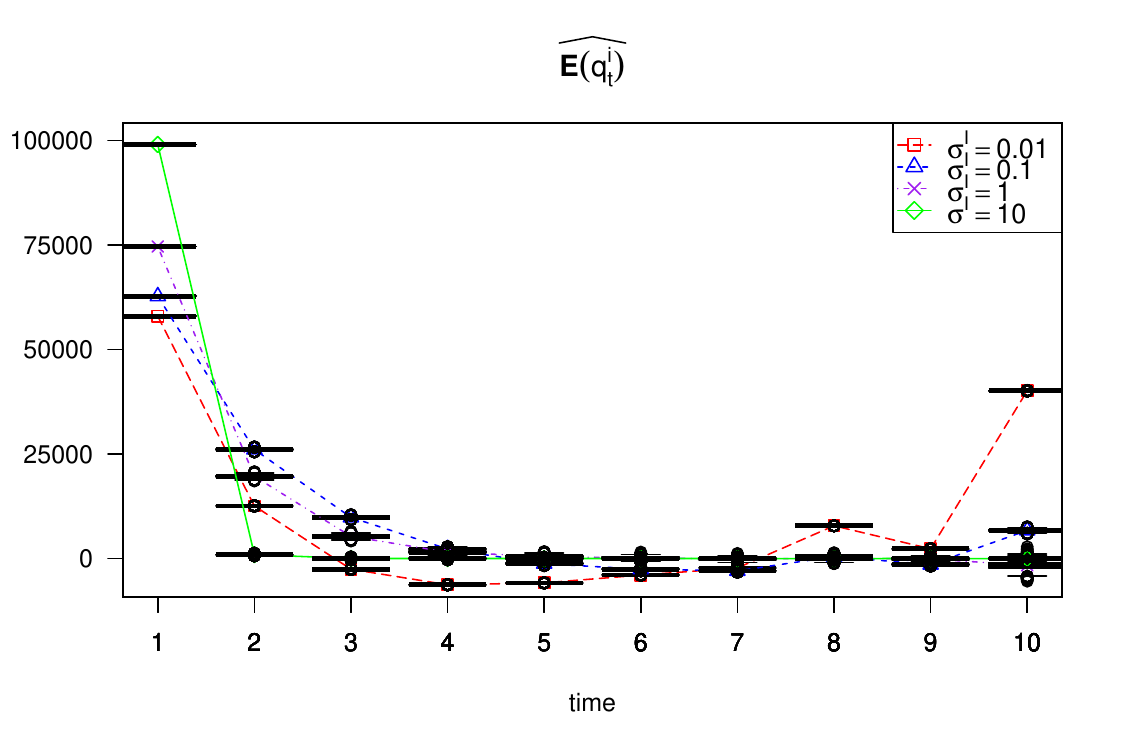}
\subcaption{Buy-side large trader $i$.}
\label{Figure_8a}
\end{minipage}
\begin{minipage}[b]{0.45\linewidth}
\centering
\includegraphics[height=5cm, width=7.5cm]{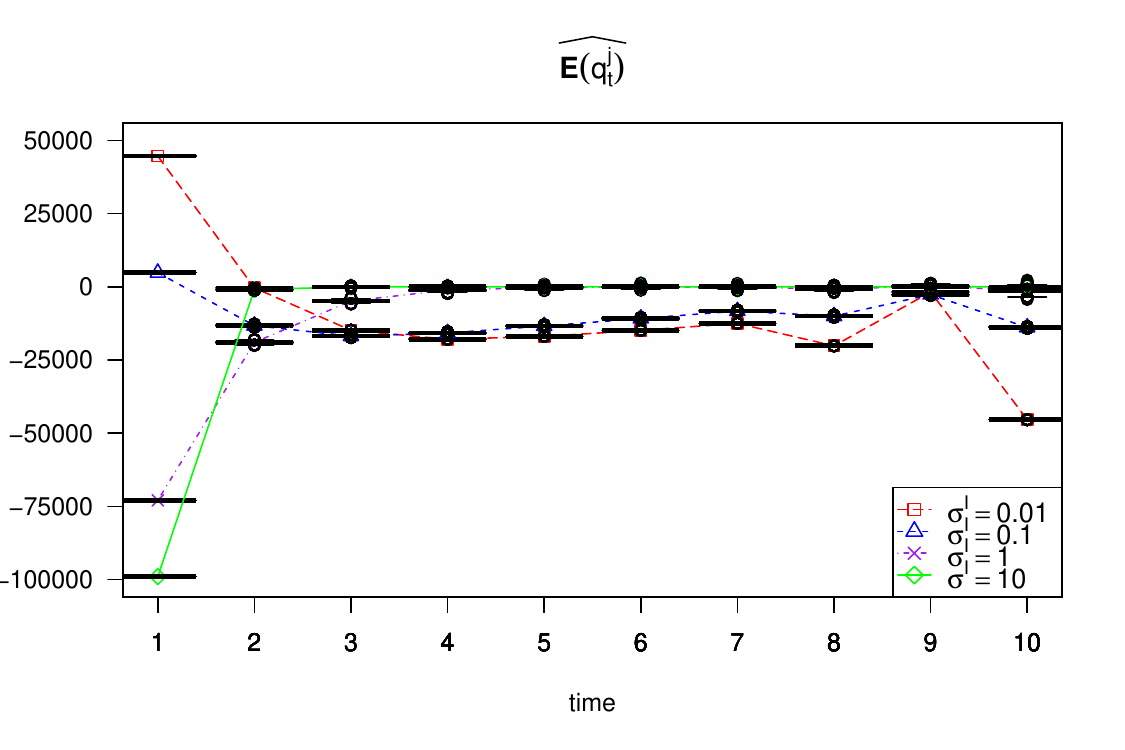}
\subcaption{Sell-side large trader $j$.}
\label{Figure_8b}
\end{minipage}
\caption{Effect of positive trend ($a^{\mathcal{I}} = 0.5$) in price dynamics on equilibrium execution strategy.}
\label{Figure_8}
\end{figure}
Figure~\ref{Figure_8} illustrates the equilibrium execution strategy of buy-side (sell-side, respectively) large trader $i$ ($j$) for the case that $a^{\mathcal{I}} = 0.5$ (i.e., a positive trend in price dynamics exists). As opposed to the case of the negative trend, when $\sigma^{\mathcal{I}}$ is $0.01$ (small), the sell-side large trader initially buys the financial asset and then gradually sells out the asset. In contrast, the strategy of the buy-side large trader $j$ to execute orders is again counterintuitive. When $\sigma^{\mathcal{I}}$ is $0.01$ (small), the buy-side large trader acquires the large part of positions at the initial and end of the trading window. In addition, the sell-side large trader's equilibrium execution strategy includes a \textit{round-trip trading} when $\sigma^{\mathcal{I}}$ is $0.01$ (small), but it does not include a round trip when $\sigma^{\mathcal{I}}$ is $1$ and $10$ (large). The logic for this result is similar to that explained in the case of the negative trend: low (high, respectively) $\sigma^{\mathcal{I}}$ implies an almost deterministic (not necessarily deterministic) positive trend in price dynamics.
\begin{remark}[] \label{Remark_4.4}
Figures~\ref{Figure_6}--\ref{Figure_8} show that the equilibrium execution strategy seems to be insensitive to the realization of the Markovian environment even if $\sigma^{\mathcal{I}}$ is not small. (That is, the equilibrium execution strategy becomes ``almost deterministic.'') The reason is that each large trader takes environmental uncertainty into account in advance and accelerates the execution.
\end{remark}
\begin{remark}[] \label{Remark_4.5}
Figures~\ref{Figure_6}--\ref{Figure_8} show that when there exists either a positive/negative trend, the equilibrium execution strategies of buy- and sell-side large traders become \textit{asymmetric}. However, the total volume traded by the large traders in the financial market, defined as
\begin{align}
TV_{t} \coloneqq \sum_{k = i, j} \lvert q^{k\ast}_{t} \rvert,
\end{align}
seems to form a U-shaped trading curve that is observed in intraday markets.
\end{remark}

\subsection{General case: Mean-reversion (diversion) process} \label{Subsection_4.3}

As a special case of the stochastic process $\{\mathcal{I}_{t}\}_{t \in \{1, \ldots, T\}}$, we can consider a mean-reversion process.

\begin{definition} \label{Definition_4.2}
When $b^{\mathcal{I}} \neq -1$, the stochastic process $\{\mathcal{I}_{t}\}_{t \in \{1, \ldots, T\}}$ can be rewritten as
\begin{align}
\Delta \mathcal{I}_{t + 1} \coloneqq \mathcal{I}_{t + 1} - \mathcal{I}_{t} = (1 + b^{\mathcal{I}}) \left( \frac{a^{\mathcal{I}}}{1 + b^{\mathcal{I}}} - \mathcal{I}_{t} \right) + \sigma^{\mathcal{I}} \omega_{t + 1}. \label{gcofMe}
\end{align}
For $b^{\mathcal{I}} \in (-1, 1)$, the term $(1 + b^{\mathcal{I}})$ in Eq.~\eqref{gcofMe} refers to the \textit{mean-reversion speed} and $a^{\mathcal{I}}/(1 + b^{\mathcal{I}})$ refers to the \textit{mean-reversion level}, and the stochastic process is called a \textit{mean-reversion process} (for large $T$).
\end{definition}
\noindent The graphical image of (expected) dynamics of $\{\mathcal{I}_{t}\}_{t \in \{1, \ldots, T\}}$ is illustrated in Figure~\ref{Figure_9}.
\begin{figure}[tbp]
\centering
\begin{tikzpicture}[yscale=0.7]
\draw[->,>=stealth,semithick] (-0.5,0)--(11.5,0) node[right]{\footnotesize{time}}; 
\draw[->,>=stealth,semithick] (0,-2.125)--(0, 2) node[left]{\footnotesize{$\mathbb{E} [\mathcal{I}_{t}]$}}; 
\draw (0,0) node[below left]{\footnotesize{$O$}}; 
\foreach \x/\thickness/\c/\linetype in {-0.5/semithick/black/solid, 1/thick/blue/dashed, 1.2/very thick/purple/densely dash dot}{
\foreach \i in {0,...,8}{
\coordinate [] (a\i) at (1.2*\i, {-0.5*1.75/2.2 + 1.75*(-\x)^\i/4.4});
};
\foreach \i in {0,...,7}{
\tikzmath
{
int \j;
\j = \i + 1.5;
}
\draw[-{Stealth},\thickness,\c,\linetype] (a\i) to[] (a\j);
};
};
\foreach \i in {1,...,8}{
\draw (1.2*\i,0) node[above]{\small{$\i$}};
\draw[-,semithick] (1.2*\i,0)--(1.2*\i, 0.1);
};
\foreach \x/\p/\c in {-0.5/below right/black, 1/above right/blue, 1.2/right/purple}{
\draw (1.2*8, {-0.5*2/2.2 + 2*(-\x)^8/4.4}) node[\p,\c]{\footnotesize{$b^{\mathcal{I}} = \x$}};
};
\end{tikzpicture}
\caption{Dynamics of $\mathbb{E} [\mathcal{I}_{t}]$ for different $b^{\mathcal{I}}$ ($= -0.5, 1$, and $1.2$) when $a^{\mathcal{I}} = -0.5$. Markovian environment follows a random walk on average for $b^{\mathcal{I}} = 1$, while diverges on average for $b^{\mathcal{I}} = 1.2$ ($> 1$).
}
\label{Figure_9}
\end{figure}
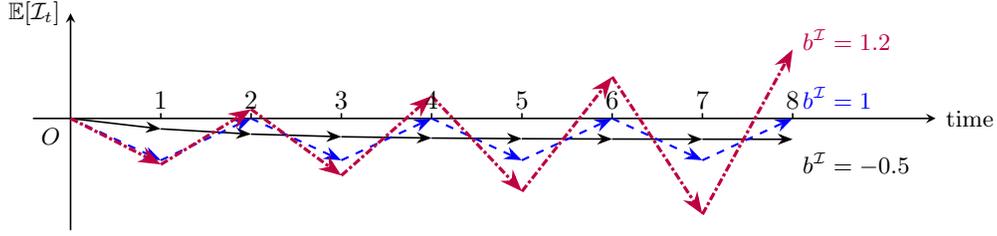

In what follows, we analyze how $a^{\mathcal{I}}$ and $b^{\mathcal{I}}$ influence the equilibrium execution strategy. This analysis allows us to shed light on how mean-reversion speed and mean-reversion level impact the equilibrium execution strategy. We also investigate the case that the Markovian environment \textit{does not} follow a mean-reversion process (i.e., $b^{\mathcal{I}} \notin (-1, 1)$). As in Section~\ref{Subsection_4.2}, our focus in this subsection lies on the \textit{opposite initial positions}: $\mathfrak{Q}^{i} = - \mathfrak{Q}^{j} = 100,000$.

\subsubsection{Effect of $a^{\mathcal{I}}$ and $b^{\mathcal{I}}$}
\label{Subsubsection_4.3.1}

We first investigate how $b^{\mathcal{I}}$ influences the equilibrium execution strategy. To this end, we fix $a^{\mathcal{I}}$ for negative, zero, and positive values. 

\begin{figure}[tbp]
\centering
\begin{minipage}[b]{0.45\linewidth}
 \centering
 \includegraphics[height=5cm, width=7.5cm]{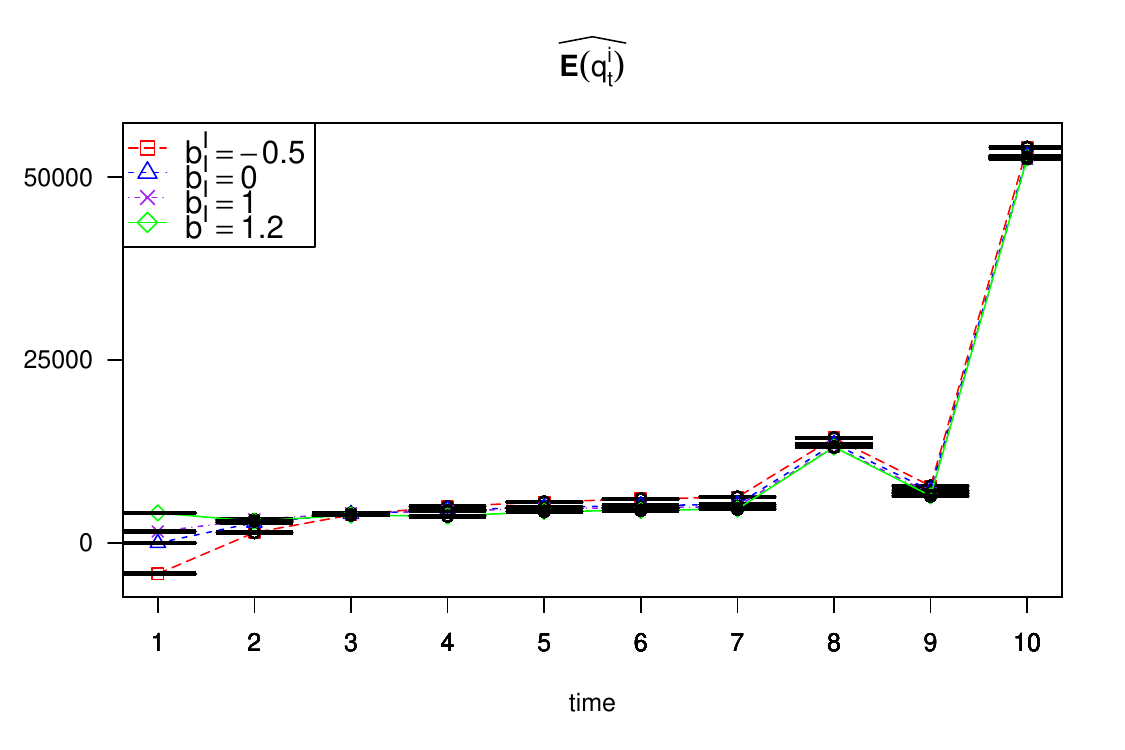}
 \subcaption{Buy-side large trader $i$.}
 \label{Figure_10a}
  \end{minipage}
\begin{minipage}[b]{0.45\linewidth}
    \centering
 \includegraphics[height=5cm, width=7.5cm]{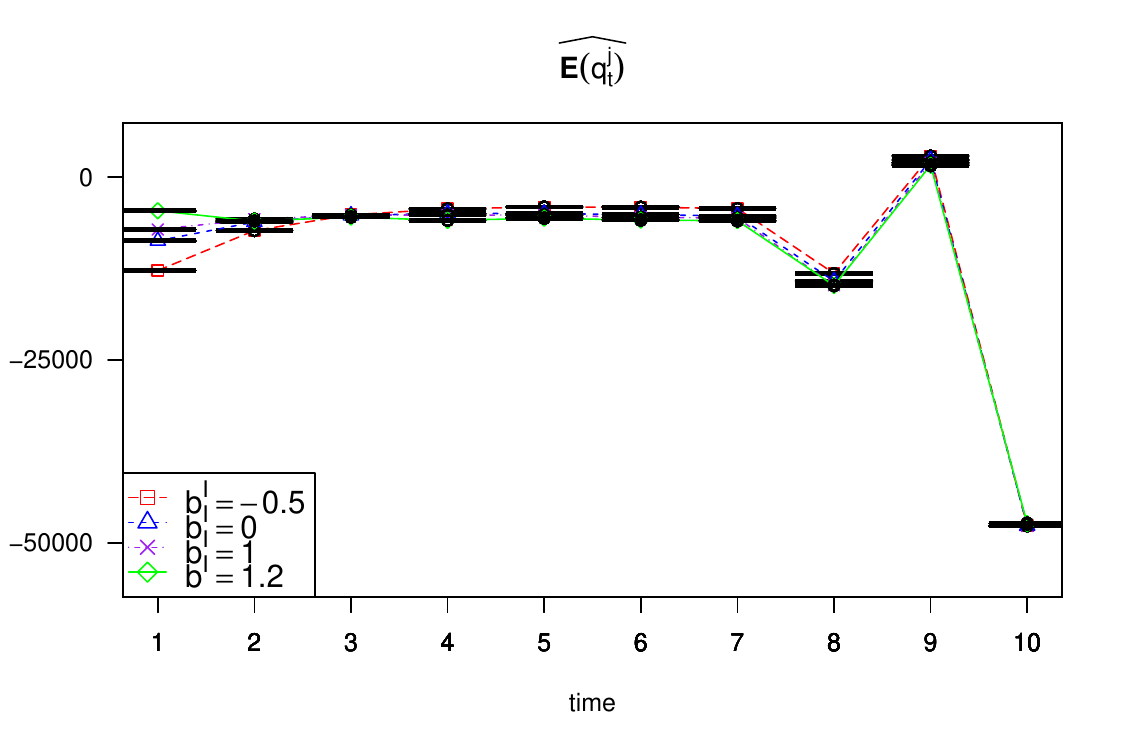}
    \subcaption{Sell-side large trader $j$.}
    \label{Figure_10b}
  \end{minipage}
\caption{Equilibrium execution strategy for $b^{\mathcal{I}} = -0.5, 0, 1, 1.2$ (with $a^{\mathcal{I}} = -0.5$ and $\sigma^{\mathcal{I}} = 0.01$ fixed).}
\label{Figure_10}
\end{figure}
Figure~\ref{Figure_10} depicts the equilibrium execution strategy for $b^{\mathcal{I}} = -0.5, 0, 1$, and $1.2$ when $a^{\mathcal{I}} = -0.5$ (negative). For negative $b^{\mathcal{I}}$ (i.e., $b^{\mathcal{I}} = -0.5$), the buy-side large trader initially executes sell orders. In the setting that $a^{\mathcal{I}} = -0.5$ and $b^{\mathcal{I}} = -0.5$, the Markovian environment gradually decreases to the mean-reversion level. The buy-side (as well as the sell-side) large trader then foresees the negative effect on the price impact, initially selling assets and gradually executing buy orders over the course of the trading window.

\begin{figure}[tbp]
\centering
\begin{minipage}[b]{0.45\linewidth}
 \centering
 \includegraphics[height=5cm, width=7.5cm]{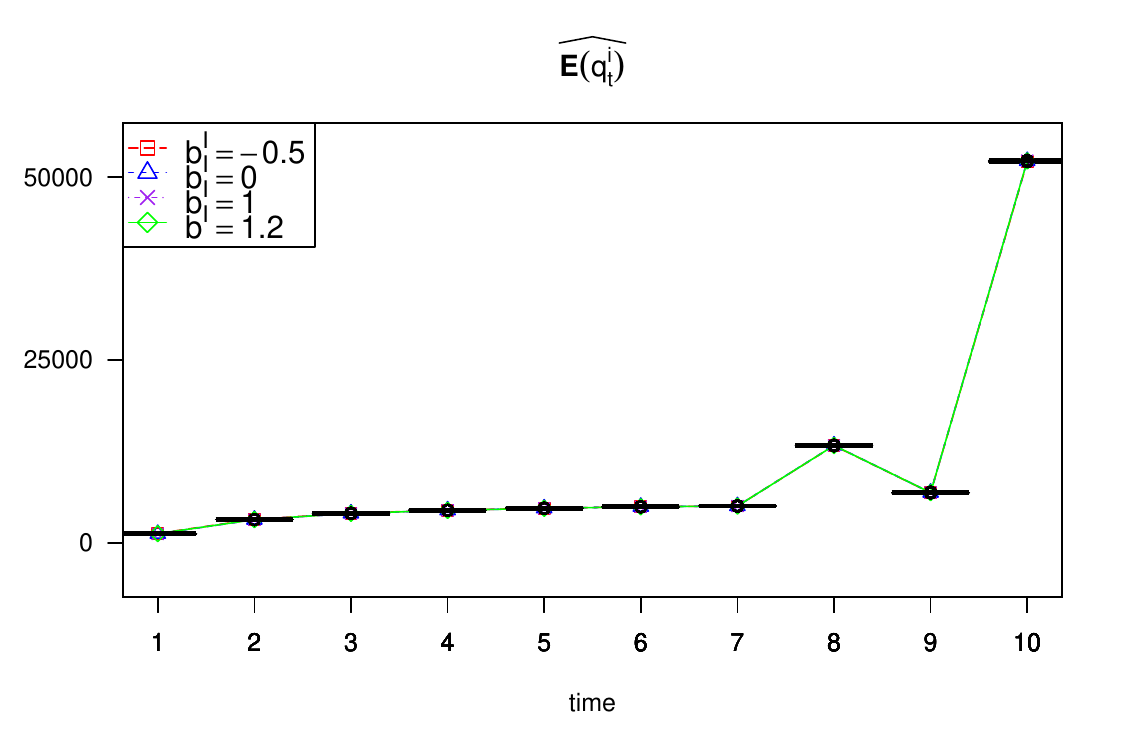}
 \subcaption{Buy-side large trader $i$.}
 \label{Figure_11a}
  \end{minipage}
\begin{minipage}[b]{0.45\linewidth}
    \centering
 \includegraphics[height=5cm, width=7.5cm]{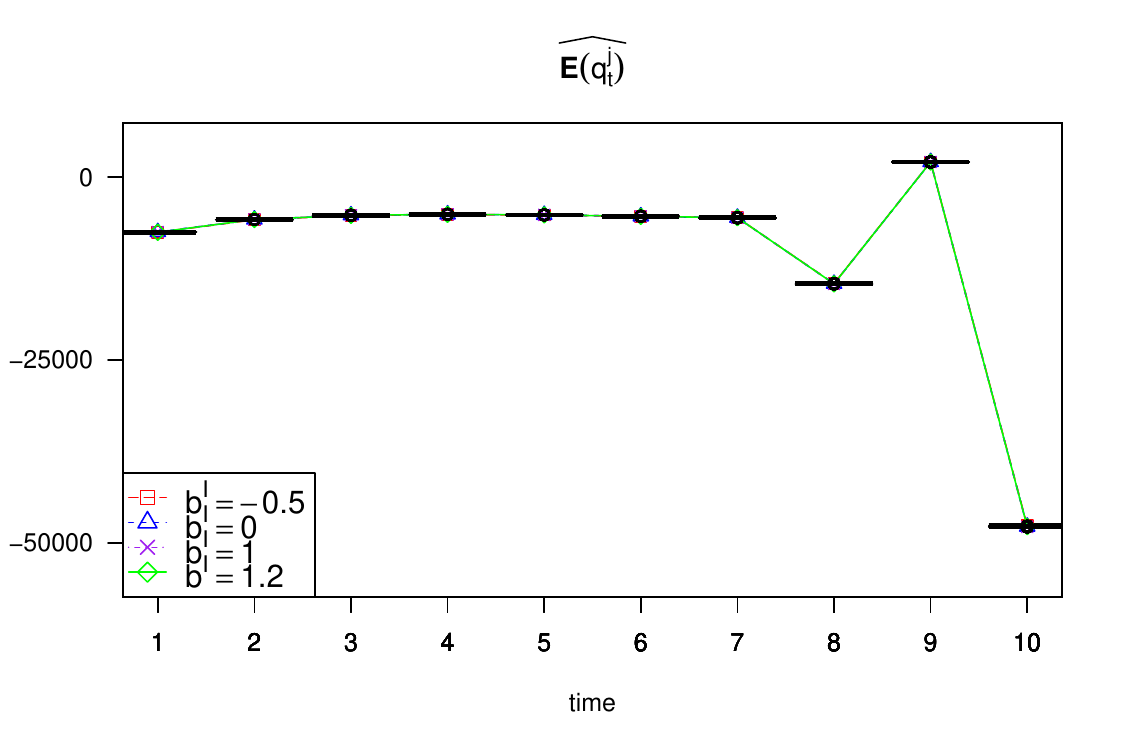}
    \subcaption{Sell-side large trader $j$.}
    \label{Figure_11b}
  \end{minipage}
\caption{Equilibrium execution strategy for $b^{\mathcal{I}} = -0.5, 0, 1, 1.2$ (with $a^{\mathcal{I}} = 0$ and $\sigma^{\mathcal{I}} = 0.01$ fixed).}
\label{Figure_11}
\end{figure}
Figure~\ref{Figure_11} illustrates the equilibrium execution strategy for $b^{\mathcal{I}} = -0.5, 0, 1$, and $1.2$ when $a^{\mathcal{I}} = 0$. The figure shows that the difference in $b^{\mathcal{I}}$ does not influence the equilibrium execution strategy when the Markovian environment has no drift effect. For $a^{\mathcal{I}} = 0$, the mean function of the Markovian environment becomes
\begin{align}
\mathbb{E} \left[ \mathcal{I}_{t + 1} \right] 
= a^{\mathcal{I}} + b^{\mathcal{I}} \mathbb{E} \left[ \mathcal{I}_{t} + \sigma^{\mathcal{I}} \omega_{t} \right]
= b^{\mathcal{I}} \mathbb{E} \left[ \mathcal{I}_{t} \right].
\end{align}
Combining the above relationship with $\mathcal{I}_{0} = 0$, we have $\mathbb{E} \left[ \mathcal{I}_{t} \right] = 0$ for all $t \in \{1, \ldots, T\}$. Therefore, there are no (expected) positive/negative trends in price dynamics, resulting in little changes in the equilibrium execution strategy for different values of $b^{\mathcal{I}}$.

\begin{figure}[tbp]
\centering
\begin{minipage}[b]{0.45\linewidth}
 \centering
 \includegraphics[height=5cm, width=7.5cm]{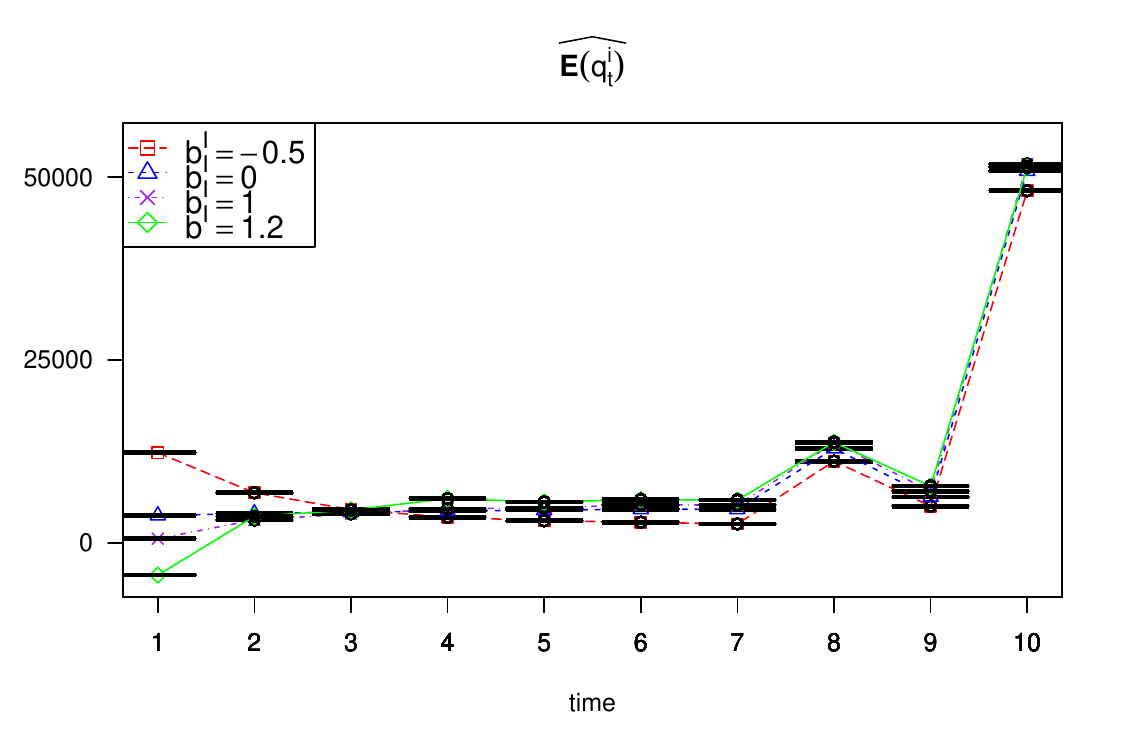}
 \subcaption{Buy-side large trader $i$.}
 \label{Figure_12a}
  \end{minipage}
\begin{minipage}[b]{0.45\linewidth}
    \centering
 \includegraphics[height=5cm, width=7.5cm]{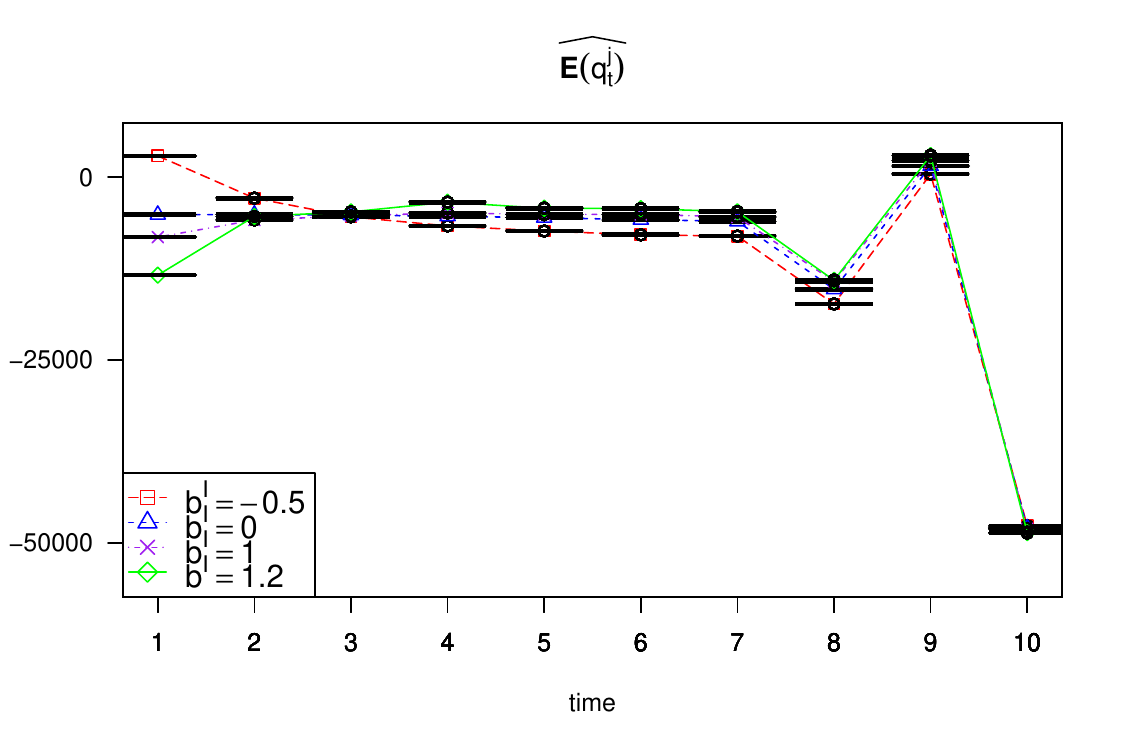}
    \subcaption{Sell-side large trader $j$.}
    \label{Figure_12b}
  \end{minipage}
\caption{Equilibrium execution strategy for $b^{\mathcal{I}} = -0.5, 0, 1, 1.2$ (with $a^{\mathcal{I}} = 1$ and $\sigma^{\mathcal{I}} = 0.01$ fixed).}
\label{Figure_12}
\end{figure}
Figure~\ref{Figure_12} shows the equilibrium execution strategy for $b^{\mathcal{I}} = -0.5, 0, 1$, and $1.2$ when $a^{\mathcal{I}} = 1$ (positive). Since the Markovian environment influences the price dynamics gradually and positively, the sell-side large trader initially executes the buy orders and sells out the assets over the course of the trading window.
\begin{remark}[Oscillation and divergence of Markovian environment] \label{Remark_4.6}
As shown in Figure~\ref{Figure_9}, when $\lvert b^{\mathcal{I}} \rvert > 1$, the Markovian environment occilates and diverges: the stochastic process $\{ \mathcal{I}_{t} \}_{t \in \{1, \ldots, T\}}$ becomes unstationary. The intuitive implication for why the equilibrium execution strategy differs depending on the values of $b^{\mathcal{I}}$ seems unclear from Figures~\ref{Figure_10}--\ref{Figure_12}. In this situation, the equilibrium execution strategy does not exhibit any oscillation until $t = 7$, while it oscillates as maturity approaches. The phenomenon of oscillation is also found in Schied and Zhang~\cite{SAZT19}. 
\end{remark}

We next examine the effect of $a^{\mathcal{I}}$ on the equilibrium execution strategy for negative, zero, and positive values of $b^{\mathcal{I}}$.

\begin{figure}[tbp]
\centering
\begin{minipage}[b]{0.45\linewidth}
 \centering
 \includegraphics[height=5cm, width=7.5cm]{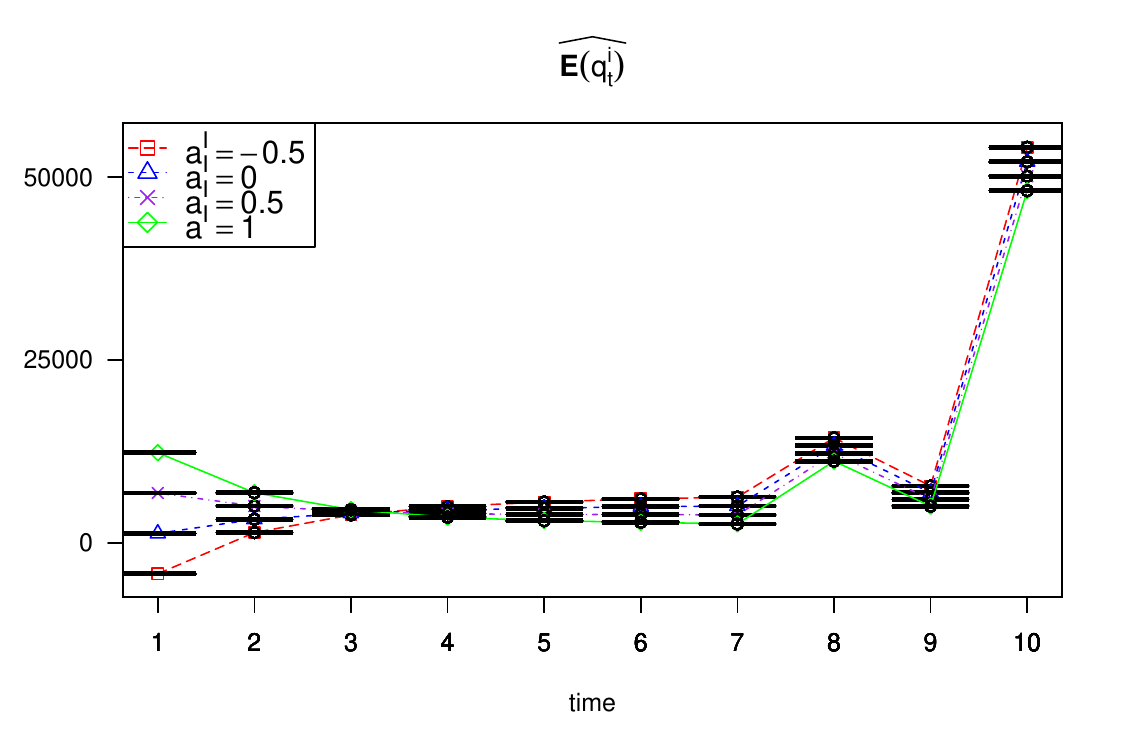}
 \subcaption{Buy-side large trader $i$.}
 \label{Figure_13a}
  \end{minipage}
\begin{minipage}[b]{0.45\linewidth}
    \centering
 \includegraphics[height=5cm, width=7.5cm]{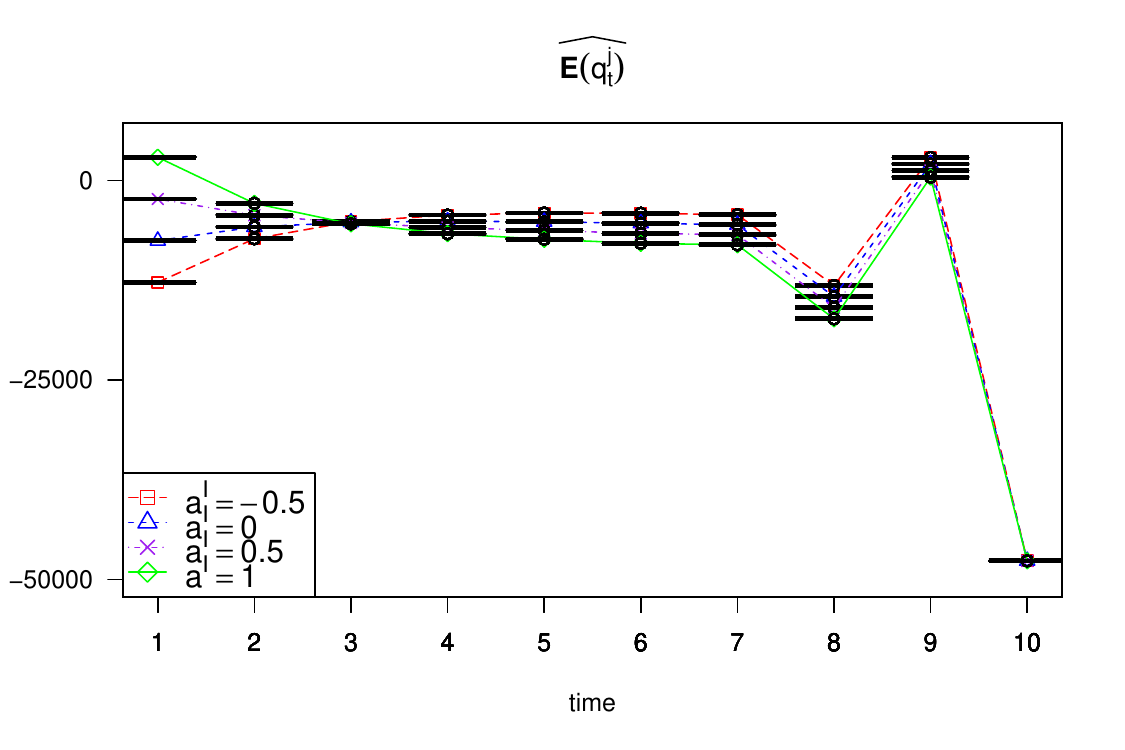}
    \subcaption{Sell-side large trader $j$.}
    \label{Figure_13b}
  \end{minipage}
\caption{Equilibrium execution strategy for $a^{\mathcal{I}} = -0.5, 0, 0.5, 1$ (with $b^{\mathcal{I}} = -0.5$ and $\sigma^{\mathcal{I}} = 0.01$ fixed).}
\label{Figure_13}
\end{figure}
\begin{figure}[tbp]
\centering
\begin{minipage}[b]{0.45\linewidth}
 \centering
 \includegraphics[height=5cm, width=7.5cm]{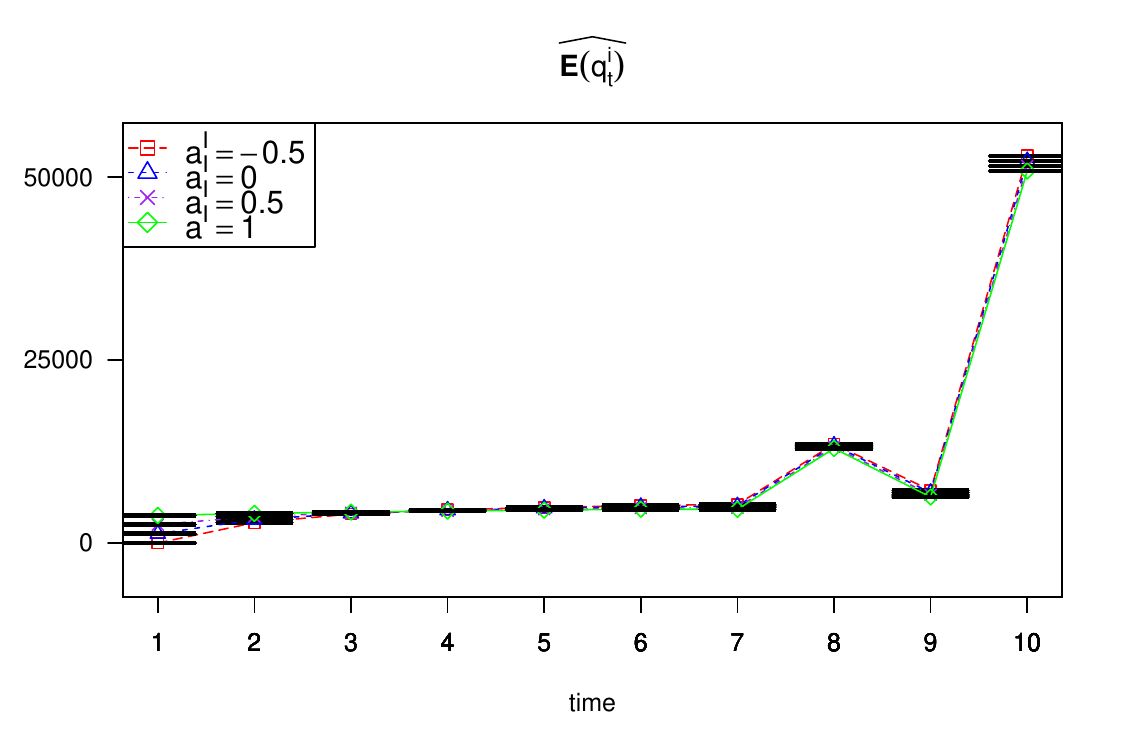}
 \subcaption{Buy-side large trader $i$.}
 \label{Figure_14a}
  \end{minipage}
\begin{minipage}[b]{0.45\linewidth}
    \centering
 \includegraphics[height=5cm, width=7.5cm]{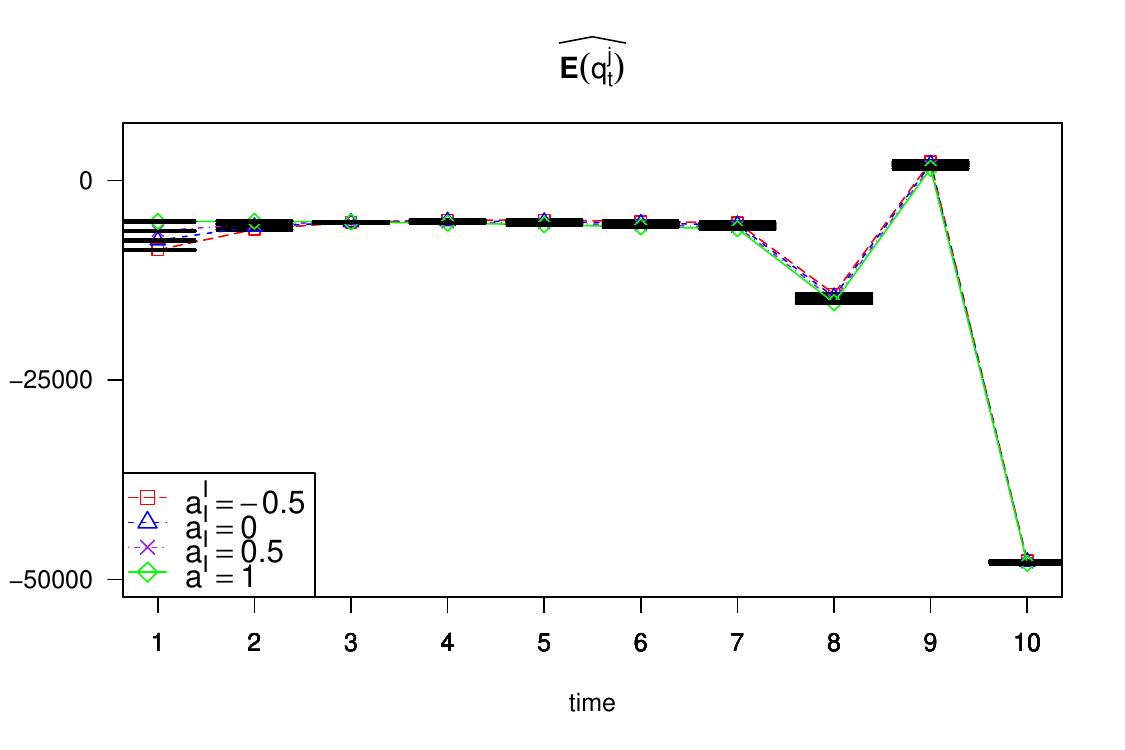}
    \subcaption{Sell-side large trader $j$.}
    \label{Figure_14b}
  \end{minipage}
\caption{Equilibrium execution strategy for $a^{\mathcal{I}} = -0.5, 0, 0.5, 1$ (with $b^{\mathcal{I}} = 0$ and $\sigma^{\mathcal{I}} = 0.01$ fixed).}
\label{Figure_14}
\end{figure}
\begin{figure}[tbp]
\centering
\begin{minipage}[b]{0.45\linewidth}
 \centering
 \includegraphics[height=5cm, width=7.5cm]{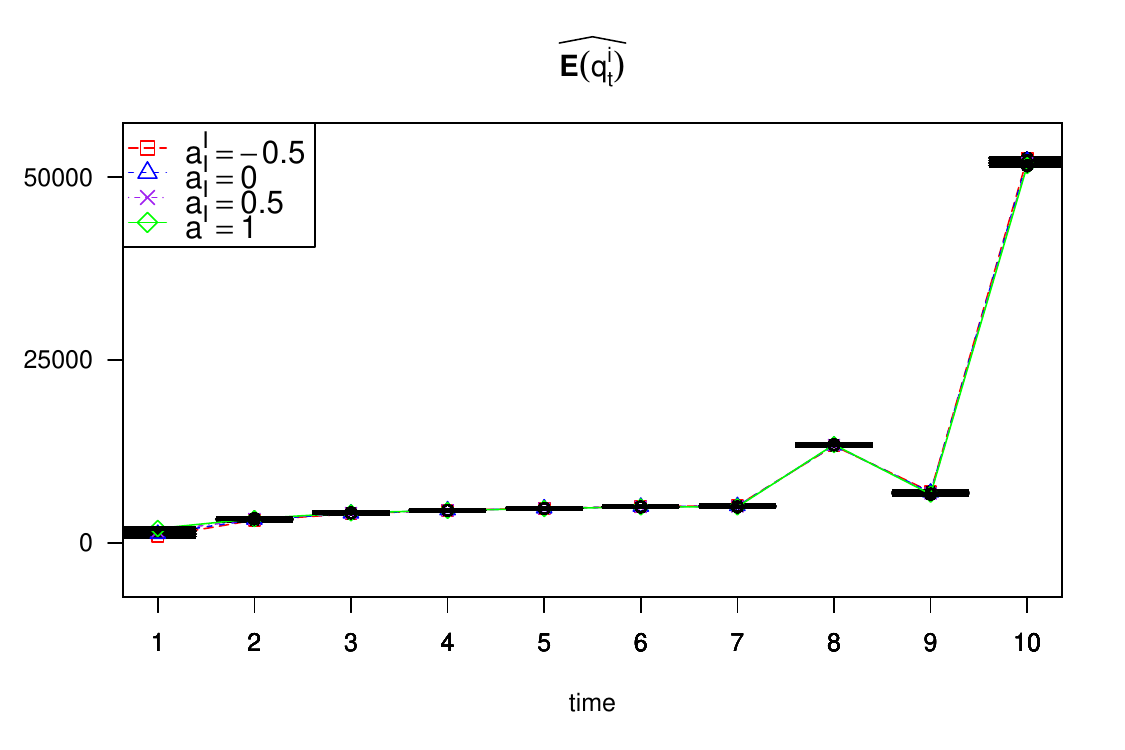}
 \subcaption{Buy-side large trader $i$.}
 \label{Figure_15a}
  \end{minipage}
\begin{minipage}[b]{0.45\linewidth}
    \centering
 \includegraphics[height=5cm, width=7.5cm]{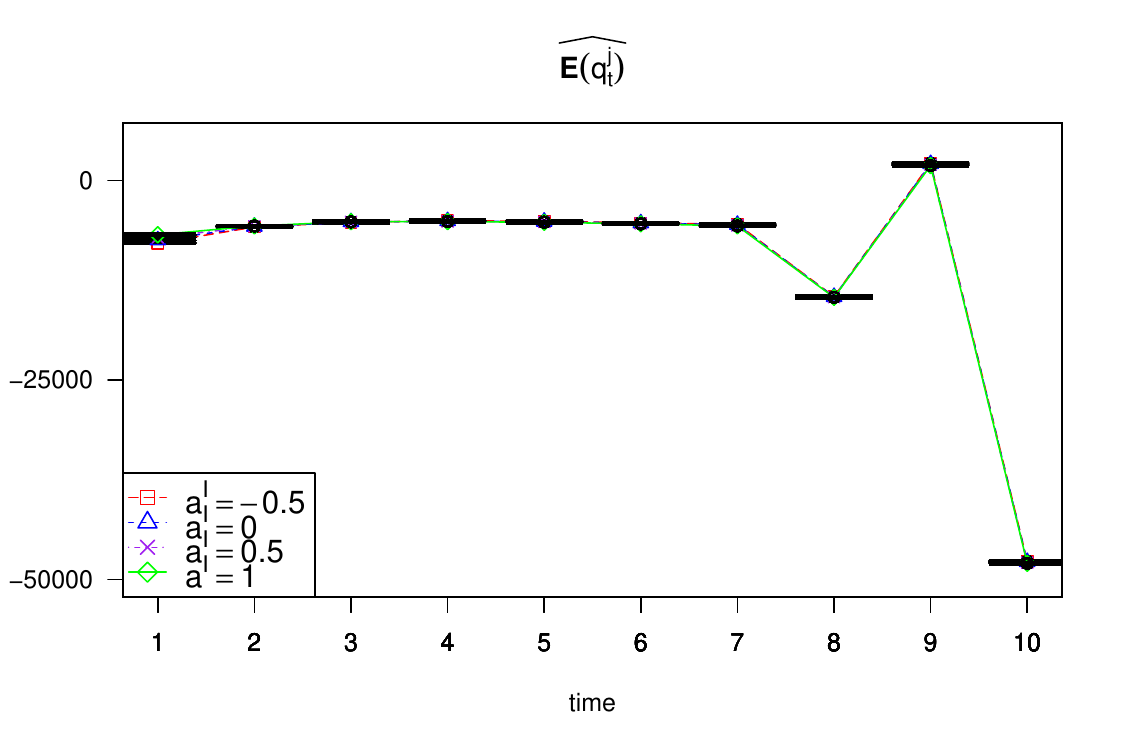}
    \subcaption{Sell-side large trader $j$.}
    \label{Figure_15b}
  \end{minipage}
\caption{Equilibrium execution strategy for different $a^{\mathcal{I}}$ (with $b^{\mathcal{I}} = 0.5$ and $\sigma^{\mathcal{I}} = 0.01$ fixed).}
\label{Figure_15}
\end{figure}
Figures~\ref{Figure_13}--\ref{Figure_15} illustrate the equilibrium execution strategy of large traders $i$ and $j$ ($i \neq j$) with the \textit{opposite} initial holdings ($\mathfrak{Q}^{i} = - \mathfrak{Q}^{i} = 100,000$) for different $a^{\mathcal{I}}$. As $b^{\mathcal{I}}$ increases, the effect of $a^{\mathcal{I}}$ on the equilibrium execution strategy decreases. Larger $b^{\mathcal{I}} \in (-1, 1)$ indicates faster mean-reversion speed and thus less change of $\mathcal{I}_{t}$, which in turn makes the price dynamics stable. Hence, the effect of the Markovian environment becomes little when $b^{\mathcal{I}} \in (-1, 1)$ is large.

\subsubsection{Effect of Mean-reversion speed} \label{Subsubsection_4.3.2}

We move on to the discussion on how the mean reversion speed (i.e., $1 + b^{\mathcal{I}}$) affects the equilibrium execution strategy with the mean reversion level (i.e., $a^{\mathcal{I}}/(1 + b^{\mathcal{I}})$) fixed.
\begin{figure}[tbp]
\centering
\begin{minipage}[b]{0.45\linewidth}
 \centering
 \includegraphics[height=5cm, width=7.5cm]{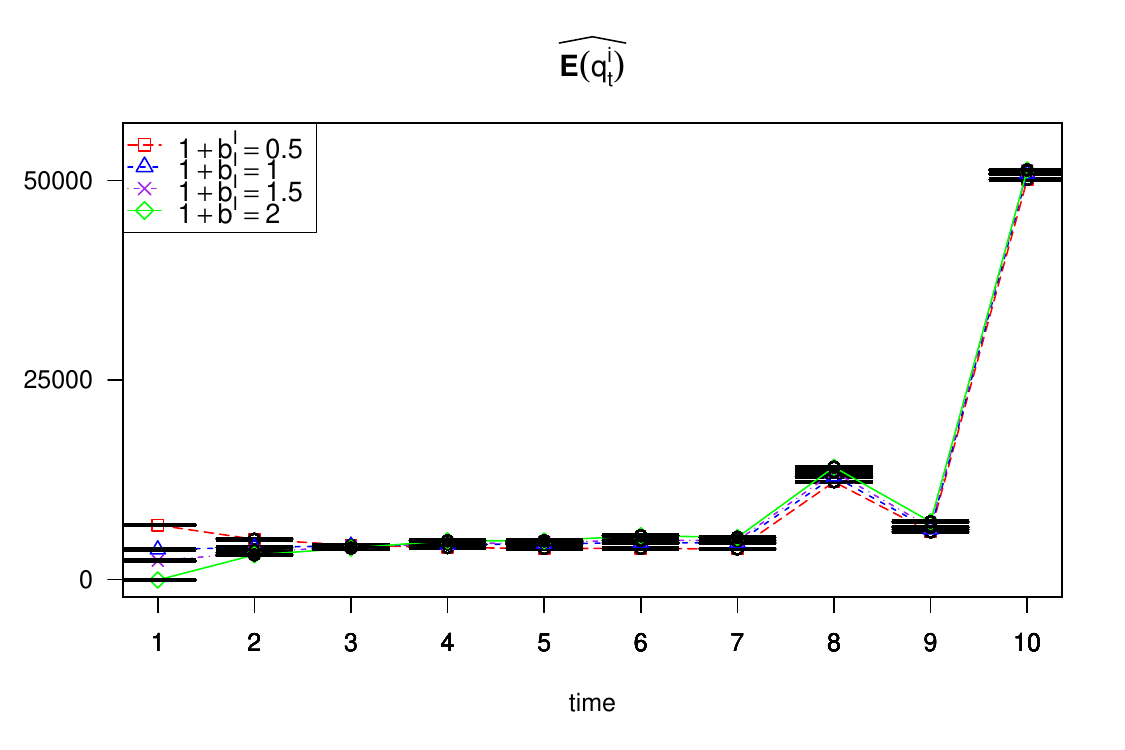}
 \subcaption{Buy-side large trader $i$.}
 \label{Figure_16a}
  \end{minipage}
\begin{minipage}[b]{0.45\linewidth}
    \centering
 \includegraphics[height=5cm, width=7.5cm]{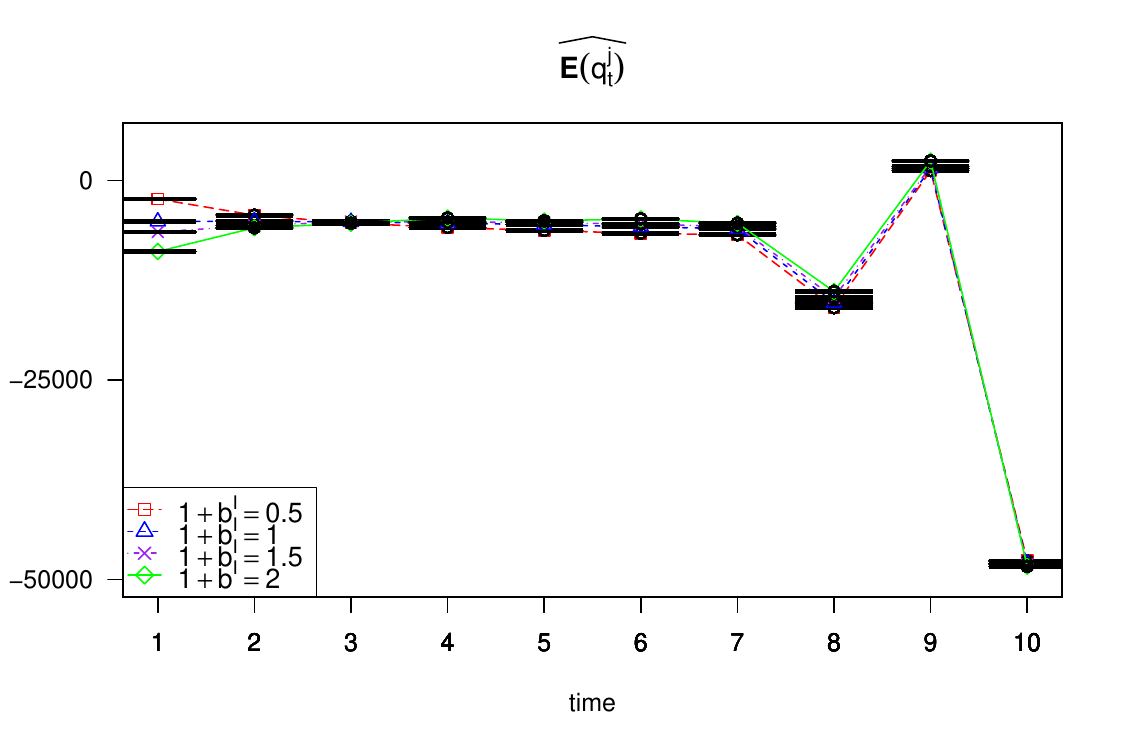}
    \subcaption{Sell-side large trader $j$.}
    \label{Figure_16b}
  \end{minipage}
\caption{Equilibrium execution strategy for ($a^{\mathcal{I}} =$) $1 + b^{\mathcal{I}} = 0.5, 1, 1.5,$ and $2$.}
\label{Figure_16}
\end{figure}
Figure~\ref{Figure_16} illustrates that lower $b^{\mathcal{I}}$ results in a few faster execution at the beginning, although the effect of mean reversion speed seems less unclear than in the cases that we have examined in Section~\ref{Section_4} so far.

\section{Equilibrium execution strategy with target close order}
\label{Section_5}

We finally analyze an execution game model with the closing price $P_{T + 1}$. The time framework is the same as the model in Section~\ref{Section_3}. However, we add an assumption that each large trader can execute his/her remaining execution volume at time $T + 1$ with the closing price. We further assume that the execution at time $T + 1$ imposes each large trader to pay the additive cost $\chi_{T + 1} \in \mathbb{R}_{++}$ per unit. Similar to Theorem~\ref{Theorem_3.1}, we have the following theorem. 

\begin{theorem}
\label{Theorem_5.1}
There exists a Markov perfect equilibrium at which the following properties hold for each large trader $i \in \{1, 2\}$:
\begin{enumerate} 
\item The execution volume at the Markov perfect equilibrium for large trader $i$ at time $t \in \{1, \ldots, T, T + 1\}$, denoted as $\tilde{q}^{i\ast}_{t}$, becomes an \textit{affine function} of the previous Markovian environment, the remaining execution volume of each large trader, and the cumulative residual effect of the past price impact. That is, 
\begin{align}
\tilde{q}^{i\ast}_{t}
= \tilde{a}^{i}_{t} + \tilde{b}^{i}_{t} \overline{Q}^{i}_{t} + \tilde{c}^{i\ast}_{t} \overline{Q}^{j}_{t} + \tilde{d}^{i\ast}_{t} R_{t} + \tilde{e}^{i\ast}_{t} \mathcal{I}_{t - 1}, \quad t = 1, \ldots, T, T + 1,
\end{align}
where $\tilde{a}^{i}_{t}, \tilde{b}^{i}_{t}, \tilde{c}^{i}_{t}, \tilde{d}^{i}_{t}, \tilde{e}^{i}_{t}$ for $t \in \{1, \ldots, T, T + 1\}$ are all deterministic functions of time $t$ which are dependent on the problem parameters and can be computed backwardly in time $t$.
\item The value function $V^{i}_{t} (\pi^{1}_{t}, \pi^{2}_{t}) \big[ \bm{s}_{t} \big]$ for each large trader $i \in \{ 1, 2 \}$ at time $t \in \{ 1, \ldots, T, T + 1 \}$ is represented as a functional form as follows: 
\begin{align}
&V^{i}_{t} (\pi^{1}_{t}, \pi^{2}_{t}) \big[ W^{1}_{t}, W^{2}_{t}, P_{t}, \overline{Q}^{1}_{t}, \overline{Q}^{2}_{t}, R_{t}, \mathcal{I}_{t - 1} \big] \nonumber\\
&= - \exp \Big\{ - \gamma^{i} \Big[ W^{i}_{t} - P_{t} \overline{Q}^{i}_{t} + \tilde{G}^{1i\ast}_{t} \overline{Q}^{i}_{t} + \tilde{G}^{2i\ast}_{t} (\overline{Q}^{i}_{t})^{2} + \tilde{H}^{1i\ast}_{t} \overline{Q}^{i}_{t} R_{t} + \tilde{H}^{2i\ast}_{t} R^{2}_{t} + \tilde{H}^{3i\ast}_{t} R_{t} \nonumber\\
&\quad + \tilde{J}^{1i\ast}_{t} \overline{Q}^{i}_{t} \overline{Q}^{j}_{t} + \tilde{J}^{2i\ast}_{t} \overline{Q}^{j}_{t} R_{t} + \tilde{J}^{3i}_{T} (\overline{Q}^{j}_{t})^{2} + \tilde{J}^{4i\ast}_{t} \overline{Q}^{j}_{t} \nonumber\\
&\quad + \tilde{L}^{1i\ast}_{t} \overline{Q}^{i}_{t} \mathcal{I}_{t - 1} + \tilde{L}^{2i\ast}_{t} R_{t} \mathcal{I}_{t - 1} + \tilde{L}^{3i\ast}_{t} \overline{Q}^{j}_{t} \mathcal{I}_{t - 1} + \tilde{L}^{4i\ast}_{t} \mathcal{I}_{t - 1}^{2} + \tilde{L}^{5i\ast}_{t} \mathcal{I}_{t - 1} + \tilde{Z}^{i\ast}_{t} \Big] \Big\}, 
\end{align} 
where $\tilde{G}^{1i\ast}_{t}, \tilde{G}^{2i\ast}_{t}, \tilde{H}^{1i\ast}_{t}, \tilde{H}^{2i\ast}_{t}, \tilde{H}^{3i\ast}_{t}, \tilde{J}^{1i\ast}_{t}, \tilde{J}^{2i\ast}_{t}, \tilde{J}^{3i\ast}_{t}, \tilde{J}^{4i\ast}_{t}, \tilde{L}^{1i\ast}_{t}, \tilde{L}^{2i\ast}_{t}, \tilde{L}^{3i\ast}_{t}, \tilde{L}^{4i\ast}_{t}, \tilde{L}^{5i\ast}_{t}, \tilde{Z}^{i\ast}_{t}$ for $t \in \{1,$ $\ldots, T, T + 1\}$ are deterministic functions of time $t$ which are dependent on the problem parameters, and can be computed backwardly in time $t$. 
\end{enumerate} 
\end{theorem}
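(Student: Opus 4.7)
The plan is to adapt the backward-induction dynamic-programming argument used for Theorem~\ref{Theorem_3.1} to the modified terminal structure induced by the target close order. The only structural change is at $t = T + 1$: rather than enforcing the hard constraint $\overline{Q}^{i}_{T+1} = 0$ through the $-\infty$ penalty, each large trader now liquidates or acquires the residual volume $\overline{Q}^{i}_{T+1}$ at the closing price $P_{T+1}$ with an additive per-unit cost $\chi_{T+1}$. First I would write the wealth increment at $T + 1$ incorporating both the price-impact contribution $\lambda_{T+1}(\tilde{q}^{1}_{T+1} + \tilde{q}^{2}_{T+1})$ and the cost term $\chi_{T+1}$, under the forced choice $\tilde{q}^{i\ast}_{T+1} = \overline{Q}^{i}_{T+1}$. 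This yields $-\exp\{-\gamma^{i} W^{i}_{T+2}\}$ as a quadratic-exponential function of $\bm{s}_{T+1}$ whose coefficients $\tilde{G}^{1i\ast}_{T+1}, \ldots, \tilde{Z}^{i\ast}_{T+1}$ can be read off by matching terms; the forced execution $\tilde{q}^{i\ast}_{T+1} = \overline{Q}^{i}_{T+1}$ is already of the claimed affine form, with $\tilde{b}^{i}_{T+1} = 1$ and the remaining affine coefficients equal to zero.

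Then I would run the backward induction in essentially the same way as in Appendix~\ref{Section_A}. The induction hypothesis at time $t + 1$ is that
\begin{align*}
V^{i}_{t+1}(\pi^{1\ast}_{t+1}, \pi^{2\ast}_{t+1})[\bm{s}_{t+1}] = -\exp\{-\gamma^{i} \Phi^{i}_{t+1}(\bm{s}_{t+1})\}
\end{align*}
with $\Phi^{i}_{t+1}$ a quadratic functional of the postulated form. Substituting the system dynamics (Eqs.~\eqref{heldquantity}, \eqref{resilience}, \eqref{fundprice}, \eqref{wealthprocess}, and \eqref{STAVwithMD2}) expresses $\Phi^{i}_{t+1}$ as a quadratic in $(\bm{s}_{t}, \tilde{q}^{1}_{t}, \tilde{q}^{2}_{t}, \omega_{t}, \epsilon_{t})$. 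Conditional on $\bm{s}_{t}$ (which includes $\mathcal{I}_{t-1}$), the pair $(\omega_{t}, \epsilon_{t})$ is jointly Gaussian by Assumption~\ref{Assumption_3.1}, so I would compute the conditional expectation of the exponential of this quadratic in closed form by completing the square and applying the Gaussian moment generating function. The outcome is again a quadratic-exponential whose exponent is quadratic in $(\bm{s}_{t}, \tilde{q}^{1}_{t}, \tilde{q}^{2}_{t})$. Each large trader's best-response first-order condition in $\tilde{q}^{i}_{t}$ is then linear in $(\tilde{q}^{i}_{t}, \tilde{q}^{j}_{t})$, and the $2 \times 2$ linear system can be solved explicitly, giving $\tilde{q}^{i\ast}_{t}$ in the claimed affine form and, on substitution, $V^{i}_{t}$ in the claimed quadratic-exponential form, with new coefficients $\tilde{G}^{1i\ast}_{t}, \ldots, \tilde{Z}^{i\ast}_{t}$ obeying explicit recursions in their time-$(t + 1)$ counterparts.

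The main obstacle will be bookkeeping rather than conceptual: one needs to simultaneously track fifteen scalar coefficients per trader through the recursion, and to verify at every step that (i) the quadratic form in $(\tilde{q}^{1}_{t}, \tilde{q}^{2}_{t})$ obtained from the Gaussian conditional expectation is strictly concave in each player's own action, so the first-order condition identifies a genuine maximum, and (ii) the $2 \times 2$ coefficient matrix of the coupled best-response equations is non-singular, so the Markov perfect equilibrium pair is uniquely pinned down at each stage. Both properties should follow from Assumption~\ref{Assumption_3.2} by the same sign analysis used in the proof of Theorem~\ref{Theorem_3.1}: the addition of the $\chi_{T+1}$ penalty only shifts the base-case coefficients (notably through a linear $\chi_{T+1} \overline{Q}^{i}_{T+1}$ contribution absorbed into $\tilde{G}^{1i\ast}_{T+1}$) without altering the quadratic structure that drives the induction, so the concavity and invertibility signs are preserved from Theorem~\ref{Theorem_3.1}.
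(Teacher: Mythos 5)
Your proposal follows exactly the route the paper takes: the paper's proof of Theorem~\ref{Theorem_5.1} is simply the remark that it is ``similar to the proof of Theorem~\ref{Theorem_3.1},'' i.e., the same backward induction with the hard liquidation constraint shifted from $t = T$ to $t = T + 1$ and the $\chi_{T+1}$ cost folded into the terminal-stage coefficients. Your sketch correctly identifies the only structural changes (the forced affine execution $\tilde{q}^{i\ast}_{T+1} = \overline{Q}^{i}_{T+1}$ as the new base case and the genuine optimization now occurring at $t = T$) and is, if anything, more explicit than the paper's own argument.
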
 
\begin{proof}
Similar to the proof of Theorem~\ref{Theorem_3.1}.
\end{proof}

\section{Conclusion} 
\label{Section_6}

This paper examines an execution game for two large traders under a transient price impact model. We incorporate the effect of a Markovian environment, described by an AR(1)-type random sequences, into the price dynamics of one financial asset. The existence of two (multiple) large traders indicates strategic uncertainty, and the Markovian environment indicates environmental uncertainty. We derive an execution strategy and its associated value function at a Markov perfect equilibrium and show that the Markovian environment directly affects the equilibrium execution strategy. Also, our numerical experiments demonstrate that the equilibrium execution strategy reflects various facets observed in financial markets.

One direction of future research is to consider an endogenous model for optimal or equilibrium execution problems. The submission of large orders by large traders may affect the subsequent orders posed by small traders in a financial market. Thus, endogenously incorporating the orders submitted by large traders into the modeling of aggregate orders posed by small traders deserves consideration. This model may enable us to investigate the interaction between large traders and small traders in detail.

\section*{Acknowledgements}

The authors would like to thank Dr. Seiya Kuno for closely examining our preliminary draft and the comments at the 3rd Autumn Meeting of Nippon Finance Association. Also, we would like to thank Dr. Sebastian Jaimungal for his helpful comments and suggestions at the Winter Workshop on Operations Research, Finance, and Mathematics, 2024. 

\section*{Disclosure statement} 

No potential conflict of interest was reported by the authors.

\section*{Funding} 

The authors were supported by Japan Society for the Promotion of Science (JSPS) under KAKENHI [Grant Numbers 21H04399, 21K13325, 22K01573].


\section*{Appendix}
\appendix

\section{Proof of Theorem~\ref{Theorem_3.1}}
\label{Section_A}

In this appendix, $\mathcal{S}^{n} (\mathbb{R})$ denotes the set of all $n \times n$ real-valued, symmetric, and non-singular matrices where $n \in \mathbb{Z}_{++}$. For an $n \times m$ real-valued matrix (or vector) ${\bf A}$ (where $m \in \mathbb{Z}_{++}$), ${\bf A}^{\top}$ represents the transpose of the matrix (or vector). Moreover, if a random variable $\bm{X}$ follows an $n$-dimensional normal distribution with mean $\pmb{\mu}_{\bm{X}} \in \mathbb{R}^{n}$ and covariance matrix $\pmb{\Sigma}_{\bm{X}} \in \mathcal{S}^{n} (\mathbb{R})$, we write $\bm{X} \sim N_{\mathbb{R}^{n}} \left( \pmb{\mu}_{\bm{X}}, \pmb{\Sigma}_{\bm{X}} \right)$. 

\subsection{Preliminaries}
\label{Subsection_A.1}

We first state the following well-known lemma that will be used in the proof of Theorem~\ref{Theorem_3.1}. Although the statement is straightforward, we note the result for this paper to be self-contained.

\begin{lemma}
\label{Lemma_A.1}
Define, for a set of random variables $X_{1}, X_{2}, \ldots, X_{n}$, $\mathbb{E} \left[ X_{i} \right] \coloneqq \mu_{i}$ and $\mathrm{Cov} \left[ X_{i}, X_{j} \right] \coloneqq \sigma_{ij}$. If an $\mathbb{R}^{n}$-valued random variable $\bm{X} \coloneqq (X_{1}, X_{2}, \ldots, X_{n})$ follows a normal distribution with mean $\bm{\mu}_{\bm{X}} \in \mathbb{R}^{n}$ and variance $\pmb{\Sigma}_{\bm{X}} \in \mathcal{S}^{n} (\mathbb{R})$, that is,
\begin{align}
\bm{X} \sim N_{\mathbb{R}^{n}} \left( \bm{\mu}_{\bm{X}}, \pmb{\Sigma}_{\bm{X}} \right),
\end{align}
where
\begin{align}
\bm{\mu}_{\bm{X}}
\coloneqq
\begin{pmatrix}
\mu_{1} \\
\vdots \\
\mu_{n}
\end{pmatrix}
\in \mathbb{R}^{n};
\quad
\pmb{\Sigma}_{\bm{X}}
\coloneqq
\begin{pmatrix}
	\sigma_{11} & \cdots & \sigma_{1n} \\
	\vdots       & \ddots & \vdots \\
	\sigma_{n1} & \cdots & \sigma_{nn}
\end{pmatrix}
\in \mathcal{S}^{n} (\mathbb{R}),
\end{align}
then the following sum of the random variables each of which is multiplied by a constant:
\begin{align}
\bm{c}^{\top} \bm{X} \coloneqq c_{1} X_{1} + c_{2} X_{2} + \cdots + c_{n} X_{n},
\end{align}
where $\bm{c} \coloneqq (c_{1} , \ldots, c_{n})^{\top} \in \mathbb{R}^{n}$, also follows a normal distribution as follows:
\begin{align}
c_{1} X_{1} + c_{2} X_{2} + \cdots + c_{n} X_{n} \sim N_{\mathbb{R}} \bigg( \sum_{i = 1}^{n} c_{i} \mu_{i}, \sum_{i, j = 1}^{n} c_{i} c_{j} \sigma_{ij} \bigg).
\end{align}
\end{lemma}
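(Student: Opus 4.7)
The plan is to prove Lemma A.1 via characteristic functions, which is arguably the shortest rigorous route and handles the normal family cleanly without needing to invert a density. The mean and variance computations are immediate from linearity of expectation and bilinearity of covariance, so the real content is the normality claim, which I will establish by showing that the characteristic function of $Y \coloneqq \bm{c}^{\top} \bm{X}$ matches that of a univariate normal and then invoking the uniqueness theorem for characteristic functions.

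First I would record the elementary identities
\begin{align*}
\mathbb{E}[Y] = \sum_{i=1}^{n} c_{i} \mu_{i} = \bm{c}^{\top} \bm{\mu}_{\bm{X}}, \qquad
\mathrm{Var}[Y] = \sum_{i,j=1}^{n} c_{i} c_{j} \sigma_{ij} = \bm{c}^{\top} \pmb{\Sigma}_{\bm{X}} \bm{c},
\end{align*}
obtained by linearity and the bilinear expansion $\mathrm{Cov}[\sum_{i} c_{i} X_{i}, \sum_{j} c_{j} X_{j}] = \sum_{i,j} c_{i} c_{j} \mathrm{Cov}[X_{i}, X_{j}]$. Next, I would cite (or derive by completing the square in the Gaussian density) the standard characteristic function of the multivariate normal, namely $\varphi_{\bm{X}}(\bm{t}) = \exp\bigl(\mathrm{i}\, \bm{t}^{\top} \bm{\mu}_{\bm{X}} - \tfrac{1}{2} \bm{t}^{\top} \pmb{\Sigma}_{\bm{X}} \bm{t}\bigr)$ for $\bm{t} \in \mathbb{R}^{n}$.

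The key computation is then the one-line identity, for any $s \in \mathbb{R}$,
\begin{align*}
\varphi_{Y}(s)
= \mathbb{E}\bigl[\mathrm{e}^{\mathrm{i} s Y}\bigr]
= \mathbb{E}\bigl[\mathrm{e}^{\mathrm{i} (s \bm{c})^{\top} \bm{X}}\bigr]
= \varphi_{\bm{X}}(s \bm{c})
= \exp\!\Bigl( \mathrm{i}\, s\, \bm{c}^{\top} \bm{\mu}_{\bm{X}} - \tfrac{1}{2}\, s^{2}\, \bm{c}^{\top} \pmb{\Sigma}_{\bm{X}} \bm{c} \Bigr),
\end{align*}
which is precisely the characteristic function of a univariate normal with mean $\bm{c}^{\top} \bm{\mu}_{\bm{X}}$ and variance $\bm{c}^{\top} \pmb{\Sigma}_{\bm{X}} \bm{c}$. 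By the uniqueness of characteristic functions, $Y \sim N_{\mathbb{R}}\bigl(\bm{c}^{\top} \bm{\mu}_{\bm{X}},\, \bm{c}^{\top} \pmb{\Sigma}_{\bm{X}} \bm{c}\bigr)$, which is the claim.

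There is no real obstacle here: the only subtlety worth flagging is the degenerate case $\bm{c}^{\top} \pmb{\Sigma}_{\bm{X}} \bm{c} = 0$, which can only occur for $\bm{c} = \bm{0}$ since $\pmb{\Sigma}_{\bm{X}}$ is assumed non-singular (hence positive definite as a covariance matrix); in that trivial case $Y \equiv 0$ almost surely, consistently interpreted as a point mass, and the characteristic-function argument still goes through. For $\bm{c} \neq \bm{0}$ the resulting normal is non-degenerate, matching the form used in the main proof.
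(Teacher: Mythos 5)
Your characteristic-function argument is correct: the mean and variance computations follow from linearity and bilinearity, the identity $\varphi_{Y}(s) = \varphi_{\bm{X}}(s\bm{c})$ is exactly right, and uniqueness of characteristic functions closes the normality claim; your handling of the degenerate case $\bm{c} = \bm{0}$ is also consistent with the paper's assumption that $\pmb{\Sigma}_{\bm{X}}$ is non-singular. Note, however, that the paper itself offers no proof of Lemma~\ref{Lemma_A.1} --- it is stated as a well-known fact included only for self-containedness --- so your proof supplies an argument the authors omit rather than paralleling one they give.
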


The next lemma is essential for the proof of Theorem~\ref{Theorem_3.1}.
\begin{lemma}[]
\label{Lemma_A.2}
Suppose that $(X, Y)^{\top} \sim N_{\mathbb{R}^{2}} \left( \pmb{\mu}, \pmb{\Sigma} \right)$, where
\begin{align}
\pmb{\mu}
\coloneqq
\begin{pmatrix}
	\mu_{X} \\
	\mu_{Y}
\end{pmatrix}
\in \mathbb{R}^{2};
\quad
\pmb{\Sigma}
\coloneqq
\begin{pmatrix}
	\sigma_{X}^{2} & \rho_{XY} \sigma_{X} \sigma_{Y} \\
	\rho_{XY} \sigma_{X} \sigma_{Y} & \sigma_{Y}^{2}
\end{pmatrix}
\in \mathcal{S}^{2} (\mathbb{R}),
\end{align}
and $\rho_{XY} \in (-1, 1)$ is the correlation coefficient between $X$ and $Y$. 
Let $a \in \mathbb{R}$ be such that 
\begin{align}
\pmb{\Sigma}^{\ast}
\coloneqq
\pmb{\Sigma}^{-1} -
\begin{pmatrix}
	2 a & 0 \\
	0   & 0
\end{pmatrix}
=
\begin{pmatrix}
	\widetilde{\sigma}_{11} - 2 a & \widetilde{\sigma}_{12} \\
	\widetilde{\sigma}_{21} & \widetilde{\sigma}_{22}
\end{pmatrix}
\in \mathcal{S}^{2} (\mathbb{R}),
\end{align}
where
\begin{align}
\pmb{\Sigma}^{-1}
\coloneqq
\begin{pmatrix}
	\widetilde{\sigma}_{11} & \widetilde{\sigma}_{12} \\
	\widetilde{\sigma}_{21} & \widetilde{\sigma}_{22}
\end{pmatrix}.
\end{align}
Then, for any $b, c \in \mathbb{R}$, we have
\begin{align}
\mathbb{E} \Big[ \exp \Big\{ a X^{2} + b X + c Y \Big\} \Big] =
\frac{\sqrt{(\pmb{\Sigma}^{\ast})^{-1}}}{\sqrt{\lvert \pmb{\Sigma} \rvert}}
\exp \Big\{ \frac{1}{2} (\pmb{\mu}^{\ast}) ^{\top} (\pmb{\Sigma}^{\ast})^{-1} \pmb{\mu}^{\ast} - \frac{1}{2} \pmb{\mu}^{\top} \pmb{\Sigma}^{-1} \pmb{\mu} \Big\}, \label{quadmgf}
\end{align}
where $\bm{s} \coloneqq (b, c)^{\top} \in \mathbb{R}^{2}$ and $\pmb{\mu}^{\ast} \coloneqq \pmb{\Sigma}^{-1} \pmb{\mu} + \bm{s}$, provided that $\pmb{\Sigma}^{\ast}$ is positive definite (i.e., $\pmb{\Sigma}^{\ast}$ is invertible).
\end{lemma}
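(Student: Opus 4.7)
The plan is to reduce this identity to a bivariate Gaussian integral and apply the standard ``completing the square'' technique. First I would write the expectation explicitly against the density of $(X,Y)^{\top} \sim N_{\mathbb{R}^{2}}(\pmb{\mu}, \pmb{\Sigma})$:
\begin{align*}
\mathbb{E} \big[ \exp\{ a X^{2} + bX + cY \} \big]
= \frac{1}{2\pi \sqrt{\lvert \pmb{\Sigma} \rvert}} \int_{\mathbb{R}^{2}} \exp\big\{ a x^{2} + bx + cy - \tfrac{1}{2} (\pmb{x} - \pmb{\mu})^{\top} \pmb{\Sigma}^{-1} (\pmb{x} - \pmb{\mu}) \big\} \, \mathrm{d}\pmb{x},
\end{align*}
with $\pmb{x} = (x,y)^{\top}$. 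Observing that $a x^{2} = \tfrac{1}{2} \pmb{x}^{\top} \bigl(\begin{smallmatrix} 2a & 0 \\ 0 & 0 \end{smallmatrix}\bigr) \pmb{x}$ and $bx + cy = \pmb{s}^{\top} \pmb{x}$, I would expand the quadratic form $(\pmb{x}-\pmb{\mu})^{\top} \pmb{\Sigma}^{-1} (\pmb{x}-\pmb{\mu})$ and collect terms. Using the definition of $\pmb{\Sigma}^{\ast}$, the exponent rearranges to
\begin{align*}
-\tfrac{1}{2} \pmb{x}^{\top} \pmb{\Sigma}^{\ast} \pmb{x} + \pmb{x}^{\top} (\pmb{\Sigma}^{-1} \pmb{\mu} + \pmb{s}) - \tfrac{1}{2} \pmb{\mu}^{\top} \pmb{\Sigma}^{-1} \pmb{\mu}
= -\tfrac{1}{2} \pmb{x}^{\top} \pmb{\Sigma}^{\ast} \pmb{x} + \pmb{x}^{\top} \pmb{\mu}^{\ast} - \tfrac{1}{2} \pmb{\mu}^{\top} \pmb{\Sigma}^{-1} \pmb{\mu}.
\end{align*}

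Next I would complete the square in $\pmb{x}$. Since $\pmb{\Sigma}^{\ast}$ is symmetric and assumed positive definite, it is invertible, and the identity
\begin{align*}
-\tfrac{1}{2} \pmb{x}^{\top} \pmb{\Sigma}^{\ast} \pmb{x} + \pmb{x}^{\top} \pmb{\mu}^{\ast}
= -\tfrac{1}{2} \bigl( \pmb{x} - (\pmb{\Sigma}^{\ast})^{-1} \pmb{\mu}^{\ast} \bigr)^{\top} \pmb{\Sigma}^{\ast} \bigl( \pmb{x} - (\pmb{\Sigma}^{\ast})^{-1} \pmb{\mu}^{\ast} \bigr) + \tfrac{1}{2} (\pmb{\mu}^{\ast})^{\top} (\pmb{\Sigma}^{\ast})^{-1} \pmb{\mu}^{\ast}
\end{align*}
holds. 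Pulling the $\pmb{x}$-independent factor out of the integral leaves a shifted bivariate Gaussian integrand whose integral equals $2\pi \sqrt{\lvert (\pmb{\Sigma}^{\ast})^{-1} \rvert}$ by the standard normalization constant of a $N_{\mathbb{R}^{2}}((\pmb{\Sigma}^{\ast})^{-1} \pmb{\mu}^{\ast}, (\pmb{\Sigma}^{\ast})^{-1})$ density. Combining the prefactor $1/(2\pi \sqrt{\lvert \pmb{\Sigma} \rvert})$ with this normalization immediately produces the right-hand side of \eqref{quadmgf} (reading the notation $\sqrt{(\pmb{\Sigma}^{\ast})^{-1}}$ as $\sqrt{\lvert (\pmb{\Sigma}^{\ast})^{-1} \rvert}$).

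The only nontrivial step is ensuring the Gaussian integral converges; this is exactly the positive definiteness hypothesis imposed on $\pmb{\Sigma}^{\ast}$, which forces $a$ to be small enough (in particular, less than $\widetilde{\sigma}_{11}/2$) so that the quadratic coefficient $a$ of $X^{2}$ does not overwhelm the Gaussian tail in $X$. Apart from this integrability check, every remaining step is routine linear algebra: expanding the quadratic form, collecting the linear and constant pieces, and recognizing the resulting Gaussian integral. Thus the proof is essentially a single completing-the-square computation, and I expect no conceptual obstacle beyond carefully tracking the symmetric matrix bookkeeping and the square-root-of-determinant factors.
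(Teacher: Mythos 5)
Your proposal is correct and follows essentially the same route as the paper's own proof: write the expectation as a bivariate Gaussian integral, absorb $aX^{2} + bX + cY$ into the quadratic form to produce $\pmb{\Sigma}^{\ast}$ and $\pmb{\mu}^{\ast}$, complete the square, and read off the normalization constant $2\pi\lvert(\pmb{\Sigma}^{\ast})^{-1}\rvert^{1/2}$ of the shifted Gaussian. Your added remarks on integrability (positive definiteness of $\pmb{\Sigma}^{\ast}$) and on reading $\sqrt{(\pmb{\Sigma}^{\ast})^{-1}}$ as $\sqrt{\lvert(\pmb{\Sigma}^{\ast})^{-1}\rvert}$ are consistent with, and slightly more explicit than, what the paper records.
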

\begin{proof}
Define $\bm{b} \coloneqq (b, c) \in \mathbb{R}^{2}$. Then, 
\begin{align}
&\mathbb{E} \Big[ \exp \Big\{ a X^{2} + b X + c Y \Big\} \Big] \\
&= \int_{\mathbb{R}^{2}} \exp \left\{ \bm{x}^{\top} \begin{pmatrix} a & 0 \\ 0 & 0 \end{pmatrix} \bm{x} + \bm{b}^{\top} \bm{x} \right\} \frac{1}{2 \pi \lvert \pmb{\Sigma} \rvert} \exp \left\{ - \frac{1}{2} ( \bm{x} - \pmb{\mu} )^{\top} \pmb{\Sigma}^{-1} ( \bm{x} - \pmb{\mu} ) \right\} \D \bm{x} \nonumber\\
&= \frac{1}{2 \pi \lvert \pmb{\Sigma} \rvert^{\frac{1}{2}}} 
\int_{\mathbb{R}^{2}} \exp \left\{ \bm{x}^{\top} \begin{pmatrix} a & 0 \\ 0 & 0 \end{pmatrix} \bm{x} + \bm{b}^{\top} \bm{x} - \frac{1}{2} \bm{x}^{\top} \begin{pmatrix} \widetilde{\sigma}_{11} & \widetilde{\sigma}_{12} \\ \widetilde{\sigma}_{21} & \widetilde{\sigma}_{22} \end{pmatrix} \bm{x} + \pmb{\mu}^{\top} \begin{pmatrix} \widetilde{\sigma}_{11} & \widetilde{\sigma}_{12} \\ \widetilde{\sigma}_{21} & \widetilde{\sigma}_{22} \end{pmatrix} \bm{x} \right. \nonumber\\
&\quad \left. - \frac{1}{2} \pmb{\mu}^{\top} \begin{pmatrix} \widetilde{\sigma}_{11} & \widetilde{\sigma}_{11} \\ \widetilde{\sigma}_{21} & \widetilde{\sigma}_{22} \end{pmatrix} \pmb{\mu} \right\} \D \bm{x} \nonumber\\
&= \frac{1}{2 \pi \lvert \pmb{\Sigma} \rvert^{\frac{1}{2}}} 
\int_{\mathbb{R}^{2}} \exp \Big\{ - \frac{1}{2} \bm{x}^{\top} \begin{pmatrix} \widetilde{\sigma}_{11} - 2 a & \widetilde{\sigma}_{12} \\ \widetilde{\sigma}_{21} & \widetilde{\sigma}_{22} \end{pmatrix} \bm{x} + \left[ \pmb{\mu}^{\top} \pmb{\Sigma}^{-1} + \bm{s}^{\top} \right] \bm{x} - \frac{1}{2} \pmb{\mu}^{\top} \pmb{\Sigma}^{-1} \pmb{\mu} \Big\} \D \bm{x} \nonumber\\
&= \frac{2 \pi \lvert ( \pmb{\Sigma}^{\ast} )^{-1} \rvert^{\frac{1}{2}}}{2 \pi \lvert \pmb{\Sigma} \rvert^{\frac{1}{2}}} \cdot 
\underbrace{\frac{1}{2 \pi \lvert (\pmb{\Sigma}^{\ast})^{-1} \rvert^{\frac{1}{2}}} 
\int_{\mathbb{R}^{2}} \exp \Big\{ - \frac{1}{2} \left( \bm{x} - ( \pmb{\Sigma}^{\ast} )^{-1} \pmb{\mu}^{\ast} \right)^{\top} \pmb{\Sigma}^{\ast} \left( \bm{x} - ( \pmb{\Sigma}^{\ast} )^{-1} \pmb{\mu}^{\ast} \right) \Big\} \D \bm{x}}_{= 1} \nonumber\\
&\quad \times \exp \Big\{ \frac{1}{2} (\pmb{\mu}^{\ast}) ^{\top} ( \pmb{\Sigma}^{\ast} )^{-1} \pmb{\mu}^{\ast} - \frac{1}{2} \pmb{\mu}^{\top} \pmb{\Sigma}^{-1} \pmb{\mu} \Big\}. \label{calculationofnorm}
\end{align}
Note that $\D \bm{x} \coloneqq \D x_{1} \D x_{2}$.
\end{proof}

Define
\begin{align}
\begin{pmatrix} \pi_{11} & \pi_{12} \\ \pi_{21} & \pi_{22} \end{pmatrix} \coloneqq \begin{pmatrix} \widetilde{\sigma}_{11} - 2a & \widetilde{\sigma}_{12} \\ \widetilde{\sigma}_{21} & \widetilde{\sigma}_{22} \end{pmatrix}^{-1} = (\pmb{\Sigma}^{\ast})^{-1}.
\end{align}
Then, rearranging Eq.~\eqref{quadmgf} results in
\begin{align}
\mathbb{E} \Big[ \exp \Big\{ a X^{2} + b X + c Y \Big\} \Big]
=
\frac{\sqrt{(\pmb{\Sigma}^{\ast})^{-1}}}{\sqrt{\lvert \pmb{\Sigma} \rvert}}
\exp \left\{ \frac{1}{2} \left[ \pi_{11} b^{2} + \pi_{22} c^{2} + 2 \pi_{12} b c + 2 \mu^{b} b + 2 \mu^{c} c + \mu^{a} \right] \right\},
\end{align}
where
\begin{align}
\mu^{a} &\coloneqq \pmb{\mu}^{\top} \pmb{\Sigma}^{-1} \pmb{\tilde{\Sigma}} \left( \pmb{\Sigma}^{\ast} \right)^{-1} \pmb{\Sigma}^{-1} \pmb{\mu} - \pmb{\mu}^{\top} \pmb{\Sigma}^{-1} \pmb{\mu}; \\
\mu^{b} &\coloneqq \left( \widetilde{\sigma}_{11} \pi_{11} + \widetilde{\sigma}_{12} \pi_{21} \right) \mu_{X} + \left( \widetilde{\sigma}_{21} \pi_{11} + \widetilde{\sigma}_{22} \pi_{21} \right) \mu_{Y}; \\
\mu^{c} &\coloneqq \left( \widetilde{\sigma}_{11} \pi_{12} + \widetilde{\sigma}_{12} \pi_{22} \right) \mu_{X} + \left( \widetilde{\sigma}_{21} \pi_{12} + \widetilde{\sigma}_{22} \pi_{22} \right) \mu_{Y}.
\end{align}
Note that $( \pmb{\Sigma}^{\ast} )^{-1}$ is symmetric.

\subsection{Proof of Theorem~\ref{Theorem_3.1}}
\label{Subsection_A.2}

We derive the execution volume $q^{i\ast}_{t}$ at the Markov perfect equilibrium for each large trader $i \in \{1, 2\}$ at time $t \in \{1, \ldots, T\}$ by backward induction method of dynamic programming from time $t = T$ via the following steps. 

\begin{flushleft}
\textbf{Step 1} From the assumption that each large trader must unwind all the remainder of his/her position at time $t = T$, we have
\end{flushleft}
\begin{align} 
\overline{Q}_{T + 1}^{i} = \overline{Q}^{i}_{T} - q^{i}_{T} = 0,
\end{align}
for $i \in \{1, 2\}$. Thus, $q^{i\ast}_{T} = \overline{Q}^{i}_{T}$ holds for $i \in \{1, 2\}$. Then, for $t = T$, the value function for each large trader $i, j \in \{1, 2\}$ $(i \neq j)$ is
\begin{align} 
V^{i}_{T} (\pi^{1\ast}_{T}, \pi^{2\ast}_{T}) \big[ \bm{s}_{T} \big]
&= \sup_{q^{i}_{T} \in \mathbb{R}} \mathbb{E} \Big[ g^{i}_{T + 1} (\bm{s}_{T + 1}) \Big\vert \bm{s}_{T} \Big] \nonumber\\
&= \sup_{q^{i}_{T} \in \mathbb{R}} \mathbb{E} \Big[ - \exp \left\{ - \gamma^{i} W^{i}_{T + 1} \right\} \Big\vert W^{1}_{T}, W^{2}_{T}, P_{T}, \overline{Q}^{1}_{T}, \overline{Q}^{2}_{T}, R_{T}, \mathcal{I}_{T - 1} \Big] \nonumber\\
&= \sup_{q^{i}_{T} \in \mathbb{R}} \mathbb{E} \Big[ - \exp \left\{ - \gamma^{i} \left[ W^{i}_{T} - \big[ P_{T} + \lambda_{T} (q^{i}_{T} + q^{j}_{T}) \big] q^{i}_{T} \right] \right\} \Big\vert W^{1}_{T}, W^{2}_{T}, P_{T}, \overline{Q}^{1}_{T}, \overline{Q}^{2}_{T}, R_{T}, \mathcal{I}_{T - 1} \Big] \nonumber\\ 
&= - \exp \left\{ - \gamma^{i} \big[ W^{i}_{T} - P_{T} \overline{Q}^{i}_{T} - \lambda_{T} (\overline{Q}^{i}_{T})^{2} - \lambda_{T} \overline{Q}^{i}_{T} \overline{Q}^{j}_{T} \big] \right\} \nonumber\\
&= - \exp \left\{ - \gamma^{i} \big[ W^{i}_{T} - P_{T} \overline{Q}^{i}_{T} + G^{1i}_{T} (\overline{Q}^{i}_{T})^{2} + J^{1i}_{T} \overline{Q}^{i}_{T} \overline{Q}^{j}_{T} \big] \right\},
\end{align}
where
\begin{align}
G^{1i}_{T} &\coloneqq - \lambda_{T} \ (< 0); \\
J^{1i}_{T} &\coloneqq - \lambda_{T} \ (< 0).
\end{align}

\begin{flushleft}
\textbf{Step 2} For $t = T - 1$, the value functions, $V^{i}_{T - 1} (\pi^{1\ast}_{T - 1}, \pi^{2\ast}_{T - 1}) \big[ \bm{s}_{T - 1} \big]$ for each large trader $i \in \{1, 2\}$, satisfy the following functional equations:
\end{flushleft}
\begin{align}
&V^{i}_{T - 1} (\pi^{1\ast}_{T - 1}, \pi^{2\ast}_{T - 1}) \big[ \bm{s}_{T - 1} \big] \nonumber\\
&= \sup_{q^{i}_{T - 1} \in \mathbb{R}} \mathbb{E} \Big[ V^{i}_{T} (\pi^{1 \ast}_{T}, \pi^{2 \ast}_{T})  \big[ \bm{s}_{T} \big] \Big\vert \bm{s}_{T - 1} \Big] \nonumber\\
&= \sup_{q^{i}_{T - 1} \in \mathbb{R}} \mathbb{E} \Big[ - \exp \Big\{ - \gamma^{i} \big[ W^{i}_{T} - P_{T} \overline{Q}^{i}_{T} + G^{1i}_{T} (\overline{Q}^{i}_{T})^{2} + J^{1i}_{T} \overline{Q}^{i}_{T} \overline{Q}^{j}_{T} \big] \Big\} \Big\vert \bm{s}_{T - 1} \Big] \nonumber\\
&= \sup_{q^{i}_{T - 1} \in \mathbb{R}} - \exp \Big\{ - \gamma^{i} \Big[ (- \lambda_{T - 1} + \alpha^{T - 1} \lambda_{T - 1} + G^{1i}_{T}) (q^{i}_{T - 1})^{2} + \big[ (- \alpha^{T - 1} \lambda_{T - 1} - 2 G^{1i}_{T}) \overline{Q}^{i}_{T - 1} \nonumber\\
&\quad + (- J^{1i}_{T}) \overline{Q}^{j}_{T - 1} + \{ - (1 - \mathrm{e}^{-\rho}) \} R_{T - 1} + (- \lambda_{T - 1} + \alpha^{T - 1} \lambda_{T - 1} + J^{1i}_{T}) q^{j}_{T - 1} \big] q^{i}_{T - 1} \nonumber\\
&\quad + W^{i}_{T - 1} - P_{T - 1} \overline{Q}^{i}_{T - 1} + G^{1i}_{T} (\overline{Q}^{i}_{T - 1})^{2} + (1 - \mathrm{e}^{- \rho}) \overline{Q}^{i}_{T - 1} R_{T - 1} + J^{1i}_{T} \overline{Q}^{i}_{T - 1} \overline{Q}^{j}_{T - 1} \nonumber\\
&\quad + (- \alpha^{T - 1} \lambda_{T - 1} - J^{1i}_{T}) \overline{Q}^{i}_{T - 1} q^{j}_{T - 1} \Big] \Big\} \nonumber\\
&\quad \times \mathbb{E} \Big[ \exp \Big\{ \gamma^{i} (\overline{Q}^{i}_{T - 1} - q^{i}_{T - 1}) ( \mathcal{I}_{T - 1} + \epsilon_{T - 1}) \Big\} \Big\vert \bm{s}_{T - 1} \Big],
\label{funcformatt=T-1}
\end{align}
where $\alpha^{T - 1} \coloneqq \alpha_{T - 1} \mathrm{e}^{- \rho} + \beta_{T - 1}$. As for the expectation term in \eqref{funcformatt=T-1}, we have 
\begin{align}
&\mathbb{E} \Big[ \exp \Big\{ \gamma^{i} \big( \overline{Q}^{i}_{T - 1} - q^{i}_{T - 1} \big) \big( \mathcal{I}_{T - 1} + \epsilon_{T - 1} \big) \Big\}  \Big\vert \bm{s}_{T - 1} \Big] \nonumber\\
&= \exp \left\{ \gamma^{i} \big( \overline{Q}^{i}_{T - 1} - q^{i}_{T - 1} \big) \big( a^{\mathcal{I}}_{T - 1} - b^{\mathcal{I}}_{T - 1} \mathcal{I}_{T - 2} 
\big) + \frac{1}{2} (\gamma^{i})^{2} \big( \overline{Q}^{i}_{T - 1} - q^{i}_{T - 1} \big)^{2} \Sigma^{\mathcal{I}, \epsilon}_{T - 1} \right\},
\label{mgfT-1}
\end{align}
where 
\begin{align}
\Sigma^{\mathcal{I}, \epsilon}_{T - 1}
\coloneqq \mathbb{V} \Big[ \mathcal{I}_{T - 1} + \epsilon_{T - 1} \Big\vert \bm{s}_{T - 1} \Big]
= (\sigma^{\mathcal{I}}_{T - 1})^{2} + (\sigma^{\epsilon}_{T - 1})^{2} + 2 \rho^{\mathcal{I}, \epsilon} \sigma^{\mathcal{I}}_{T - 1} \sigma^{\epsilon}_{T - 1},
\end{align}
according to Lemma~\ref{Lemma_A.1}. Thus, substituting Eq.~\eqref{mgfT-1} into Eq.~\eqref{funcformatt=T-1} and rearranging results in
\begin{align}
&V^{i}_{T - 1} (\pi^{1\ast}_{T - 1}, \pi^{2\ast}_{T - 1}) \big[ \bm{s}_{T - 1} \big] \nonumber\\
&= \sup_{q^{i}_{T - 1} \in \mathbb{R}} - \exp \Big\{ - \gamma^{i} \Big[ - A^{i}_{T - 1} (q^{i}_{T - 1})^{2} + \big[ B^{i}_{T - 1} \overline{Q}^{i}_{T - 1} + C^{i}_{T - 1} \overline{Q}^{j}_{T - 1} + D^{i}_{T - 1} R_{T - 1} + F^{i}_{T - 1} \mathcal{I}_{T - 2} + M^{i}_{T - 1} \nonumber\\
&\quad + N^{i}_{T - 1} q^{j}_{T - 1} \big] q^{i}_{T - 1} + W^{i}_{T - 1} - P_{T - 1} \overline{Q}^{i}_{T - 1} + \left( G^{1i}_{T} - \frac{1}{2} \gamma^{i} \Sigma^{\mathcal{I}, \epsilon}_{T - 1} \right) (\overline{Q}^{i}_{T - 1})^{2} + (- a^{\mathcal{I}}_{T - 1} 
) \overline{Q}^{i}_{T - 1} \nonumber\\
&\quad + (1 - \mathrm{e}^{-\rho}) R_{T - 1} \overline{Q}^{i}_{T - 1} + J^{1i}_{T} \overline{Q}^{i}_{T - 1} \overline{Q}^{j}_{T - 1} + b^{\mathcal{I}}_{T - 1} \overline{Q}^{i}_{T - 1} \mathcal{I}_{T - 2} + (- \alpha^{T - 1} \lambda_{T - 1} - J^{1i}_{T}) \overline{Q}^{i}_{T - 1} q^{j}_{T - 1} \Big] \Big\},
\label{vfatT-1befsup}
\end{align}
with the following relations: 
\begin{align}
A^{i}_{T - 1} &\coloneqq \lambda_{T - 1} - \alpha^{T - 1} \lambda_{T - 1} - G^{1i}_{T} + \frac{1}{2} \gamma^{i} \Sigma^{\mathcal{I}, \epsilon}_{T - 1} (> 0); \\
B^{i}_{T - 1} &\coloneqq - \alpha^{T - 1} \lambda_{T - 1} - 2 G^{1i}_{T} + \gamma^{i} \Sigma^{\mathcal{I}, \epsilon}_{T - 1}; \\
C^{i}_{T - 1} &\coloneqq - J^{1i}_{T}; \\
D^{i}_{T - 1} &\coloneqq - (1 - \mathrm{e}^{- \rho}); \\
F^{i}_{T - 1} &\coloneqq - b^{\mathcal{I}}_{T - 1}; \\
M^{i}_{T - 1} &\coloneqq a^{\mathcal{I}}_{T - 1}
; \\
N^{i}_{T - 1} &\coloneqq - \lambda_{T - 1} + \alpha^{T - 1} \lambda_{T - 1} + J^{1i}_{T}.
\end{align}
Note that, for all $B, C \in \mathbb{R}$ and all $\gamma, A \in \mathbb{R}_{++}$, two functions $c_{1} (x) \coloneqq - \exp \{- \gamma x\}$ and $c_{2} (x) \coloneqq - A x^{2} + B x + C$ are strictly concave functions, and therefore so is the composite function of the two, $K (x) \coloneqq c_{1} \circ c_{2} (x) = - \exp \big\{ - \gamma (- A x^{2} + B x + C) \big\}$. Thus, we obtain the execution volume attaining the supremum of Eq.~\eqref{vfatT-1befsup} by completing the square of the following function:
\begin{align}
&K^{i}_{T - 1} (q^{i}_{T - 1})
\coloneqq - A^{i}_{T - 1} (q^{i}_{T - 1})^{2} + \big[ B^{i}_{T - 1} \overline{Q}^{i}_{T - 1} + C^{i}_{T - 1} \overline{Q}^{j}_{T - 1} + D^{i}_{T - 1} R_{T - 1} + F^{i}_{T - 1} \mathcal{I}_{T - 2} + M^{i}_{T - 1} \nonumber\\
&\ + N^{i}_{T - 1} q^{j}_{T - 1} \big] q^{i}_{T - 1} + W^{i}_{T - 1} - P_{T - 1} \overline{Q}^{i}_{T - 1} + \left( G^{1i}_{T - 1} - \frac{1}{2} \gamma^{i} \Sigma^{\mathcal{I}, \epsilon}_{T - 1} \right) (\overline{Q}^{i}_{T - 1})^{2} + (- a^{\mathcal{I}}_{T - 1} 
) \overline{Q}^{i}_{T - 1} \nonumber\\
&\ + (1 - \mathrm{e}^{- \rho}) R_{T - 1} \overline{Q}^{i}_{T - 1} + J^{1i}_{T} \overline{Q}^{i}_{T - 1} \overline{Q}^{j}_{T - 1} + b^{\mathcal{I}}_{T - 1} \overline{Q}^{i}_{T - 1} \mathcal{I}_{t - 2} + (- \alpha^{T - 1} \lambda_{T - 1} - J^{1i}_{T}) \overline{Q}^{i}_{T - 1} q^{j}_{T - 1}.
\end{align}
The best response of large trader $i \in \{1, 2\}$ to the other large trader $j \neq i$, denoted by $BR^{i} (q^{j}_{T - 1})$, then becomes
\begin{align}
BR^{i} (q^{j}_{T - 1}) = \frac{1}{2 A^{i}_{T - 1}} \left( B^{i}_{T - 1} \overline{Q}^{i}_{T - 1} + C^{i}_{T - 1} \overline{Q}^{j}_{T - 1} + D^{i}_{T - 1} R_{T - 1} + F^{i}_{T - 1} \mathcal{I}_{T - 2} + M^{i}_{T - 1} + N^{i}_{T - 1} q^{j}_{T - 1} \right).
\end{align}
Thus, at the Markov perfect equilibrium, we have
\begin{align}
q^{i\ast}_{T - 1} = \frac{1}{2 A^{i}_{T - 1}} \left( B^{i}_{T - 1} \overline{Q}^{i}_{T - 1} + C^{i}_{T - 1} \overline{Q}^{j}_{T - 1} + D^{i}_{T - 1} R_{T - 1} + F^{i}_{T - 1} \mathcal{I}_{T - 2} + M^{i}_{T - 1} + N^{i}_{T - 1} q^{j\ast}_{T - 1} \right); \\
q^{j\ast}_{T - 1} = \frac{1}{2 A^{j}_{T - 1}} \left( B^{j}_{T - 1} \overline{Q}^{j}_{T - 1} + C^{j}_{T - 1} \overline{Q}^{i}_{T - 1} + D^{j}_{T - 1} R_{T - 1} + F^{j}_{T - 1} \mathcal{I}_{T - 2} + M^{j}_{T - 1} + N^{j}_{T - 1} q^{i\ast}_{T - 1} \right).
\end{align}
Solving the above simultaneous equations results in
\begin{align}
q^{i\ast}_{T - 1} 
&= B^{i\ast}_{T - 1} \overline{Q}^{i}_{T - 1} + C^{i\ast}_{T - 1} \overline{Q}^{j}_{T - 1} + D^{i\ast}_{T - 1} R_{T - 1} + F^{i\ast}_{T - 1} \mathcal{I}_{T - 2} + M^{i\ast}_{T - 1} \nonumber\\
&(\eqqcolon a^{i}_{T - 1} + b^{i}_{T - 1} \overline{Q}^{i}_{T - 1} + c^{i}_{T - 1} \overline{Q}^{j}_{T - 1} + d^{i}_{T - 1} R_{T - 1} + e^{i}_{T - 1} \mathcal{I}_{T - 2}),
\end{align}
where
\begin{align}
\zeta^{i}_{T - 1} &\coloneqq 2 A^{i}_{T - 1} - \frac{N^{i}_{T - 1} N^{j}_{T - 1}}{2 A^{j}_{T - 1}}; &
B^{i\ast}_{T - 1} &\coloneqq \frac{1}{\zeta^{i}_{T - 1}} \left( B^{i}_{T - 1} + \frac{N^{i}_{T - 1} C^{j}_{T - 1}}{2 A^{j}_{T - 1}} \right); \nonumber\\
C^{i\ast}_{T - 1} &\coloneqq \frac{1}{\zeta^{i}_{T - 1}} \left( C^{i}_{T - 1} + \frac{N^{i}_{T - 1} B^{j}_{T - 1}}{2 A^{j}_{T - 1}} \right); &
D^{i\ast}_{T - 1} &\coloneqq \frac{1}{\zeta^{i}_{T - 1}} \left( D^{i}_{T - 1} + \frac{N^{i}_{T - 1} D^{j}_{T - 1}}{2 A^{j}_{T - 1}} \right); \nonumber\\
F^{i\ast}_{T - 1} &\coloneqq \frac{1}{\zeta^{i}_{T - 1}} \left( F^{i}_{T - 1} + \frac{N^{i}_{T - 1} F^{j}_{T - 1}}{2 A^{j}_{T - 1}} \right); &
M^{i\ast}_{T - 1} &\coloneqq \frac{1}{\zeta^{i}_{T - 1}} \left( M^{i}_{T - 1} + \frac{N^{i}_{T - 1} M^{j}_{T - 1}}{2 A^{j}_{T - 1}} \right).
\end{align}
for each $i \in \{1, 2\}$. $q^{1\ast}_{T - 1}$ and $q^{2\ast}_{T - 1}$ are equilibrium execution volumes at the Markov perfect equilibrium for time $T - 1$. The value function for each large trader $i \in \{1, 2\}$ at the Markov Perfect equilibrium $(\pi^{1 \ast}, \pi^{2 \ast}) \in \Pi_{M}^{1} \times \Pi_{M}^{2}$ then becomes 
\begin{align}
&V^{i}_{T - 1} (\pi^{1\ast}_{T - 1}, \pi^{2\ast}_{T - 1}) \big[ \bm{s}_{T - 1} \big] \nonumber\\
&= - \exp \Big\{ - \gamma^{i} \Big[ W^{i}_{T - 1} - P_{T - 1} \overline{Q}^{i}_{T - 1} + \left( G^{1i}_{T} - \frac{1}{2} \gamma^{i} \Sigma^{\mathcal{I}, \epsilon}_{T - 1} \right) (\overline{Q}^{i}_{T - 1})^{2} + (- a^{\mathcal{I}}_{T - 1}) \overline{Q}^{i}_{T - 1} \nonumber\\
&\quad + (1 - \mathrm{e}^{- \rho}) R_{T - 1} \overline{Q}^{i}_{T - 1} + J^{1i}_{T} \overline{Q}^{i}_{T - 1} \overline{Q}^{j}_{T - 1} + b^{\mathcal{I}}_{T - 1} \overline{Q}^{i}_{T - 1} \mathcal{I}_{T - 2} + (- \alpha^{T - 1} \lambda_{T - 1} - J^{1i}_{T}) \overline{Q}^{i}_{T - 1} q^{j\ast}_{T - 1} \nonumber\\
&\quad + \frac{1}{4 A^{i}_{T - 1}} (B^{i\ast\ast}_{T - 1} \overline{Q}^{i}_{T - 1} + C^{i\ast\ast}_{T - 1} \overline{Q}^{j}_{T - 1} + D^{i\ast}_{T - 1} R_{t} + F^{i\ast\ast}_{T - 1} \mathcal{I}_{T - 2} + M^{i\ast\ast}_{T - 1})^{2} \Big] \Big\} \nonumber\\
&= - \exp \Big\{ - \gamma^{i} \Big[ W^{i}_{T - 1} - P_{T - 1} \overline{Q}^{i}_{T - 1} + G^{1i}_{T - 1} (\overline{Q}^{i}_{T - 1})^{2} + G^{2i}_{T - 1} \overline{Q}^{i}_{T - 1} + H^{1i}_{T - 1} \overline{Q}^{i}_{T - 1} R_{T - 1} \nonumber\\
&\quad + H^{2i}_{T - 1} R^{2}_{T - 1} + H^{3i}_{T - 1} R_{T - 1} + J^{1i}_{T - 1} \overline{Q}^{i}_{T - 1} \overline{Q}^{j}_{T - 1} + J^{2i}_{T - 1} \overline{Q}^{j}_{T - 1} R_{T - 1} + J^{3i}_{T - 1} (\overline{Q}^{j}_{T - 1})^{2} + J^{4i}_{T - 1} \overline{Q}^{j}_{T - 1} \nonumber\\
&\quad + L^{1i}_{T - 1} \overline{Q}^{i}_{T - 1} \mathcal{I}_{T - 2} + L^{2i}_{T - 1} R_{T - 1} \mathcal{I}_{T - 2} + L^{3i}_{T - 1} \overline{Q}^{j}_{T - 1} \mathcal{I}_{T - 2} + L^{4i}_{T - 1} \mathcal{I}^{2}_{T - 2} + L^{5i}_{T - 1} \mathcal{I}_{T - 2} + Z^{i}_{T - 1} \Big] \Big\},
\end{align} 
where 
\begin{align}
B^{i\ast\ast}_{T - 1} &\coloneqq B^{i}_{T - 1} + N^{i}_{T - 1} C^{j\ast}_{T - 1}; &
C^{i\ast\ast}_{T - 1} &\coloneqq C^{i}_{T - 1} + N^{i}_{T - 1} B^{j\ast}_{T - 1}; \nonumber\\
D^{i\ast\ast}_{T - 1} &\coloneqq D^{i}_{T - 1} + N^{i}_{T - 1} D^{j\ast}_{T - 1}; &
F^{i\ast\ast}_{T - 1} &\coloneqq F^{i}_{T - 1} + N^{i}_{T - 1} F^{j\ast}_{T - 1}; \nonumber\\
M^{i\ast\ast}_{T - 1} &\coloneqq M^{i}_{T - 1} + N^{i}_{T - 1} M^{j\ast}_{T - 1}, &
i, j &= 1, 2, \quad i \neq j,
\end{align}
and
\begin{align}
G^{1i}_{T - 1} &\coloneqq G^{1i}_{T} - \frac{1}{2} \gamma^{i} \Sigma^{\mathcal{I}, \epsilon}_{T - 1} + (- \alpha^{T - 1} \lambda_{T - 1} - J^{1i}_{T}) C^{j\ast}_{T - 1} + \frac{(B^{i\ast\ast}_{T - 1})^{2}}{4 A^{i}_{T - 1}}; \nonumber\\
G^{2i}_{T - 1} &\coloneqq - a^{\mathcal{I}}_{T - 1} + ( - \alpha^{T - 1} \lambda_{T - 1} - J^{1i}_{T} ) M^{j\ast}_{T - 1} + \frac{B^{i\ast\ast}_{T - 1} M^{i\ast\ast}_{T - 1}}{2 A^{i}_{T - 1}}; \nonumber\\
H^{1i}_{T - 1} &\coloneqq (1 - \mathrm{e}^{-\rho}) + (- \alpha^{T - 1} \lambda_{T - 1} - J^{1i}_{T}) D^{j\ast}_{T - 1} + \frac{B^{i\ast\ast}_{T - 1}D^{i\ast\ast}_{T - 1}}{2 A^{i}_{T - 1}}; \nonumber\\
H^{2i}_{T - 1} &\coloneqq \frac{(D^{i\ast\ast}_{T - 1})^{2}}{4 A^{i}_{T - 1}}; \quad
H^{3i}_{T - 1} \coloneqq \frac{D^{i\ast\ast}_{T - 1} M^{i\ast\ast}_{T - 1}}{2 A^{i}_{T - 1}}; \quad
J^{1i}_{T - 1} \coloneqq J^{1i}_{T} + (- \alpha^{T - 1} \lambda_{T - 1} - J^{1i}_{T}) B^{j\ast}_{T - 1} + \frac{B^{i\ast\ast}_{T - 1} C^{i\ast\ast}_{T - 1}}{2 A^{i}_{T - 1}}; \nonumber\\
J^{2i}_{T - 1} &\coloneqq \frac{C^{i\ast\ast}_{T - 1} D^{i\ast\ast}_{T - 1}}{2 A^{i}_{T - 1}}; \quad
J^{3i}_{T - 1} \coloneqq \frac{(C^{i\ast\ast}_{T - 1})^{2}}{4 A^{i}_{T - 1}}; \quad
J^{4i}_{T - 1} \coloneqq \frac{C^{i\ast\ast}_{T - 1} M^{i\ast\ast}_{T - 1}}{2 A^{i}_{T - 1}}; \nonumber\\
L^{1i}_{T - 1} &\coloneqq b^{\mathcal{I}}_{T - 1} + (- \alpha^{T - 1} \lambda_{T - 1} - J^{1i}_{T}) F^{j\ast}_{T - 1} + \frac{B^{i\ast\ast}_{T - 1} F^{i\ast\ast}_{T - 1}}{2 A^{i}_{T - 1}}; \quad
L^{2i}_{T - 1} \coloneqq \frac{D^{i\ast\ast}_{T - 1} F^{i\ast\ast}_{T - 1}}{2 A^{i}_{T - 1}}; \nonumber\\
L^{3i}_{T - 1} &\coloneqq \frac{C^{i\ast\ast}_{T - 1} F^{i\ast\ast}_{T - 1}}{2 A^{i}_{T - 1}}; \quad
L^{4i}_{T - 1} \coloneqq \frac{(F^{i\ast\ast}_{T - 1})^{2}}{4 A^{i}_{T - 1}}; \quad
L^{5i}_{T - 1} \coloneqq \frac{F^{i\ast\ast}_{T - 1} M^{i\ast\ast}_{T - 1}}{2 A^{i}_{T - 1}}; \quad
Z^{i}_{T - 1} \coloneqq \frac{(M^{i\ast\ast}_{T - 1})^{2}}{4 A^{i}_{T - 1}}.
\end{align}

\begin{flushleft}
\textbf{Step 3} From the above results, for $t + 1 \in \{T - 1, \ldots, 2\}$, we can assume that the optimal value function at time $t + 1$ has the following functional form: 
\end{flushleft}
\begin{align}
&V^{i}_{t + 1} (\pi^{1\ast}_{t + 1}, \pi^{2\ast}_{t + 1}) \big[ \bm{s}_{t + 1} \big] \nonumber\\
&= - \exp \Big\{ - \gamma^{i} \Big[ W^{i}_{t + 1} - P_{t + 1} \overline{Q}^{i}_{t + 1} + G^{1i}_{t + 1} (\overline{Q}^{i}_{t + 1})^{2} + G^{2i}_{t + 1} \overline{Q}^{i}_{t + 1} + H^{1i}_{t + 1} \overline{Q}^{i}_{t + 1} R_{t + 1} \nonumber\\
&\quad + H^{2i}_{t + 1} R_{t + 1}^{2} + H^{3i}_{t + 1} R_{t + 1} + J^{1i}_{t + 1} \overline{Q}^{i}_{t + 1} \overline{Q}^{j}_{t + 1} + J^{2i}_{t + 1} R_{t + 1} \overline{Q}^{j}_{t + 1} + J^{3i}_{t + 1} (\overline{Q}^{j}_{t + 1})^{2} + J^{4i}_{t + 1} \overline{Q}^{j}_{t + 1} \nonumber\\
&\quad + L^{1i}_{t + 1} \overline{Q}^{i}_{t + 1} \mathcal{I}_{t} + L^{2i}_{t + 1} R_{t + 1} \mathcal{I}_{t} + L^{3i}_{t + 1} \overline{Q}^{j}_{t + 1} \mathcal{I}_{t} + L^{4i}_{t + 1} \mathcal{I}_{t}^{2} + L^{5i}_{t + 1} \mathcal{I}_{t} + Z^{i}_{t + 1} \Big] \Big\}.
\end{align}
Then, at time $t$, we have
\begin{align}
&V^{i}_{t} (\pi^{1\ast}_{t}, \pi^{2\ast}_{t}) \big[ \bm{s}_{t} \big] \nonumber\\
&= \sup_{q^{i}_{t} \in \mathbb{R}} \mathbb{E} \Big[ V^{i}_{t + 1} (\pi^{1\ast}_{t + 1}, \pi^{2\ast}_{t + 1}) \big[ \bm{s}_{t + 1} \big] \Big\vert \bm{s}_{t} \Big] \nonumber\\
&= \sup_{q^{i}_{t} \in \mathbb{R}} - \exp \Big\{ - \gamma^{i} \Big[ - \left\{ (1 - \alpha^{t}) \lambda_{t} - G^{1i}_{t + 1} + \alpha_{t} \lambda_{t} \mathrm{e}^{- \rho} H^{1i}_{t + 1} + \alpha^{2}_{t} \lambda^{2}_{t} \mathrm{e}^{- 2 \rho} H^{2i}_{t + 1} \right\} (q^{i}_{t})^{2} \nonumber\\
&\quad + \big[ (- \alpha^{t} \lambda_{t} - 2 G^{1i}_{t + 1} + \alpha_{t} \lambda_{t} \mathrm{e}^{- \rho} H^{1i}_{t + 1}) \overline{Q}^{i}_{t} + \{ - (1 - \mathrm{e}^{- \rho}) - \mathrm{e}^{- \rho} H^{1i}_{t + 1}  + 2 \alpha_{t} \lambda_{t} \mathrm{e}^{- 2 \rho} H^{2i}_{t + 1} \} R_{t} \nonumber\\
&\quad + \big( - J^{1i}_{t + 1} + \alpha_{t} \lambda_{t} \mathrm{e}^{- \rho} J^{2i}_{t + 1} \big) \overline{Q}^{j}_{t} + \big\{ - (1 - \alpha^{t}) \lambda_{t} - \alpha_{t} \lambda_{t} \mathrm{e}^{- \rho} H^{1i}_{t + 1} + 2 \alpha^{2}_{t} \lambda^{2}_{t} \mathrm{e}^{- 2 \rho} H^{2i}_{t + 1} + J^{1i}_{t + 1} \nonumber\\
&\quad - \alpha_{t} \lambda_{t} \mathrm{e}^{- \rho} J^{2i}_{t + 1} \big\} q^{j}_{t} + (- G^{2i}_{t + 1} + \alpha_{t} \lambda_{t} \mathrm{e}^{- \rho} H _{t + 1}^{3i}) \big] q^{i}_{t} \nonumber\\
&\quad + W^{i}_{t} - P_{t} \overline{Q}^{i}_{t} + G^{1i}_{t + 1} (\overline{Q}^{i}_{t})^{2} + G^{2i}_{t + 1} \overline{Q}^{i}_{t} + \{(1 - \mathrm{e}^{- \rho}) + \mathrm{e}^{- \rho} H^{1i}_{t + 1}\} \overline{Q}^{i}_{t} R_{t} + \mathrm{e}^{- 2 \rho} H^{2i}_{t + 1} R^{2}_{t} + \mathrm{e}^{- \rho} H^{3i}_{t + 1} R_{t} \nonumber\\
&\quad + J^{1i}_{t + 1} \overline{Q}^{i}_{t} \overline{Q}^{j}_{t} + \mathrm{e}^{- \rho} J^{2i}_{t + 1} R_{t} \overline{Q}^{j}_{t} + J^{3i}_{t + 1} (\overline{Q}^{j}_{t})^{2} + J^{4i}_{t + 1} \overline{Q}^{j}_{t} + Z^{i}_{t + 1} \nonumber\\
&\quad + (\alpha^{2}_{t} \lambda^{2}_{t} \mathrm{e}^{- 2 \rho} H^{2i}_{t + 1} - \alpha_{t} \lambda_{t} \mathrm{e}^{- \rho} J^{2i}_{t + 1} + J^{3i}_{t + 1}) (q^{j}_{t})^{2} +\big[ (- \alpha^{t} \lambda_{t} + \alpha_{t} \lambda_{t} \mathrm{e}^{- \rho} H_{t + 1}^{1 i} - J^{1i}_{t + 1}) \overline{Q}^{i}_{t} \nonumber\\
&\quad + (2 \alpha_{t} \lambda_{t} \mathrm{e}^{- 2\rho} H^{2i}_{t + 1} - \mathrm{e}^{- \rho} J^{2i}_{t + 1}) R_{t} + (\alpha_{t} \lambda_{t} \mathrm{e}^{- \rho} J^{2i}_{t + 1} - 2 J_{t + 1}^{3 i}) \overline{Q}^{j}_{t} + (\alpha_{t} \lambda_{t} \mathrm{e}^{- \rho} H_{t + 1}^{3 i} - J_{t + 1}^{4 i}) \big] q^{j}_{t} \Big] \Big\} \nonumber\\
&\times \mathbb{E} \Big[ \exp \Big\{ - \gamma^{i} \Big[ L_{t + 1}^{4i} \mathcal{I}^{2}_{t} + \big[ (1 - L^{1i}_{t + 1} + \alpha_{t} \lambda_{t} \mathrm{e}^{- \rho}L^{2i}_{t + 1}) q^{i}_{t} + (- 1 + L^{1i}_{t + 1}) \overline{Q}^{i}_{t} + \mathrm{e}^{- \rho} L^{2i}_{t + 1} R_{t} + L_{t + 1}^{3i} \overline{Q}^{j}_{t} \nonumber\\
&\quad + L_{t + 1}^{5i} + (\alpha_{t} \lambda_{t} \mathrm{e}^{- \rho} L^{2i}_{t + 1} - L_{t + 1}^{3i}) q^{j}_{t} \big] \mathcal{I}_{t} - (\overline{Q}^{i}_{t} - q^{i}_{t}) \epsilon_{t} \Big\} \Big\vert \bm{s}_{t} \Big], \label{Vattwithexpect}
\end{align}
where $\alpha^{t} \coloneqq \alpha_{t} \mathrm{e}^{-\rho} + \beta_{t}$.

Define
\begin{align}
\theta^{i}_{t} &\coloneqq 1 - L^{1i}_{t + 1} + \alpha_{t} \lambda_{t} \mathrm{e}^{- \rho} L^{2i}_{t + 1}; \\
\delta^{i}_{t} &\coloneqq -1 + L^{1i}_{t + 1}; \\
\phi^{i}_{t} &\coloneqq \alpha_{t} \lambda_{t} \mathrm{e}^{- \rho} L^{2i}_{t + 1} - L_{t + 1}^{3i}.
\end{align}
Then, let
\begin{align}
a &\coloneqq - \gamma^{i} L_{t + 1}^{4i}; \\
b &\coloneqq - \gamma^{i} (\theta^{i}_{t} q^{i}_{t} + \delta^{i}_{t} \overline{Q}^{i}_{t} + \mathrm{e}^{- \rho} L^{2i}_{t + 1} R_{t} + L_{t + 1}^{3i} \overline{Q}^{j}_{t} + L_{t + 1}^{5i} + \phi^{i}_{t} q^{j}_{t}); \\
c &\coloneqq \gamma^{i} (\overline{Q}^{i}_{t} - q^{i}_{t})
\end{align}
for the paremeters in Lemma~\ref{Lemma_A.2}. By using the lemma and rearranging Eq.~\eqref{Vattwithexpect}, we have
\begin{align}
&V^{i}_{t} (\pi^{1\ast}_{t}, \pi^{2\ast}_{t}) \big[ \bm{s}_{t} \big] \nonumber\\
&= \sup_{q^{i}_{t} \in \mathbb{R}} - \exp \Big\{ - \gamma^{i} \Big[ - A^{i}_{t} (q^{i}_{t})^{2} + \big[ B^{i}_{t} \overline{Q}^{i}_{t} + C^{i}_{t} \overline{Q}^{j}_{t} + D^{i}_{t} R_{t} + F^{i}_{t} \mathcal{I}_{t - 1} + M^{i}_{t} + N^{i}_{t} q^{j}_{t} \big] q^{i}_{t} + W^{i}_{t} - P_{t} \overline{Q}^{i}_{t} \nonumber\\
&\quad + \left\{ G^{1i}_{t + 1} - \frac{1}{2} \gamma^{i} (\delta^{i}_{t})^{2} \pi^{i, 11}_{t} - \frac{1}{2} \gamma^{i} \pi^{i, 22}_{t} + \gamma^{i} \pi^{i, 12}_{t} \delta^{i}_{t} \right\} (\overline{Q}^{i}_{t})^{2} \nonumber\\
&\quad + \Big\{ G^{2i}_{t + 1} - \gamma^{i} \delta^{i}_{t} L_{t + 1}^{5i} \pi^{i, 11}_{t} + \gamma^{i} \pi^{i, 12}_{t} L_{t + 1}^{5i} + \left( \widetilde{\sigma}^{11}_{t} \pi^{i, 11}_{t} + \widetilde{\sigma}^{12}_{t} \pi^{i, 21}_{t} \right) a^{\mathcal{I}}_{t} \delta^{i}_{t} - \left( \widetilde{\sigma}^{11}_{t} \pi^{i, 12}_{t} + \widetilde{\sigma}^{12}_{t} \pi^{i, 22}_{t} \right) a^{\mathcal{I}}_{t} \Big\} \overline{Q}^{i}_{t}, \nonumber\\
&\quad + \left\{ (1 - \mathrm{e}^{- \rho}) + \mathrm{e}^{- \rho} H^{1i}_{t + 1} - \gamma^{i} \delta^{i}_{t} \mathrm{e}^{- \rho} L^{2i}_{t + 1} \pi^{i, 11}_{t} + \gamma^{i} \pi^{i, 12}_{t} \mathrm{e}^{- \rho} L^{2i}_{t + 1} \right\} \overline{Q}^{i}_{t} R_{t} \nonumber\\
&\quad + \left\{ \mathrm{e}^{- 2 \rho} H^{2i}_{t + 1} - \frac{1}{2} \gamma^{i} \mathrm{e}^{- 2 \rho} (L^{2i}_{t + 1})^{2} \pi^{i, 11}_{t} \right\} R_{t}^{2} \nonumber\\
&\quad + \left\{ \mathrm{e}^{- \rho} H^{3i}_{t + 1} - \gamma^{i} \mathrm{e}^{- \rho} L^{2i}_{t + 1} L_{t + 1}^{5i} \pi^{i, 11}_{t} + \left( \widetilde{\sigma}^{11}_{t} \pi^{i, 11}_{t} + \widetilde{\sigma}^{12}_{t} \pi^{i, 21}_{t} \right) a^{\mathcal{I}}_{t} \mathrm{e}^{- \rho} L^{2i}_{t + 1} \right\} R_{t} \nonumber\\
&\quad + \left\{ J^{1i}_{t + 1} - \gamma^{i} \delta^{i}_{t} L_{t + 1}^{3i} \pi^{i, 11}_{t} + \gamma^{i} \pi^{i, 12}_{t} L_{t + 1}^{3i} \right\} \overline{Q}^{i}_{t} \overline{Q}^{j}_{t} 
+ \left\{ \mathrm{e}^{- \rho} J^{2i}_{t + 1} - \gamma^{i} \mathrm{e}^{- \rho} L^{2i}_{t + 1} L_{T + 1}^{3i} \pi^{i, 11}_{t} \right\} R_{t} \overline{Q}^{j}_{t} \nonumber\\
&\quad + \left\{ J^{3i}_{t + 1} - \frac{1}{2} \gamma^{i} (L_{t + 1}^{3i})^{2} \pi^{i, 11}_{t} \right\} (\overline{Q}^{j}_{t})^{2} 
+ \left\{ J^{4i}_{t + 1} - \gamma^{i} L_{t + 1}^{3i} L_{t + 1}^{5i} \pi^{i, 11}_{t} + (\widetilde{\sigma}^{11}_{t} \pi^{i, 11}_{t} + \widetilde{\sigma}^{12}_{t} \pi^{i, 21}_{t}) a^{\mathcal{I}}_{t} L_{t + 1}^{3i} \right\} \overline{Q}^{j}_{t} \nonumber\\
&\quad + \left\{ - \left( \widetilde{\sigma}^{11}_{t} \pi^{i, 11}_{t} + \widetilde{\sigma}^{12}_{t} \pi^{i, 21}_{t} \right) b^{\mathcal{I}}_{t} \delta^{i}_{t} + \left( \widetilde{\sigma}^{11}_{t} \pi^{i, 12}_{t} + \widetilde{\sigma}^{12}_{t} \pi^{i, 22}_{t} \right) b^{\mathcal{I}}_{t} \right\} \overline{Q}^{i}_{t} \mathcal{I}_{t - 1} \nonumber\\
&\quad + \left\{ - \left( \widetilde{\sigma}^{11}_{t} \pi^{i, 11}_{t} + \widetilde{\sigma}^{12}_{t} \pi^{i, 21}_{t} \right) b^{\mathcal{I}}_{t} \mathrm{e}^{- \rho} L^{2i}_{t + 1} \right\} R_{t} \mathcal{I}_{t - 1} 
+ \left\{ - \left( \widetilde{\sigma}^{11}_{t} \pi^{i, 11}_{t} + \widetilde{\sigma}^{12}_{t} \pi^{i, 21}_{t} \right) b^{\mathcal{I}}_{t} L_{t + 1}^{3i} \right\} \overline{Q}^{j}_{t} \mathcal{I}_{t - 1} \nonumber\\
&\quad + \left\{ - \frac{1}{2 \gamma^{i}} \left\{ \left( \widetilde{\sigma}^{11}_{t} \right)^{2} \pi^{i, 11}_{t} + 2 \widetilde{\sigma}^{11}_{t} \widetilde{\sigma}^{12}_{t} \pi^{i, 12}_{t} + \widetilde{\sigma}^{12}_{t} \widetilde{\sigma}^{22}_{t} \pi^{i, 22}_{t} \right\} (b^{\mathcal{I}}_{t})^{2} - \frac{\widetilde{\sigma}^{11}_{t}}{2 \gamma^{i}} (b^{\mathcal{I}}_{t})^{2} \right\} \mathcal{I}_{t - 1}^{2} \nonumber\\
&\quad + \left\{ - \left( \widetilde{\sigma}^{11}_{t} \pi^{i, 11}_{t} + \widetilde{\sigma}^{12}_{t} \pi^{i, 21}_{t} \right) b^{\mathcal{I}}_{t} L_{t + 1}^{5i} + \frac{1}{\gamma^{i}} a^{\mathcal{I}}_{t} b^{\mathcal{I}}_{t} \Big\{ \left( \widetilde{\sigma}^{11}_{t} \right)^{2} \pi^{i, 11}_{t} + 2 \widetilde{\sigma}^{11}_{t} \widetilde{\sigma}^{12}_{t} \pi^{i, 12}_{t} + \widetilde{\sigma}^{12}_{t} \widetilde{\sigma}^{22}_{t} \pi^{i, 22}_{t} \Big\} + \frac{1}{\gamma^{i}} \tilde{\sigma}^{11}_{t} a^{\mathcal{I}}_{t} b^{\mathcal{I}}_{t} \right\} \mathcal{I}_{t - 1} \nonumber\\
&\quad + Z^{i}_{t + 1} + \left( \widetilde{\sigma}^{11}_{t} \pi^{i, 11}_{t} + \widetilde{\sigma}^{12}_{t} \pi^{i, 21}_{t} \right) a^{\mathcal{I}}_{t} L_{t + 1}^{5i} - \frac{1}{2 \gamma^{i}} (a^{\mathcal{I}}_{t})^{2} \Big\{ \left( \widetilde{\sigma}^{11}_{t} \right)^{2} \pi^{i, 11}_{t} + 2 \widetilde{\sigma}^{11}_{t} \widetilde{\sigma}^{12}_{t} \pi^{i, 12}_{t} + \widetilde{\sigma}^{12}_{t} \widetilde{\sigma}^{22}_{t} \pi^{i, 22}_{t} \Big\} \nonumber\\
&\quad - \frac{1}{2 \gamma^{i}} \widetilde{\sigma}^{11}_{t} (a^{\mathcal{I}}_{t})^{2} + x^{i}_{t} + X^{i}_{t} (q^{i}_{t})^{2} + \left[ Y^{1i}_{t} \overline{Q}^{i}_{t} + Y^{2i}_{t} R_{t} + Y^{3i}_{t} \overline{Q}^{j}_{t} + Y^{4i}_{t} \mathcal{I}_{t - 1} + Y_{t - 1}^{5i} \right] q^{j}_{t} \Big] \Big\},
\end{align}
where $\displaystyle{x^{i}_{t} \coloneqq -\frac{1}{\gamma^{i}} \log \frac{\lvert \left( \pmb{\Sigma}^{\ast} \right)^{-1} \rvert^{\frac{1}{2}}}{\lvert \pmb{\Sigma} \rvert^{\frac{1}{2}}}}$, and
\begin{align}
\begin{pmatrix} 
	\widetilde{\sigma}^{11}_{t} & \widetilde{\sigma}^{12}_{t} \\ 
	\widetilde{\sigma}^{21}_{t} & \widetilde{\sigma}^{22}_{t}
\end{pmatrix} 
\coloneqq 
\begin{pmatrix}
	\left( \sigma_{t + 1}^{\mathcal{I}} \right)^{2} & \rho^{\mathcal{I}, \epsilon} \sigma_{t + 1}^{\mathcal{I}} \sigma_{t + 1}^{\epsilon} \\
	\rho^{\mathcal{I}, \epsilon} \sigma_{t + 1}^{\mathcal{I}} \sigma^{\epsilon}_{t} & (\sigma^{\epsilon}_{t + 1})^{2}
\end{pmatrix}^{-1}
, \quad
\begin{pmatrix} 
	\pi^{i, 11}_{t} & \pi^{i, 12}_{t} \\
	\pi^{i, 21}_{t} & \pi^{i, 22}_{t} 
\end{pmatrix} 
\coloneqq 
\begin{pmatrix} 
	\widetilde{\sigma}^{11}_{t} + 2 \gamma^{i} L_{t + 1}^{4i} & \widetilde{\sigma}^{12}_{t} \\ 
	\widetilde{\sigma}^{21}_{t} & \widetilde{\sigma}^{22}_{t} 
\end{pmatrix}^{-1}, 
\end{align}
and
\begin{align}
A^{i}_{t} &\coloneqq \displaystyle{(1 - \alpha^{t}) \lambda_{t} - G^{1i}_{t + 1} + \alpha_{t} \lambda_{t} \mathrm{e}^{- \rho} H^{1i}_{t + 1} - \alpha^{2}_{t} \lambda^{2}_{t} \mathrm{e}^{- 2 \rho} H^{2i}_{t + 1} + \frac{1}{2} \gamma^{i} (\theta^{i}_{t})^{2} \pi^{i, 11}_{t} + \frac{1}{2} \gamma^{i} \pi^{i, 22}_{t} + \gamma^{i} \theta^{i}_{t} \pi^{i, 12}_{t}}; \nonumber\\
B^{i}_{t} &\coloneqq - \alpha^{t} \lambda_{t} - 2 G^{1i}_{t + 1} + \alpha_{t} \lambda_{t} \mathrm{e}^{- \rho} H^{1i}_{t + 1} - \gamma^{i} \theta^{i}_{t} \delta^{i}_{t} \pi^{i, 11}_{t} + \gamma^{i} \pi^{i, 22}_{t} + \gamma^{i} \theta^{i}_{t} \pi^{i, 12}_{t} - \gamma^{i} \delta^{i}_{t} \pi^{i, 12}_{t}; \nonumber\\
C^{i}_{t} &\coloneqq - J^{1i}_{t + 1} + \alpha_{t} \lambda_{t} \mathrm{e}^{- \rho} J^{2i}_{t + 1} - \gamma^{i} \theta^{i}_{t} L_{t + 1}^{3i} \pi^{i, 11}_{t} - \gamma^{i} L_{t + 1}^{3i} \pi^{i, 12}_{t}; \nonumber\\	
D^{i}_{t} &\coloneqq - (1 - \mathrm{e}^{- \rho}) - \mathrm{e}^{- \rho} H^{1i}_{t + 1} + 2 \alpha_{t} \lambda_{t} \mathrm{e}^{- 2\rho} H^{2i}_{t + 1} - \gamma^{i} \theta^{i}_{t} \mathrm{e}^{- \rho} L^{2i}_{t + 1} \pi^{i, 11}_{t} - \gamma^{i} \mathrm{e}^{- \rho} L^{2i}_{t + 1} \pi^{i, 12}_{t}; \nonumber\\
F^{i}_{t} &\coloneqq - \left( \widetilde{\sigma}^{11}_{t} \pi^{i, 11}_{t} + \widetilde{\sigma}^{12}_{t} \pi^{i, 21}_{t} \right) b^{\mathcal{I}}_{t} \theta^{i}_{t} - \left( \widetilde{\sigma}^{11}_{t} \pi^{i, 12}_{t} + \widetilde{\sigma}^{12}_{t} \pi^{i, 22}_{t} \right) b^{\mathcal{I}}_{t}; \nonumber\\
M^{i}_{t} &\coloneqq - G^{2i}_{t + 1} + \alpha_{t} \lambda_{t} \mathrm{e}^{- \rho} H^{3i}_{t + 1} - \gamma^{i} \theta^{i}_{t} L^{5i}_{t + 1} \pi^{i, 11}_{t} - \gamma^{i} L_{t + 1}^{5i} \pi^{i, 12}_{t} + a^{\mathcal{I}}_{t} \theta^{i}_{t} \big( \widetilde{\sigma}^{11}_{t} \pi^{i, 11}_{t} + \widetilde{\sigma}^{12}_{t} \pi^{i, 21}_{t} \big) \nonumber\\
&\qquad + a^{\mathcal{I}}_{t} \left( \widetilde{\sigma}^{11}_{t} \pi^{i, 12}_{t} + \widetilde{\sigma}^{12}_{t} \pi^{i, 22}_{t} \right); \nonumber\\
N^{i}_{t} &\coloneqq - (1 - \alpha^{t}) \lambda_{t} - \alpha_{t} \lambda_{t} \mathrm{e}^{- \rho} H^{1i}_{t + 1} + 2 \alpha^{2}_{t} \lambda^{2}_{t} \mathrm{e}^{- 2 \rho} H^{2i}_{t + 1} + J^{1i}_{t + 1} - \alpha_{t} \lambda_{t} \mathrm{e}^{- \rho} J^{2i}_{t + 1} - \gamma^{i} \theta^{i}_{t} \phi^{i}_{t} \pi^{i, 11}_{t} - \gamma^{i} \phi^{i}_{t} \pi^{i, 12}_{t},
\end{align}
and
\begin{align}
X^{i}_{t} &\coloneqq \alpha^{2}_{t} \lambda^{2}_{t} \mathrm{e}^{- \rho} H^{2i}_{t + 1} - \alpha_{t} \lambda_{t} \mathrm{e}^{- \rho} J^{2i}_{t + 1} + J^{3i}_{t + 1} - \frac{1}{2} \gamma^{i} (\phi^{i}_{t})^{2} \pi^{i, 11}_{t}; \nonumber\\
Y^{1i}_{t} &\coloneqq - \alpha^{t} \lambda_{t} + \alpha_{t} \lambda_{t} \mathrm{e}^{- \rho} H^{1i}_{t + 1} - J^{1i}_{t + 1} - \gamma^{i} \delta^{i}_{t} \phi^{i}_{t} \pi^{i, 11}_{t} + \gamma^{i} \phi^{i}_{t} \pi^{i, 12}_{t}; \nonumber\\
Y^{2i}_{t} &\coloneqq 2 \alpha_{t} \lambda_{t} \mathrm{e}^{- 2\rho} H^{2i}_{t + 1} - \mathrm{e}^{- \rho} J^{2i}_{t + 1} - \gamma^{i} \mathrm{e}^{- \rho} L^{2i}_{t + 1} \phi^{i}_{t} \pi^{i, 11}_{t}; \nonumber\\
Y^{3i}_{t} &\coloneqq \alpha_{t} \lambda_{t} \mathrm{e}^{- \rho} J^{2i}_{t + 1} - 2 J^{3i}_{t + 1} - \gamma^{i} L_{t + 1}^{3i} \phi^{i}_{t} \pi^{i, 11}_{t}; \nonumber\\
Y^{4i}_{t} &\coloneqq - \left( \widetilde{\sigma}^{11}_{t} \pi^{i, 11}_{t} + \widetilde{\sigma}^{12}_{t} \pi^{i, 21}_{t} \right) b^{\mathcal{I}}_{t} \phi^{i}_{t}; \nonumber\\
Y^{5i}_{t} &\coloneqq \alpha_{t} \lambda_{t} \mathrm{e}^{- \rho} H^{3i}_{t + 1} - J^{4i}_{t + 1} - \gamma^{i} L_{t + 1}^{5i} \phi^{i}_{t} \pi^{i, 11}_{t} + \left( \widetilde{\sigma}^{11}_{t} \pi^{i, 11}_{t} + \widetilde{\sigma}^{12}_{t} \pi^{i, 21}_{t} \right) a^{\mathcal{I}}_{t} \phi^{i}_{t}.
\end{align}
Then, the best response of large trader $i \in \{1, 2\}$ to the other large trader $j \neq i$ at time $t$, denoted by $BR^{i} (q^{j}_{t})$, becomes  
\begin{align}
BR^{i} (q^{j}_{t}) = \frac{1}{2 A^{i}_{t}} \left( B^{i}_{t} \overline{Q}^{i}_{t} + C^{i}_{t} \overline{Q}^{j}_{t} + D^{i}_{t} R_{t} + F^{i}_{t} \mathcal{I}_{t - 1} + M^{i}_{t} + N^{i}_{t} q^{j}_{t} \right).
\end{align}
Thus, at the Markov perfect equilibrium, we have
\begin{align}
q^{i\ast}_{t} &= \frac{1}{2 A^{i}_{t}} \left( B^{i}_{t} \overline{Q}^{i}_{t} + C^{i}_{t} \overline{Q}^{j}_{t} + D^{i}_{t} R_{t} + F^{i}_{t} \mathcal{I}_{t - 1} + M^{i}_{t} + N^{i}_{t} q^{j\ast}_{t} \right); \\
q^{j\ast}_{t} &= \frac{1}{2 A^{j}_{t}} \left( B^{j}_{t} \overline{Q}^{j}_{t} + C^{j}_{t} \overline{Q}^{i}_{t} + D^{j}_{t} R_{t} + F^{j}_{t} \mathcal{I}_{t - 1} + M^{j}_{t} + N^{j}_{t} q^{i\ast}_{t} \right).
\end{align}
Solving the above simultaneous equations results in
\begin{align}
q^{i\ast}_{t} 
&= B^{i\ast}_{t} \overline{Q}^{i}_{t} + C^{i\ast}_{t} \overline{Q}^{j}_{t} + D^{i\ast}_{t} R_{t} + F^{i\ast}_{t} \mathcal{I}_{t - 1} + M^{i\ast}_{t} \nonumber\\
&(\eqqcolon a^{i}_{t} + b^{i}_{t} \overline{Q}^{i}_{t} + c^{i}_{t} \overline{Q}^{j}_{t} + d^{i}_{t} R_{t} + e^{i}_{t} \mathcal{I}_{t - 1}),
\end{align}
where
\begin{align}
\zeta^{i}_{t} &\coloneqq 2 A^{i}_{t} - \frac{N^{i}_{t} N^{j}_{t}}{2 A^{j}_{t}}; &
B^{i\ast}_{t} &\coloneqq \frac{1}{\zeta^{i}_{t}} \left( B^{i}_{t} + \frac{N^{i}_{t} C^{j}_{t}}{2 A^{j}_{t}} \right); \nonumber\\
C^{i\ast}_{t} &\coloneqq \frac{1}{\zeta^{i}_{t}} \left( C^{i}_{t} + \frac{N^{i}_{t} B^{j}_{t}}{2 A^{j}_{t}} \right); &
D^{i\ast}_{t} &\coloneqq \frac{1}{\zeta^{i}_{t}} \left( D^{i}_{t} + \frac{N^{i}_{t} D^{j}_{t}}{2 A^{j}_{t}} \right); \nonumber\\
F^{i\ast}_{t} &\coloneqq \frac{1}{\zeta^{i}_{t}} \left( F^{i}_{t} + \frac{N^{i}_{t} F^{j}_{t}}{2 A^{j}_{t}} \right); &
M^{i\ast}_{t} &\coloneqq \frac{1}{\zeta^{i}_{t}} \left( M^{i}_{t} + \frac{N^{i}_{t} M^{j}_{t}}{2 A^{j}_{t}} \right),
\end{align}
for each $i \in \{1, 2\}$. $(q^{1\ast}_{t}, q^{2\ast}_{t})$ is the pair of execution volume at the Markov perfect equilibrium $(\pi^{1 \ast}, \pi^{2 \ast}) \in \Pi_{M}^{1} \times \Pi_{M}^{2}$ at time $t \in \{T - 2, \ldots, 1\}$. The value function for large trader $i \in \{1, 2\}$ at the Markov Perfect equilibrium then becomes 
\begin{align}
&V^{i}_{t} (\pi^{1\ast}_{t}, \pi^{2\ast}_{t}) \big[ W^{1}_{t}, W^{2}_{t}, P_{t}, \overline{Q}^{1}_{t}, \overline{Q}^{2}_{t}, R_{t}, \mathcal{I}_{t - 1} \big] \nonumber\\
&= - \exp \Big\{ - \gamma^{i} \Big[ W^{i}_{t} - P_{t} \overline{Q}^{i}_{t} + G^{1i}_{t} (\overline{Q}^{i}_{t})^{2} + G^{2i}_{t} \overline{Q}^{i}_{t} + H^{1i}_{t} \overline{Q}^{i}_{t} R_{t} \nonumber\\
&\quad + H^{2i}_{t} R_{t}^{2} + H^{3i}_{t} R_{t} + J^{1i}_{t} \overline{Q}^{i}_{t} \overline{Q}^{j}_{t} + J^{2i}_{t}  \overline{Q}^{j}_{t} R_{t} + J^{3i}_{t} (\overline{Q}^{j}_{t})^{2} + J^{4i}_{t} \overline{Q}^{j}_{t} \nonumber\\
&\quad + L^{1i}_{t} \overline{Q}^{i}_{t} \mathcal{I}_{t - 1} + L^{2i}_{t} R_{t} \mathcal{I}_{t - 1} + L^{3i}_{t} \overline{Q}^{j}_{t} \mathcal{I}_{t - 1} 
+ L^{4i}_{t} \mathcal{I}_{t - 1}^{2} + L^{5i}_{t} \mathcal{I}_{t - 1} + Z^{i}_{t} \Big] \Big\}, 
\end{align}
where 
\begin{align}
B^{i\ast\ast}_{t} &\coloneqq B^{i}_{t} + N^{i}_{t} C^{j\ast}_{t}; & C^{i\ast\ast}_{t} &\coloneqq C^{i}_{t} + N^{i}_{t} B^{j\ast}_{t}; \nonumber\\
D^{i\ast\ast}_{t} &\coloneqq D^{i}_{t} + N^{i}_{t} D^{j\ast}_{t}; & F^{i\ast\ast}_{t} &\coloneqq F^{i}_{t} + N^{i}_{t} F^{j\ast}_{t}; \nonumber\\
M^{i\ast\ast}_{t} &\coloneqq M^{i}_{t} + N^{i}_{t} M^{j\ast}_{t}, & i, j &= 1, 2, \quad i \neq j,
\end{align}
and
\begin{align} 
G^{1i}_{t} &\coloneqq G^{1i}_{t + 1} - \frac{1}{2} \gamma^{i} \delta^{i}_{t} \pi^{i, 11}_{t} - \frac{1}{2} \gamma^{i} \pi^{i, 22}_{t} + \gamma^{i} \delta^{i}_{t} \pi^{i, 12}_{t} + X^{i}_{t} (C^{j\ast}_{t})^{2} + Y^{1i}_{t} C^{j\ast}_{t} + \frac{(B^{i\ast\ast}_{t})^{2}}{4 A^{i}_{t}}; \nonumber\\ 
G^{2i}_{t} &\coloneqq G^{2i}_{t + 1} - \gamma^{i} \delta^{i}_{t} L_{t + 1}^{5i} \pi^{i, 11}_{t} + \gamma^{i} L_{t + 1}^{5i} \pi^{i, 12}_{t} + \left( \widetilde{\sigma}^{11}_{t} \pi^{i, 11}_{t} + \widetilde{\sigma}^{12}_{t} \pi^{i, 21}_{t} \right) a^{\mathcal{I}}_{t} \delta^{i}_{t} - \left( \widetilde{\sigma}^{11}_{t} \pi^{i, 12}_{t} + \widetilde{\sigma}^{12}_{t} \pi^{i, 22}_{t} \right) a^{\mathcal{I}}_{t} \nonumber\\ 
&\quad + 2 X^{i}_{t} C^{j\ast}_{t} M^{j\ast}_{t} + Y^{5i}_{t} C^{j\ast}_{t} + Y^{1i}_{t} M^{j\ast}_{t} + \frac{B^{i\ast\ast}_{t} M^{i\ast\ast}_{t}}{2 A^{i}_{t}}; \nonumber\\
H^{1i}_{t} &\coloneqq (1 - \mathrm{e}^{- \rho}) + \mathrm{e}^{- \rho} H^{1i}_{t + 1} - \gamma^{i} \delta^{i}_{t} \mathrm{e}^{- \rho} L^{2i}_{t + 1} \pi^{i, 11}_{t} + \gamma^{i} \mathrm{e}^{- \rho} L^{2i}_{t + 1} \pi^{i, 12}_{t} + 2 X^{i}_{t} C^{j\ast}_{t} D^{j\ast}_{t} \nonumber\\
&\quad + Y^{2i}_{t} C^{j\ast}_{t} + Y^{1i}_{t} D^{j\ast}_{t} + \frac{B^{i\ast\ast}_{t} D^{i\ast\ast}_{t}}{2 A^{i}_{t}}; \nonumber\\
H^{2i}_{t} &\coloneqq \mathrm{e}^{- 2 \rho} H^{2i}_{t + 1} - \frac{1}{2} \gamma^{i} \mathrm{e}^{- 2 \rho} (L^{2i}_{t + 1})^{2} \pi^{i, 11}_{t} + X^{i}_{t} (D^{j\ast}_{t})^{2} + Y^{2i}_{t} D^{j\ast}_{t} + \frac{(D^{i\ast\ast}_{t})^{2}}{4 A^{i}_{t}}; \nonumber\\
H^{3i}_{t} &\coloneqq \mathrm{e}^{- \rho} H^{3i}_{t + 1} - \gamma^{i} \mathrm{e}^{- \rho} L^{2i}_{t + 1} L_{t + 1}^{5i} \pi^{i, 11}_{t} + \left( \widetilde{\sigma}^{11}_{t} \pi^{i, 11}_{t} + \widetilde{\sigma}^{12}_{t} \pi^{i, 21}_{t} \right) a^{\mathcal{I}}_{t} \mathrm{e}^{- \rho} L^{2i}_{t + 1} \nonumber\\
&\quad + 2 X^{i}_{t} D^{j\ast}_{t} M^{j\ast}_{t} + Y^{5i}_{t} D^{j\ast}_{t} + Y^{2i}_{t} M^{j\ast}_{t} + \frac{D^{i\ast\ast}_{t} M^{i\ast\ast}_{t}}{2 A^{i}_{t}}; \nonumber\\
J^{1i}_{t} &\coloneqq J^{1i}_{t + 1} - \gamma^{i} \delta^{i}_{t} L_{t + 1}^{3i} \pi^{i, 11}_{t} + \gamma^{i} L_{t + 1}^{3i} \pi^{i, 12}_{t} + 2 X^{i}_{t} B^{j\ast}_{t} C^{j\ast}_{t} + Y^{1i}_{t} B^{j\ast}_{t} + Y^{3i}_{t} C^{j\ast}_{t} + \frac{B^{i\ast\ast}_{t} C^{i\ast\ast}_{t}}{2 A^{i}_{t}}; \nonumber\\
J^{2i}_{t} &\coloneqq \mathrm{e}^{- \rho} J^{2i}_{t + 1} - \gamma^{i} \mathrm{e}^{- \rho} L^{2i}_{t + 1} L_{T + 1}^{3i} \pi^{i, 11}_{t} + 2 X^{i}_{t} B^{j\ast}_{t} D^{j\ast}_{t} + Y^{2i}_{t} B^{j\ast}_{t} + Y^{3i}_{t} D^{j\ast}_{t} + \frac{C^{i\ast\ast}_{t} D^{i\ast\ast}_{t}}{2 A^{i}_{t}}; \nonumber\\ 
J^{3i}_{t} &\coloneqq J^{3i}_{t + 1} - \frac{1}{2} \gamma^{i} (L_{t + 1}^{3i})^{2} \pi^{i, 11}_{t} + X^{i}_{t} (B^{j\ast}_{t})^{2} + Y^{3i}_{t} B^{j\ast}_{t} + \frac{(C^{i\ast\ast}_{t})^{2}}{4 A^{i}_{t}}; \nonumber\\
J^{4i}_{t} &\coloneqq J^{4i}_{t + 1} - \gamma^{i} L_{t + 1}^{3i} L_{t + 1}^{5i} \pi^{i, 11}_{t} + \left( \widetilde{\sigma}^{11}_{t} \pi^{i, 11}_{t} + \widetilde{\sigma}^{12}_{t} \pi^{i, 21}_{t} \right) a^{\mathcal{I}}_{t} L_{t + 1}^{3i} \nonumber\\
&\quad + 2 X^{i}_{t} B^{j\ast}_{t} M^{j\ast}_{t} + Y^{5i}_{t} B^{j\ast}_{t} + Y^{3i}_{t} M^{j\ast}_{t} + \frac{C^{i\ast\ast}_{t} M^{i\ast\ast}_{t}}{2 A^{i}_{t}}; \nonumber\\
L^{1i}_{t} &\coloneqq - \left( \widetilde{\sigma}^{11}_{t} \pi^{i, 11}_{t} + \widetilde{\sigma}^{12}_{t} \pi^{i, 21}_{t} \right) b^{\mathcal{I}}_{t} \delta^{i}_{t} + \left( \widetilde{\sigma}^{11}_{t} \pi^{i, 12}_{t} + \widetilde{\sigma}^{12}_{t} \pi^{i, 22}_{t} \right) b^{\mathcal{I}}_{t} + 2 X^{i}_{t} C^{j\ast}_{t} F^{j\ast}_{t} + Y^{4i}_{t} C^{j\ast}_{t} + Y^{1i}_{t} F^{j\ast}_{t} \nonumber\\
&\quad + \frac{B^{i\ast\ast}_{t} F^{i\ast\ast}_{t}}{2 A^{i}_{t}}; \nonumber\\ 
L^{2i}_{t} &\coloneqq - \left( \widetilde{\sigma}^{11}_{t} \pi^{i, 11}_{t} + \widetilde{\sigma}^{12}_{t} \pi^{i, 21}_{t} \right) b^{\mathcal{I}}_{t} \mathrm{e}^{- \rho} L^{2i}_{t + 1} + 2 X^{i}_{t} D^{j\ast}_{t} F^{j\ast}_{t} + Y^{4i}_{t} D^{j\ast}_{t} + Y^{2i}_{t} F^{j\ast}_{t} + \frac{D^{i\ast\ast}_{t} F^{i\ast\ast}_{t}}{2 A^{i}_{t}}; \nonumber\\ 
L^{3i}_{t} &\coloneqq - \left( \widetilde{\sigma}^{11}_{t} \pi^{i, 11}_{t} + \widetilde{\sigma}^{12}_{t} \pi^{i, 21}_{t} \right) b^{\mathcal{I}}_{t} L^{3i}_{t + 1} + 2 X^{i}_{t} B^{j\ast}_{t} F^{j\ast}_{t} + Y^{4i}_{t} B^{j\ast}_{t} + Y^{3i}_{t} F^{j\ast}_{t} + \frac{C^{i\ast\ast}_{t} F^{i\ast\ast}_{t}}{2 A^{i}_{t}}; \nonumber\\ 
L^{4i}_{t} &\coloneqq - \frac{1}{2 \gamma^{i}} \left\{ \left( \widetilde{\sigma}^{11}_{t} \right)^{2} \pi^{i, 11}_{t} + 2 \widetilde{\sigma}^{11}_{t} \widetilde{\sigma}^{12}_{t} \pi^{i, 12}_{t} + \widetilde{\sigma}^{12}_{t} \widetilde{\sigma}^{22}_{t} \pi^{i, 22}_{t} \right\} (b^{\mathcal{I}}_{t})^{2} - \frac{\widetilde{\sigma}^{11}_{t}}{2 \gamma^{i}} (b^{\mathcal{I}}_{t})^{2} + X^{i}_{t} (F^{j\ast}_{t})^{2} + Y^{4i}_{t} F^{j\ast}_{t} + \frac{(F^{i\ast\ast}_{t})^{2}}{4 A^{i}_{t}}; \nonumber\\ 
L^{5i}_{t} &\coloneqq - \left( \widetilde{\sigma}^{11}_{t} \pi^{i, 11}_{t} + \widetilde{\sigma}^{12}_{t} \pi^{i, 21}_{t} \right) b^{\mathcal{I}}_{t} L^{5i}_{t + 1} + \frac{1}{\gamma^{i}} a^{\mathcal{I}}_{t} b^{\mathcal{I}}_{t} \Big\{ \left( \widetilde{\sigma}^{11}_{t} \right)^{2} \pi^{i, 11}_{t} + 2 \widetilde{\sigma}^{11}_{t} \widetilde{\sigma}^{12}_{t} \pi^{i, 12}_{t} + \widetilde{\sigma}^{12}_{t} \widetilde{\sigma}^{22}_{t} \pi^{i, 22}_{t} \Big\} + \frac{1}{\gamma^{i}} \widetilde{\sigma}^{11}_{t} a^{\mathcal{I}}_{t} b^{\mathcal{I}}_{t} \nonumber\\
&\quad + 2 X^{i}_{t} F^{j\ast}_{t} M^{j\ast}_{t} + Y^{5i}_{t} F^{j\ast}_{t} + Y^{4i}_{t} M^{j\ast}_{t} + \frac{F^{i\ast\ast}_{t} M^{i\ast\ast}_{t}}{2 A^{i}_{t}}; \nonumber\\ 
Z^{i}_{t} &\coloneqq Z^{i}_{t + 1} + \left( \widetilde{\sigma}^{11}_{t} \pi^{i, 11}_{t} + \widetilde{\sigma}^{12}_{t} \pi^{i, 21}_{t} \right) a^{\mathcal{I}}_{t} L_{t + 1}^{5i} - \frac{1}{2 \gamma^{i}} \left( a^{\mathcal{I}}_{t} \right)^{2} \Big\{ \left( \widetilde{\sigma}^{11}_{t} \right)^{2} \pi^{i, 11}_{t} + 2 \widetilde{\sigma}^{11}_{t} \widetilde{\sigma}^{12}_{t} \pi^{i, 12}_{t} + \widetilde{\sigma}^{12}_{t} \widetilde{\sigma}^{22}_{t} \pi^{i, 22}_{t} \Big\} \nonumber\\
&\quad - \frac{1}{2 \gamma^{i}} \widetilde{\sigma}^{11}_{t} \left( a^{\mathcal{I}}_{t} \right)^{2} + x^{i}_{t} + X^{i}_{t} \big( M^{j\ast}_{t} \big)^{2} + Y^{5i}_{t} M^{j\ast}_{t} + \frac{(M^{i\ast\ast}_{t})^{2}}{4 A^{i}_{t}}.
\end{align}
\qed

\end{document}